\newif\ifreport\reporttrue
\newcommand{\ignore}[1]{}
\newtheorem{lemma}{Lemma}
\newtheorem{proposition}{Proposition}
\newtheorem{theorem}{Theorem}
\newtheorem{corollary}{Corollary}
\theoremstyle{definition}
\newtheorem{definition}{Definition}
\begin{document}




\title{Near Delay-Optimal Scheduling of Batch Jobs in Multi-Server  Systems}

\author{~\\Yin Sun, C. Emre Koksal, and Ness B. Shroff \\
~\\
February 2, 2017\\

\thanks{Yin Sun and C. Emre Koksal are with the Department of Electrical and Computer Engineering, the Ohio State University, Columbus, OH. Email: sunyin02@gmail.com, koksal.2@osu.edu.

Ness B. Shroff is with the Departments of Electrical and Computer Engineering and Computer Science and Engineering, the Ohio State University, Columbus, OH. Email: shroff.11@osu.edu.}
}

\maketitle
\begin{abstract}
We study a class of  scheduling problems, 
where each job is divided into a batch of unit-size tasks and these tasks can be executed in parallel on multiple servers with New-Better-than-Used (NBU) service time distributions. While many delay optimality results are available for single-server queueing systems, generalizing these results to the multi-server case has been challenging. 
This motivated us to investigate near delay-optimal scheduling of batch jobs in multi-server queueing systems. 
We consider three low-complexity scheduling policies: the Fewest Unassigned Tasks first (FUT) policy, the Earliest Due Date first (EDD) policy, and the First-Come, First-Served  (FCFS) policy. 
We prove that for arbitrary number, batch sizes, arrival times, and due times of the jobs, these scheduling policies are near delay-optimal in stochastic ordering for minimizing three classes of delay metrics among all causal and non-preemptive policies.
In particular, the FUT policy is within a constant additive delay gap from the optimum  for minimizing the mean average delay, and  
the FCFS policy within twice of the optimum for minimizing the mean maximum delay and the mean $p$-norm of delay. 
The key proof tools are several novel sample-path orderings, which can be used to compare the sample-path delay of different policies in a near-optimal sense. 

\end{abstract}
\begin{IEEEkeywords}
Scheduling, multi-server queueing systems, near delay optimality, New-Better-than-Used distribution, sample-path methods, stochastic ordering.
\end{IEEEkeywords}
\newpage

\section{Introduction}

Achieving low service latency is imperative in cloud computing and data center systems. 
As the user population and data volume scale up, the running times of service jobs grow quickly. 
In  practice, 
long-running jobs are broken into a batch of short tasks which can be executed separately by a number of parallel servers  \cite{mapreduce}. We study the scheduling of such batch jobs in multi-server queueing systems. 
The jobs arrive over time according to a  general arrival process, where the number, batch sizes, arrival times, and due times of the jobs are \emph{arbitrarily} given. A scheduler determines how to assign the tasks of incoming jobs to the servers, based on the casual information (the history and current information) of the system. 
We assume that each task is of one unit of work and can be flexibly assigned to any server. The service times of the tasks follow New-Better-than-Used (NBU) distributions, and are \emph{independent} across the servers and \emph{i.i.d.} across the tasks assigned to the same server. 
Our goal is to seek for low-complexity scheduling policies that optimize the delay performance of the jobs.

Many delay-optimal scheduling results have been established in single-server queueing systems. In the deterministic scheduling models, the preemptive Shortest Remaining Processing Time first (SRPT) policy, the preemptive Earliest Due Date first (EDD) policy, and the First-Come, First-Served  (FCFS) policy were proven to minimize the average delay \cite{Schrage68,Smith78}, maximum lateness \cite{Jackson55}, and maximum delay \cite{Baccelli:1993}, respectively. In addition,  index policies are shown to be delay-optimal in many stochastic scheduling problems, e.g., \cite{Smith56,Rothkopf66,Gittins79banditprocesses,Klimov74,Gittins2011}. 

Unfortunately, generalizing these delay-optimal results to the parallel-server case is quite challenging. In particular, minimizing the average delay in deterministic scheduling problems with more than one servers is $NP$-hard \cite{Leonardi:1997}. Similarly, delay-optimal stochastic scheduling in multi-class multi-server queueing systems is deemed to be notoriously difficult \cite{Weiss:1992,Weiss:1995}. Prior attempts on solving the delay-optimal stochastic scheduling problem have met with little success, except in some limiting regions such as large system limits, e.g., \cite{Ying2015}, and heavy traffic limits, e.g.,\cite{Dacre1999, Stolyar_heavy2004}. However, these results may not apply either outside of these limiting regions, or for a non-stationary job arrival process under which the steady state distribution of the system does not exist.

\subsection{Our Contributions}
Because of the difficulty in establishing delay optimality in the considered queueing systems, we investigate near delay-optimal scheduling in this paper.
We prove that 
for arbitrarily given job parameters (including the number,  batch sizes, arrival times, and due times of the jobs), the Fewest Unassigned Tasks first (FUT) policy, the EDD  policy, and the  FCFS  policy are near delay-optimal in  stochastic ordering
for minimizing average delay, maximum lateness, and maximum delay, respectively, among all causal and non-preemptive\footnote{We consider task-level non-preemptive policies: Processing of a task cannot be interrupted until the task is completed. However, after completing a task from one job, a server can switch to process another task from any job.} policies. If the job parameters satisfy some additional conditions, we can further show that these policies are near delay-optimal in stochastic ordering for minimizing three  general classes of delay metrics. In particular, we  show that the FUT policy is within a constant additive delay gap from the optimum  for minimizing the mean average delay (Theorem \ref{lem7_NBU})\footnote{As we will see in Section \ref{sec_analysis}, the FUT policy is a nice approximation of the Shortest Remaining Processing Time (SRPT) first policy \cite{Schrage68,Smith78}.}; and  
the FCFS policy within twice of the optimum for minimizing the mean maximum delay and mean $p$-norm of delay (Theorems \ref{thm_max_delay_ratio} and \ref{thm_max_delay_ratio_1}). 

We develop a unified sample-path method to establish these results, where the key proof tools are several novel sample-path orderings for comparing the delay performance of different policies in a near-optimal sense. These sample-path orderings are very general because they do not need to specify the queueing system model, and hence can be potentially used for establishing near delay optimality results in other queueing systems.

\subsection{Organization of the Paper}
We describe the system model and problem formulation in Section \ref{sec_model}, together with the notations that we will use throughout the paper. Some near delay optimality results are introduced in Section \ref{sec_analysis}, which are proven in Section \ref{sec_proofmain}.
Finally, some conclusion discussions are presented in Section \ref{sec_conclusion}.

\section{Model and Formulation}\label{sec_model}
\subsection{Notations and Definitions}
We will use lower case letters such as $x$ and $\bm{x}$, respectively, to represent deterministic scalars and vectors.
In the vector case, a subscript will index the components of a vector, such as $x_i$.
We use $x_{[i]}$ and $x_{(i)}$, respectively, to denote the $i$-th largest and the $i$-th smallest components of $\bm{x}$.  
For any $n$-dimensional vector $\bm{x}$, let $\bm{x}_{\uparrow}=(x_{(1)},\ldots,x_{(n)})$ 
denote the increasing 
rearrangement of $\bm{x}$.  Let $\bm{0}$
denote the vector 
with all 0 
components.

Random variables and vectors will be denoted by upper case letters such as $X$ and $\bf{X}$, respectively, with the subscripts and superscripts following the same conventions as in the deterministic case. 
Throughout the paper, ``increasing/decreasing'' and ``convex/concave'' are used in the non-strict sense. LHS and RHS denote, respectively, ``left-hand side'' and ``right-hand side''.

For any $n$-dimensional vectors $\bm{x}$ and $\bm{y}$, the elementwise vector ordering $x_i\leq y_i$, $i=1,\ldots,n$, is denoted by $\bm{x} \leq \bm{y}$. Further, $\bm{x}$ is said to be \emph{majorized} by $\bm{y}$, denoted by $\bm{x}\prec\bm{y}$, if (i) $\sum_{i=1}^j x_{[i]} \leq \sum_{i=1}^j y_{[i]}$, $j=1,\ldots,n-1$ and (ii) $\sum_{i=1}^n x_{[i]} = \sum_{i=1}^n y_{[i]}$ \cite{Marshall2011}. In addition, $\bm{x}$ is said to be  \emph{weakly majorized by $\bm{y}$ from below}, denoted by $\bm{x}\prec_{\text{w}}\bm{y}$, if $\sum_{i=1}^j x_{[i]} \leq \sum_{i=1}^j y_{[i]}$, $j=1,\ldots,n$; $\bm{x}$ is said to be  \emph{weakly majorized by $\bm{y}$ from above}, denoted by $\bm{x}\prec^{\text{w}}\bm{y}$, if $\sum_{i=1}^j x_{(i)} \geq \sum_{i=1}^j y_{(i)}$, $j=1,\ldots,n$ \cite{Marshall2011}.
A function that preserves the majorization order is called a Schur convex function. Specifically, $f: \mathbb{R}^n\rightarrow \mathbb{R}$ is termed \emph{Schur convex} if $f(\bm{x})\leq f(\bm{y})$ for all $\bm{x}\prec\bm{y}$ \cite{Marshall2011}. A function $f: \mathbb{R}^n\rightarrow \mathbb{R}$ is termed \emph{symmetric} if $f(\bm{x})= f(\bm{x}_{\uparrow})$ for all $\bm{x}$. A function $f: \mathbb{R}^n\rightarrow \mathbb{R}$ is termed \emph{sub-additive} if $f(\bm{x}+\bm{y})\leq f(\bm{x})+f(\bm{y})$ for all $\bm{x},\bm{y}$. The composition of functions $\phi$ and $f$ is denoted by $\phi \circ f (\bm{x}) = \phi(f (\bm{x}))$. Define $x\wedge y=\min\{x,y\}$.

Let $\mathcal{A}$ and $\mathcal{S}$ denote sets and events, with $|\mathcal{S}|$ denoting the cardinality of $\mathcal{S}$.
For all random variable ${X}$ and event $\mathcal{A}$, let $[{X}|\mathcal{A}]$ denote a random variable whose distribution is identical to the conditional distribution of ${X}$ for given $\mathcal{A}$. A random variable ${X}$ is said to be stochastically smaller than another random variable ${Y}$, denoted by ${X}\leq_{\text{st}}{Y}$, if $\Pr({X}>x) \leq \Pr({Y}>x)$ for all~$x\in \mathbb{R}$.
A set $\mathcal{U} \subseteq \mathbb{R}^n$ is called \emph{upper}, if $\bm{y} \in \mathcal{U}$ whenever $\bm{y}\geq \bm{x}$ and $\bm{x} \in \mathcal{U}$. 
A random vector $\bm{X}$ is said to be \emph{stochastically smaller} than another random vector $\bm{Y}$, denoted by $\bm{X}\leq_{\text{st}}\bm{Y}$, if $\Pr(\bm{X}\in \mathcal{U}) \leq \Pr(\bm{Y}\in \mathcal{U})$ for all upper sets ~$\mathcal{U}\subseteq \mathbb{R}^n$. If $\bm{X}\leq_{\text{st}}\bm{Y}$ and $\bm{X}\geq_{\text{st}}\bm{Y}$, then $\bm{X}$ and $\bm{Y}$ follow the same distribution, denoted by $\bm{X}=_{\text{st}}\bm{Y}$. We remark that $\bm{X}\leq_{\text{st}}\bm{Y}$  if, and only if,
\begin{align}\label{eq_order_property}
 \mathbb{E}[\phi(\bm{X})] \leq \mathbb{E}[\phi(\bm{Y})]
\end{align}
holds for all increasing $\phi: \mathbb{R}^n\rightarrow \mathbb{R}$ for which the expectations in \eqref{eq_order_property} exist.

\begin{figure}
\centering
\includegraphics[width=0.5\textwidth]{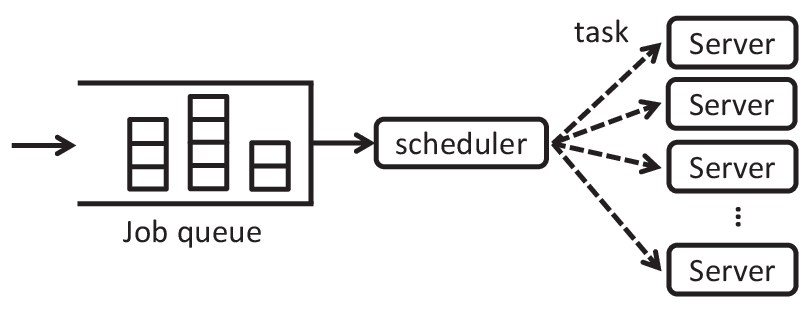}
\caption{A centralized queueing system, where a scheduler assigns the tasks of arrived jobs to the servers.} \label{fig1model_central}
\end{figure}

\subsection{Queueing System Model}\label{sec:model}
Consider a queueing system with $m$ parallel servers, as shown in Fig. \ref{fig1model_central}, which starts to operate at time $t=0$. 
A sequence of $n$ jobs arrive at time instants $a_1,\ldots,$ $a_n$ and are stored in a job queue, where $n$ is an arbitrary positive integer, no matter finite or infinite, and $0=a_1\leq a_2\leq\cdots\leq a_n$. 
The $i$-th arrived job, called job $i$, brings with it a batch of $k_i$ tasks.
Each task is of one unit of work, and can be flexibly assigned to any server.  Job $i$ is completed when all $k_i$ tasks of job $i$ are completed. The maximum job size is\footnote{If $n\rightarrow\infty$, then the $\max$ operator in \eqref{eq_0} is replaced by $\sup$.}
\begin{align}\label{eq_0}
k_{\max} = \max_{i=1,\ldots,n}k_i.
\end{align}
\label{sec_Multivariate}


A scheduler assigns the tasks of arrived jobs to the servers over time. The service times of the tasks  are assumed to be \emph{independent} across the servers and \emph{i.i.d.} across the tasks assigned to the same server.
Let $X_l$ represent the random service time of server $l$. The service rate of server $l$ is $\mu_l= 1/\mathbb{E}[X_l]$, which may vary across the servers. Denote $\bm{X}=(X_1,\dots,X_m)$.
We consider the classes of NBU task service time distributions.
\begin{definition}
Consider a non-negative random variable $X$ with complementary cumulative distribution function (CDF)  $\bar{F}(x)=\Pr[X>x]$. Then, $X$ is \textbf{New-Better-than-Used (NBU)} if for all $t,\tau\geq 0$
\begin{eqnarray}\label{eq_NBU}
\bar{F}(\tau+t)\leq \bar{F}(\tau)\bar{F}(t).
\end{eqnarray}
\end{definition}

\subsection{Scheduling Policies}


A scheduling policy, denoted by $\pi$, determines the task assignments 
 in the system.
We consider the class of \textbf{causal} policies, in which scheduling decisions are made based on the history and current state of the system; the realization of task service time is unknown until the task is completed (unless the service time is deterministic). 
In practice, service preemption is costly and may leads to complexity and reliability issues \cite{Sparrow:2013,Borg2015}. Motivated by this, we assume that \textbf{preemption is not allowed}. Hence, if a server starts to process a task, it must complete this task before switching to process another task. 
We use $\Pi$ to denote the set of causal and non-preemptive policies. Let us define several sub-classes of policies within $\Pi$:

A policy is said to be \textbf{non-anticipative}, if the scheduling decisions are made without using any information about the future job arrivals; and is said to be \textbf{anticipative}, if it has access to the parameters $(a_i,{k}_i,d_i)$ of future arriving jobs. For periodic and pre-planned services, future job arrivals can be predicted in advance. To cover these scenarios, we abuse the definition of causal policies a bit and include anticipative policies into the policy space $\Pi$. However, the policies that we propose in this paper are all {non-anticipative}. 

A policy is said to be \textbf{work-conserving}, if no server is idle when there are tasks waiting to be processed.

The goal of this paper is to design low-complexity non-anticipative scheduling policies that are near delay-optimal among all policies in $\Pi$, even compared to the anticipative policies with knowledge about future arriving jobs.

\subsection{Delay Metrics}\label{sec:metrics}
Each job $i$ has a \textbf{due time} $d_i\in[0,\infty)$, also called {due date}, which is the time that job $i$ is promised to be completed \cite{michael2012book}. 
Completion of a job after its due time is allowed, but then a penalty is incurred. Hence, the due time can be considered as a soft deadline. 

For each job $i$, $C_i$ is the job completion time, $D_i= C_i-a_i$ is the delay, $L_i = C_i-d_i$ is the {lateness} after the due time $d_i$, and $T_i = \max[C_i-d_i,0]$ is the  {tardiness} (or positive lateness). 
Define the vectors $\bm{a} =(a_1,\ldots,a_n)$,
$\bm{d} =(d_1,\ldots,d_n)$, $\bm{C}=(C_1,$ $\ldots,C_n)$, $\bm{D}=(D_1,\ldots, D_n)$,  $\bm{L}=(L_1,\ldots,$ $ L_n)$, and $\bm{C}_{\uparrow}\! =\! (C_{(1)},\ldots,C_{(n)})$. 
Let $\bm{c}=(c_1,$ $\ldots,c_n)$ and $\bm{c}_{\uparrow}\! =\! (c_{(1)},\ldots,c_{(n)})$, respectively, denote the realizations of $\bm{C}$ and $\bm{C}_{\uparrow}$. All these quantities are functions of the  scheduling policy $\pi$.


%

Several important delay metrics are introduced in the following: For any policy $\pi$, the \textbf{average delay} ${D}_{\text{avg}}: \mathbb{R}^n\rightarrow \mathbb{R}$ is defined by\footnote{If $n\rightarrow\infty$, then a $\limsup$ operator is enforced on the RHS of \eqref{eq_2}, and the $\max$ operator in \eqref{eq_1} and \eqref{eq_11} is replaced by $\sup$.
}
\begin{align}
{D}_{\text{avg}}(\bm{C}(\pi))=\frac{1}{n}\sum_{i=1}^{n}\left[C_i(\pi)-a_i\right]. \label{eq_2}
\end{align}
The \textbf{maximum lateness}  ${L}_{\max}: \mathbb{R}^n\rightarrow \mathbb{R}$ is defined by
\begin{align}
L_{\max}(\bm{C}(\pi)) \!= \!\max_{i=1,2,\ldots,n} L_i(\pi)=\max_{i=1,2,\ldots,n} \left[C_i(\pi)-d_i\right],\label{eq_1}
\end{align}
if $d_i=a_i$, $L_{\max}$ reduces to the \textbf{maximum delay}  ${D}_{\max}: \mathbb{R}^n\rightarrow \mathbb{R}$, i.e.,
\begin{align}
D_{\max}(\bm{C}(\pi))\!=\! \max_{i=1,2,\ldots,n} D_i(\pi)= \max_{i=1,2,\ldots,n} \left[C_i(\pi)-a_i\right].\!\!\label{eq_11}
\end{align}
The \textbf{$p$-norm of delay} ${D}_{p}: \mathbb{R}^n\rightarrow \mathbb{R}$  is defined by
\begin{align}
D_p(\bm{C}(\pi)) = || \bm{D}(\pi) ||_p = \left[\sum_{i=1}^n D_i(\pi)^p\right]^{1/p},~ ~p\geq 1.
\end{align}


In general, a \textbf{delay metric} can be expressed as a function $f(\bm{C}(\pi))$ of the job completion time vector $\bm{C}(\pi)$, where $f: \mathbb{R}^n\rightarrow \mathbb{R}$ is increasing. In this paper, we consider three classes of delay metric functions:
\begin{align}
\mathcal{D}_{\text{sym}} &= \{f : f \text{ is symmetric and increasing}\},\nonumber\\
\mathcal{D}_{\text{Sch-1}} &= \{f:  f (\bm{x}+\bm{d}) \text{ is Schur convex and increasing in $\bm{x}$}\},\nonumber\\
\mathcal{D}_{\text{Sch-2}} &= \{f:  f (\bm{x}+\bm{a}) \text{ is Schur convex and increasing in $\bm{x}$}\}.\nonumber\end{align}
Consider two alternative expressions of the delay metric function $f$:
\begin{align}
&f_1(\bm{L}(\pi))=f(\bm{L}(\pi)+\bm{d})=f(\bm{C}(\pi)),\label{eq_expression_1}\\
&f_2(\bm{D}(\pi))=f(\bm{D}(\pi)+\bm{a})=f(\bm{C}(\pi)).\label{eq_expression_2}\end{align}
Then, $f_1(\bm{L}(\pi))$ is Schur convex in the lateness vector $\bm{L}(\pi)$ for each $f\in \mathcal{D}_{\text{Sch-1}}$, and $f_2(\bm{D}(\pi))$ is Schur convex in the delay vector $\bm{D}(\pi)$ for each $f\in \mathcal{D}_{\text{Sch-2}}$. Furthermore, every convex and symmetric function is Schur convex. Using these properties, it is easy to show that
\begin{align}
&{D}_{\text{avg}}\in\mathcal{D}_{\text{sym}}\cap \mathcal{D}_{\text{Sch-1}} \cap \mathcal{D}_{\text{Sch-2}},\nonumber \\
&L_{\max}\in\mathcal{D}_{\text{Sch-1}}, D_{\max}\in\mathcal{D}_{\text{Sch-2}}, D_p \in \mathcal{D}_{\text{Sch-2}}.\nonumber
\end{align}



\subsection{Near Delay Optimality}\label{sec_optimality}

Define $\mathcal{I}=\{n,(a_i,{k}_i,d_i)_{i=1}^n\}$ as the parameters of the jobs, which include the number, batch sizes, arrival times, and due times of the jobs.
The job parameters $\mathcal{I}$ and random task service times are determined by two external processes, which are \emph{mutually independent} and do not change according to the scheduling policy. 
For delay metric function $f$ and policy space $\Pi$, a policy $P\in \Pi$ is said to be \textbf{delay-optimal in stochastic ordering}, if for all $\pi\in \Pi$ and $\mathcal{I}$
\begin{align}\label{eq_optimal}
[f (\bm{C}(P))|P,\mathcal{I}]\leq_{\text{st}} [f (\bm{C}(\pi))|\pi,\mathcal{I}].
\end{align}
or equivalently, if for all $\mathcal{I}$
\begin{align}\label{eq_optimal1}
\mathbb{E}[\phi \circ f (\bm{C}(P))|P,\mathcal{I}]=\min_{\pi\in\Pi}\mathbb{E}[\phi \circ f (\bm{C}(\pi))|\pi,\mathcal{I}]
\end{align}
holds for all increasing function $\phi: \mathbb{R}\rightarrow \mathbb{R}$ for which the conditional expectations in \eqref{eq_optimal1} exist.
The equivalence between \eqref{eq_optimal} and \eqref{eq_optimal1} follows from \eqref{eq_order_property}. 
For notational simplicity, we will omit the mention of the condition that policy $\pi$ or policy $P$ is adopted in the system in the rest of the paper. Note that this definition of delay optimality is quite strong because the optimal policy $P$ needs to satisfy \eqref{eq_optimal} and \eqref{eq_optimal1} for \emph{arbitrarily given} job parameters  $\mathcal{I}$.


%

In multi-class multi-server queueing systems, delay optimality is extremely difficult to achieve,
even with respect to some definitions of delay optimality that are weaker than \eqref{eq_optimal}. 
In deterministic scheduling problems, it was shown that minimizing the average delay in multi-server queueing systems is $NP$-hard 
\cite{Leonardi:1997}. Similarly, delay-optimal stochastic scheduling in multi-server queueing systems with multiple job classes is deemed to be notoriously difficult \cite{Weiss:1992,Weiss:1995,Dacre1999}.
This motivated us to study whether there exist policies that can come close to delay optimality. We will show that this is possible in the following sense:

\ifreport
\begin{figure}
\centering
\includegraphics[width=0.55\textwidth]{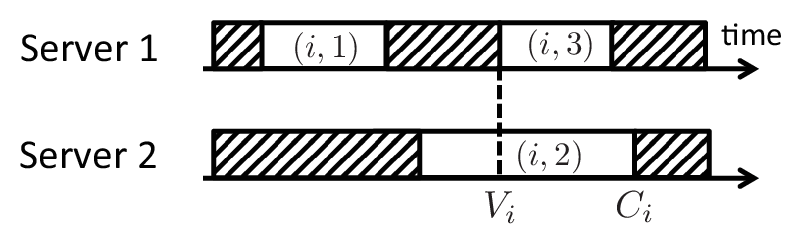} \caption{An illustration of $V_i$ and $C_i$. There are 2 servers, and job $i$ has 3 tasks denoted by $(i,1)$, $(i,2)$, $(i,3)$. Tasks $(i,1)$ and $(i,3)$ are assigned to Server 1, and Task $(i,2)$ is assigned to Server 2. By time $V_i$, all tasks of job $i$ have started service. By time $C_i$, all tasks of job $i$ have completed service. Therefore, $V_i\leq C_i$.}
\label{V_i}
\end{figure}
\else
\begin{figure}
\centering
\includegraphics[width=0.35\textwidth]{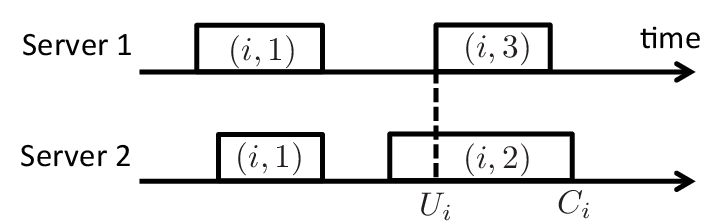} \caption{An illustration of $V_i$ and $C_i$. There are 2 servers, and job $i$ has 3 tasks denoted by $(i,1)$, $(i,2)$, $(i,3)$. Task $(i,1)$ is replicated on both servers, tasks $(i,2)$ and $(i,3)$ are assigned to the two servers separately. By time $V_i$, all tasks of job $i$ have entered the servers. By time $C_i$, all tasks of job $i$ have completed service. Therefore, $V_i\leq C_i$.}
\label{V_i}
\end{figure}
\fi

{Define $V_i$ as the earliest time that all tasks of job $i$ have entered the servers. In other words, all tasks of job $i$ are either completed or under service   at time $V_i$. One illustration of $V_i$ is provided in Fig. \ref{V_i}, from which it is easy to see 
\begin{align}\label{eq_V_i_small}
V_i\leq C_i.
\end{align}
Denote $\bm{V}=(V_1,\ldots, V_n)$, $\bm{V}_{\uparrow} \!=\! (V_{(1)},\ldots,V_{(n)})$. Let $\bm{v}=(v_1,\ldots, v_n)$ and $\bm{v}_{\uparrow} \!=\! (v_{(1)},\ldots,v_{(n)})$ be the realization of $\bm{V}$ and $\bm{V}_{\uparrow}$, respectively. All these quantities are functions of the  scheduling policy $\pi$.
A policy $P\in \Pi$ is said to be \textbf{near delay-optimal in stochastic ordering}, if for all $\pi\in \Pi$ and $\mathcal{I}$
\begin{align}\label{eq_nearoptimal}
[f (\bm{V}(P))|\mathcal{I}]\leq_{\text{st}} [f (\bm{C}(\pi))|\mathcal{I}],
\end{align}
or equivalently, if for all $\mathcal{I}$
\begin{align}\label{eq_nearoptimal1}
\mathbb{E}[\phi \!\circ\! f (\bm{V}(P))|\mathcal{I}] \!\leq\! \min_{\pi\in\Pi}\mathbb{E}[\phi \!\circ\! f (\bm{C}(\pi))|\mathcal{I}] \!\leq\!\mathbb{E}[\phi \!\circ\! f (\bm{C}(P))|\mathcal{I}]
\end{align}
holds for all increasing function $\phi$ for which the conditional expectations in \eqref{eq_nearoptimal1} exist. There exist many ways to approximate \eqref{eq_optimal} and \eqref{eq_optimal1}, and obtain various forms of near delay optimality.
We find that the form in \eqref{eq_nearoptimal} and \eqref{eq_nearoptimal1} is convenient, because it is analytically 
provable and leads to tight sub-optimal delay gap, as we will see in the subsequent sections.

\begin{algorithm}[h]
\SetKwData{NULL}{NULL}
\SetCommentSty{small}
$Q:=\emptyset$\tcp*[r]{$Q$ is the set of  jobs in the queue}
\While{the system is ON} {
\If{job $i$ arrives}{
$\xi_i:=k_i$\tcp*[r]{job $i$ has $\xi_i$ remaining tasks~~~~~~~}
$\gamma_i:=k_i$\tcp*[r]{job $i$ has $\gamma_i$ unassigned tasks~~~~~~}
$Q:=Q \cup \{i\}$\;
}

\While{$\sum_{i\in Q} \gamma_i> 0$ and there are idle servers}{
Pick any idle server $l$\;
$j :=\arg\min\{\gamma_i : i\in Q, \gamma_i>0\}$\;
Allocate a task of job $j$ on server $l$\;
$\gamma_j:=\gamma_j - 1$\;

}

\If{a task of job $i$ is completed}{

$\xi_{i} := \xi_{i} -1$\;
\textbf{if} $\xi_{i}=0$ \textbf{then} {$Q:=Q/ \{i\}$\;   }
}

}
\caption{ Fewest Unassigned Tasks first (FUT).}\label{alg1}
\end{algorithm}
\section{Main Results}\label{sec_analysis}

In this section, we provide some near delay optimality results in the form of \eqref{eq_nearoptimal} and \eqref{eq_nearoptimal1}. 
These results will be proven in Section \ref{sec_proofmain} by using a unified sample-path method.

\subsection{Average Delay $D_{\text{avg}}$ and the Delay Metrics in $\mathcal{D}_{\text{sym}}$}\label{sec_average_delay}

Let us first consider the following \textbf{Fewest Unassigned Tasks first (FUT)} policy, which is also described in Algorithm \ref{alg1}: 
Each idle server is allocated to process a task from the job with the earliest due time among all jobs with unassigned tasks. The delay performance of policy FUT is characterized in the following theorem.

\begin{theorem}\label{thm1}
If the task service times are NBU, independent across the servers, and {i.i.d.} across the tasks assigned to the same server, then for all \emph{$f\in\mathcal{D}_{\text{sym}}$},  ${\pi\in\Pi}$, and $\mathcal{I}$ 
\emph{
\begin{align}\label{eq_delaygap1}
&[f(\bm{V}(\text{FUT}))|\mathcal{I}] \leq_{\text{st}} \left[f(\bm{C}(\pi))|\mathcal{I}\right].
\end{align}}
\end{theorem}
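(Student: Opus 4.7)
The plan is to establish \eqref{eq_delaygap1} by constructing a sample-path coupling and then exploiting the assumed symmetry of $f$. Since every $f \in \mathcal{D}_{\text{sym}}$ depends on its argument only through the sorted rearrangement, \eqref{eq_delaygap1} is equivalent to $\bm{V}_{\uparrow}(\text{FUT}) \leq_{\text{st}} \bm{C}_{\uparrow}(\pi)$. I would prove this stronger vector ordering, since by the characterization in \eqref{eq_order_property} combined with the fact that $f \circ (\cdot)_\uparrow$ is increasing and coordinatewise monotone, it immediately yields the result for every symmetric increasing $f$.

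The coupling is built on a single probability space by giving each server $l$ an i.i.d.\ sequence of service-time samples $(X_l^{(1)}, X_l^{(2)}, \ldots)$, each distributed as $X_l$; both FUT and the reference policy $\pi$ consume these samples in the order in which they schedule tasks on server $l$, so all randomness is common and only the assignment decisions differ. On this common space I would then try to prove the pathwise claim: for each $j = 1, \ldots, n$, $V_{(j)}(\text{FUT}) \leq C_{(j)}(\pi)$ after a suitable re-coupling of residual service times. The combinatorial heart of this claim is a forward induction on event times that maintains the invariant $|\{i : V_i(\text{FUT}) \leq t\}| \geq |\{i : C_i(\pi) \leq t\}|$. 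Because FUT is work-conserving and always feeds an idle server with a task of the job that has the fewest remaining unassigned tasks, every epoch at which $\pi$ completes a job can be matched by an earlier or simultaneous epoch at which FUT exhausts the last unassigned task of some job; the FUT tiebreaker is precisely what guarantees that the count of ``fully dispatched'' jobs under FUT grows at least as fast as the count of ``fully completed'' jobs under $\pi$.

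The NBU hypothesis is what lets me translate this task-dispatch ordering, which lives at the level of $\bm{V}$, into a stochastic statement about the completion-time vector $\bm{C}(\pi)$. Each time the inductive comparison requires aligning a task that $\pi$ has been processing for a positive time against a freshly assigned task under FUT, I invoke \eqref{eq_NBU} to argue that the residual service time of the in-service task is stochastically dominated by an independent fresh copy of $X_l$; this allows a re-coupling of the driver sequences that preserves the joint distribution while maintaining the pathwise inequality. Combined with the coordinatewise inequality on sorted vectors, this yields $\bm{V}_{\uparrow}(\text{FUT}) \leq_{\text{st}} \bm{C}_{\uparrow}(\pi)$ and hence \eqref{eq_delaygap1}.

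The step I expect to be most delicate is the interchange argument used inside the induction, specifically when $\pi$ is midway through a long task on some server while FUT would prefer to commit that server to a smaller job. Justifying the swap requires producing a modified policy $\pi'$ whose sorted completion-time vector is stochastically no smaller than $\pi$'s while agreeing with FUT on one additional scheduling decision; iterating this reduction and invoking NBU at each step to handle the partial-service residuals is the technically subtle part. This is precisely where the paper's ``unified sample-path method'' and its novel sample-path orderings should be doing the heavy lifting, and I would expect the interchange to be formalized once in a general lemma and then reused across the other theorems in Section~\ref{sec_analysis}.
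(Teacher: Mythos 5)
Your overall architecture (couple the two policies pathwise, prove a sorted-vector inequality between the dispatch times of FUT and the completion times of $\pi$, then conclude for every symmetric increasing $f$) matches the paper's three-step method in outline, but the step you yourself flag as the heart of the argument is where the proposal breaks down. Your plan is an interchange chain: repeatedly replace $\pi$ by a policy $\pi'$ that agrees with FUT on one more decision and whose sorted completion vector is stochastically no smaller, using NBU to dominate the residual of $\pi$'s in-service task by a fresh copy when a partially served task must be realigned. If such a chain could be completed it would terminate at FUT and yield $\bm{C}_{\uparrow}(\text{FUT})\leq_{\text{st}}\bm{C}_{\uparrow}(\pi)$, i.e.\ \emph{exact} delay optimality of FUT in stochastic ordering --- a much stronger claim than Theorem \ref{thm1}, and precisely the kind of statement the paper argues is out of reach in multi-server systems (this is why the theorem compares $\bm{V}(\text{FUT})$ with $\bm{C}(\pi)$ rather than $\bm{C}(\text{FUT})$ with $\bm{C}(\pi)$). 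The paper never modifies $\pi$'s scheduling decisions at all. Instead it proves a ``weak work-efficiency ordering'' (Definition \ref{def_order}): whenever $\pi$ serves a task over $[\tau,\nu]$ and FUT's queue is nonempty throughout $[\tau,\nu]$, some task \emph{starts} service in FUT during $[\tau,\nu]$. This is established by a progressive coupling (Lemma \ref{lem_coupling}) in which NBU is applied with the opposite pairing of ``used'' and ``fresh'' from yours: the partially served task sits on the corresponding busy server of the work-conserving policy FUT, and its residual service time is stochastically dominated by the full service time of the task that $\pi$ starts fresh, so FUT's server frees up and starts a new task inside $[\tau,\nu]$ with probability one. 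With your pairing (dominating $\pi$'s residual by a fresh copy) the inequality points the wrong way for this purpose.

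The second gap is the induction invariant. Maintaining only the count $|\{i:V_i(\text{FUT})\leq t\}|\geq|\{i:C_i(\pi)\leq t\}|$ is not inductively closed: whether the invariant survives the next events depends on the composition of the remaining work, not just on how many jobs are fully dispatched. The paper's induction (Proposition \ref{lem1}, via Lemmas \ref{lem_non_prmp1} and \ref{lem_non_prmp2}) carries the stronger state-level ordering $\sum_{i=j}^n\gamma_{[i],\text{FUT}}(t)\leq\sum_{i=j}^n\xi_{[i],\pi}(t)$ for all $j$ (weak supermajorization of unassigned tasks in FUT by remaining tasks in $\pi$), and it is exactly this tail-sum form that the ``fewest unassigned tasks first'' rule preserves when FUT starts at least as many tasks as $\pi$ completes; the desired inequality $v_{(i)}(\text{FUT})\leq c_{(i)}(\pi)$ then follows from Proposition \ref{ordering_2}, and the symmetric-$f$ conclusion from Proposition \ref{lem1_1}. (There is also a technical device you would need in the induction --- the paper modifies completion epochs of $\pi$ that straddle arrival times --- but that is minor compared with the two issues above.) To repair your proposal you would essentially have to abandon the interchange chain and adopt the work-efficiency ordering plus the tail-sum invariant, i.e.\ the paper's route.
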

If there are multiple idle servers, the choices of the idle server in Step 9 of Algorithm \ref{alg1} are non-unique. 
The result of Theorem \ref{thm1} holds for any server selection method in Step 9. In practice, one can pick the idle server with the fastest service rate.

Let us characterize the sub-optimality delay gap of policy FUT. For mean average delay, i.e., $f(\cdot)=D_{\text{avg}}(\cdot)$, it follows from \eqref{eq_delaygap1} that 
\begin{align}
\mathbb{E}[D_{\text{avg}}(\bm{V}(\text{FUT}))|\mathcal{I}] \leq 
\min_{\pi\in\Pi}\mathbb{E}\left[D_{\text{avg}}(\bm{C}(\pi))\left.\right|\mathcal{I}\right] \leq \mathbb{E}[D_{\text{avg}}(\bm{C}(\text{FUT}))|\mathcal{I}]. \label{eq_delaygap1_2}
\end{align}
The delay gap between the LHS and RHS of \eqref{eq_delaygap1_2} is given by 
\begin{align}\label{eq_delaygap1_1_1}
& \mathbb{E}[D_{\text{avg}}(\bm{C}(\text{FUT}))-D_{\text{avg}}(\bm{V}(\text{FUT}))|\mathcal{I}]\nonumber\\
=& \mathbb{E}\bigg[\frac{1}{n}\sum_{i=1}^n\big({C}_i(\text{FUT})-V_i(\text{FUT})\big)\bigg|\mathcal{I}\bigg].\!
\end{align}
Recall that $m$ is the number of servers, and $k_i$ is the number of tasks in job $i$. At time $V_i(\text{FUT})$, all tasks of job $i$ have started service: If $k_i>m$,  then job $i$ has at most $m$ incomplete tasks that are under service at time $V_i(\text{FUT})$; if $k_i\leq m$, then job $i$ has at most $k_i$ incomplete tasks that are under service at time $V_i(\text{FUT})$. Therefore, in policy FUT, at most $k_i\wedge m = \min\{k_i,m\}$ tasks of job $i$ are under service during the time interval $[V_i(\text{FUT}), {C}_i(\text{FUT})]$. Using this and the property of NBU distributions, we can obtain

\begin{theorem}\label{lem7_NBU}
Let $\mathbb{E}[X_l]= 1/\mu_l$, and without loss of generality $\mu_1\leq \mu_2\leq \ldots\leq \mu_m$. 
If the task service times are NBU, independent across the servers, {i.i.d.} across the tasks assigned to the same server, then  for all $\mathcal{I}$
\emph{\begin{align}\label{eq_gap}
\mathbb{E}[D_{\text{avg}}(\bm{C}(\text{FUT}))|\mathcal{I}] - \min_{\pi\in\Pi}\mathbb{E}\left[D_{\text{avg}}(\bm{C}(\pi))|\mathcal{I}\right]  \leq&\frac{1}{n}\sum_{i=1}^n\sum_{l=1}^{k_i \wedge m } \frac{1}{\sum_{j=1}^l\mu_j}\\
\leq& \frac{\ln(k_{\max}\wedge m)+1}{\mu_1},\label{eq_gap_2}
\end{align}}
where $k_{\max}$ is the maximum job size in \eqref{eq_0}, and $x\wedge y = \min\{x,y\}$.
\end{theorem}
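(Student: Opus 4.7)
The plan is to use Theorem~\ref{thm1} to sandwich the optimum and reduce the gap to a per-job comparison, then bound each per-job contribution via the NBU property. Applying Theorem~\ref{thm1} with $f = D_{\text{avg}} \in \mathcal{D}_{\text{sym}}$ and taking expectations gives $\mathbb{E}[D_{\text{avg}}(\bm{V}(\text{FUT}))] \leq \min_{\pi\in\Pi}\mathbb{E}[D_{\text{avg}}(\bm{C}(\pi))]$, while $V_i \leq C_i$ from \eqref{eq_V_i_small} supplies $\min_{\pi\in\Pi}\mathbb{E}[D_{\text{avg}}(\bm{C}(\pi))] \leq \mathbb{E}[D_{\text{avg}}(\bm{C}(\text{FUT}))]$. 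Subtracting and invoking the identity \eqref{eq_delaygap1_1_1} reduces the theorem to the per-job estimate $\mathbb{E}[C_i(\text{FUT}) - V_i(\text{FUT})] \leq \sum_{l=1}^{k_i\wedge m} 1/(\mu_1+\cdots+\mu_l)$.

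Fix job $i$ and set $L := k_i\wedge m$. By the definitions of $V_i$ and FUT, at time $V_i$ every task of job $i$ has entered service, and since servers process one task at a time, at most $L$ tasks of job $i$ are still running, on distinct servers forming a random subset $S$. Condition on the state at $V_i$, and for each $s\in S$ denote by $\tau_s$ the current age of the task on server $s$ and $Z_s$ its residual service time. The NBU inequality $\Pr[X_s > \tau_s + t]\leq \Pr[X_s > \tau_s]\Pr[X_s > t]$ yields $Z_s\leq_{\text{st}}\tilde X_s$, where $\tilde X_s$ is an independent fresh service time on server $s$. Since $C_i - V_i = \max_{s\in S}Z_s$ and the maximum is coordinatewise increasing, $[C_i - V_i\mid\text{state at }V_i] \leq_{\text{st}} \max_{s\in S}\tilde X_s$. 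The expected maximum is monotone both in $|S|$ and in replacing faster servers by slower ones, so the worst-case realization is $S=\{1,\ldots,L\}$, the slowest $L$ servers, giving $\mathbb{E}[C_i - V_i] \leq \mathbb{E}[\max_{l=1}^L\tilde X_l]$ with $\tilde X_l$ fresh on the $l$-th slowest server.

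The main technical obstacle is then the NBU maximum inequality $\mathbb{E}[\max_{l=1}^L\tilde X_l] \leq \sum_{l=1}^L 1/(\mu_1+\cdots+\mu_l)$ for independent NBU $\tilde X_l$ with $\mathbb{E}[\tilde X_l] = 1/\mu_l$ and $\mu_1\leq\cdots\leq\mu_L$. The right-hand side is precisely the worst-case expected maximum in the exponential case, attained when, after each completion, the fastest of the remaining servers completes next, leaving the slowest subset to continue. I would prove the NBU version by induction on $L$: after the first task finishes, the NBU property reapplies and shows that the residuals of the remaining $L-1$ tasks are again stochastically dominated by fresh services; by monotonicity the worst case is that the fastest of the $L$ servers finishes first, reducing the sub-problem to $L-1$ fresh NBU services on rates $\mu_1,\ldots,\mu_{L-1}$, to which the induction hypothesis applies with bound $\sum_{l=1}^{L-1}1/(\mu_1+\cdots+\mu_l)$. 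The delicate step is bounding the expected first-completion time by $1/(\mu_1+\cdots+\mu_L)$: the pointwise inequality $\mathbb{E}[\min_l\tilde X_l] \leq 1/(\mu_1+\cdots+\mu_L)$ is \emph{not} valid for general NBU (it fails already for deterministic services, where the minimum equals $1/\mu_L$), so this estimate must be proved in the aggregate, for example by combining the integral representation $\mathbb{E}[\max_{l=1}^L\tilde X_l] = \int_0^\infty [1-\prod_l (1-\bar F_l(t))]\,dt$ with iterated applications of the NBU tail inequality $\bar F_l(t+\tau)\leq\bar F_l(t)\bar F_l(\tau)$ to dominate the joint survival function by its exponential benchmark. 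Summing the two estimates across the induction closes the telescoping bound.

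Finally, \eqref{eq_gap_2} is obtained from the elementary inequality $\mu_1+\cdots+\mu_l \geq l\mu_1$, which turns each summand into at most $1/(l\mu_1)$; the harmonic sum satisfies $H_{k_i\wedge m}\leq\ln(k_i\wedge m)+1\leq\ln(k_{\max}\wedge m)+1$, so $\sum_{l=1}^{k_i\wedge m}1/(\mu_1+\cdots+\mu_l)\leq(\ln(k_{\max}\wedge m)+1)/\mu_1$, and averaging over $i$ yields the second bound.
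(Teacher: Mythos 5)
Your reduction is the same as the paper's: Theorem \ref{thm1} plus $V_i\leq C_i$ gives \eqref{eq_delaygap1_2}, the gap collapses to \eqref{eq_delaygap1_1_1}, and the per-job target $\mathbb{E}[C_i(\text{FUT})-V_i(\text{FUT})\,|\,\mathcal{I}]\leq \sum_{l=1}^{k_i\wedge m}1/(\mu_1+\cdots+\mu_l)$, together with the harmonic-sum step for \eqref{eq_gap_2}, matches the paper exactly. The gap is in how you prove that per-job estimate. The paper does \emph{not} prove an ``NBU maximum inequality'' by induction on the number of busy servers; it converts to an exponential benchmark in the increasing convex order: residuals satisfy $[R_l|\chi_l]\leq_{\text{st}}X_l$ by NBU, then $X_l\leq_{\text{icx}}Z_l$ with $Z_l$ exponential of the same mean (Theorem 3.A.55 of the stochastic-orders reference, essentially NBU $\Rightarrow$ NBUE/HNBUE), the $\leq_{\text{icx}}$ order is preserved under maxima of independent random variables (Corollary 4.A.16, with Lemma \ref{lem_independent} supplying conditional independence), and only \emph{then}, at the exponential level, does memorylessness yield the telescoping bound $\sum_l 1/(\mu_1+\cdots+\mu_l)$. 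Your sketch is missing exactly this icx machinery, and the substitutes you propose do not work: the inductive decomposition needs $\mathbb{E}[\min_l \tilde X_l]\leq 1/(\mu_1+\cdots+\mu_L)$, which you correctly note fails for NBU, and your fallback of ``dominating the joint survival function by its exponential benchmark'' is a pointwise claim that is false for NBU distributions ($\bar F(t)\leq e^{-\mu t}$ does not hold, e.g.\ for deterministic service times), so the max of NBU services is not stochastically dominated by the max of exponentials --- the comparison only holds in $\leq_{\text{icx}}$, i.e.\ in expectation after the convex-order argument.

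A second, smaller flaw: you reduce from the random busy set $S$ to the $L=k_i\wedge m$ slowest servers \emph{before} passing to exponentials, asserting that the expected maximum increases when faster servers are replaced by slower ones. With heterogeneous NBU distributions only ordered by their means this monotonicity is not justified (a smaller mean does not give stochastic dominance; e.g.\ an exponential with mean $1/2$ versus a deterministic time $1$), so the worst-case-set step can fail at the NBU level. The paper performs this set comparison after the icx step, where exponentials with ordered rates \emph{are} stochastically ordered, which is what makes the reduction legitimate. So the overall architecture of your proposal is right, but the central inequality is unproven as written; to close it you essentially need the NBU-to-exponential increasing-convex-order comparison (or an equivalent aggregate argument), which is the crux of the paper's proof.
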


If $\mathbb{E}[X_l]\leq 1/\mu$ for $l=1,\ldots,m$, then  the sub-optimality delay gap of policy FUT is of the order $O(\ln(k_{\max}\wedge m))/{\mu}$. As the number of servers $m$ increases, this  sub-optimality delay gap is upper bounded by $[\ln(k_{\max})+1]/{\mu}$ which is independent of $m$. 

Theorem \ref{thm1} and Theorem \ref{lem7_NBU} tell us that {for arbitrary number, batch sizes, arrival times, and due times of the jobs, the FUT policy is near delay-optimal for minimizing the average delay $D_{\text{avg}}$ in the mean. 

The FUT policy is a nice approximation of the Shortest Remaining Processing Time (SRPT) first policy  \cite{Schrage68,Smith78}: The FUT policy  utilizes the number of unassigned tasks of a job to approximate the remaining processing time of this job, and is within a small additive sub-optimality gap from the optimum for minimizing the mean average delay in the scheduling problem that we consider.

Due to the difficulty of delay minimization in multi-class multi-server queueing systems, attempts on finding a tight additive sub-optimal delay gap in these systems has met with little success. In \cite{Weiss:1992,Weiss:1995}, closed-form upper bounds on the sub-optimality gaps of the Smith's rule and the Gittin's index rule were established for the cases that all jobs arrive at time zero. In Corollary 2 of \cite{Dacre1999}, the authors studied the optimal control in multi-server queueing systems with stationary arrivals of multiple classes of jobs, and used the achievable region method to obtain an additive sub-optimality delay gap, which is of the order $O(m k_{\max})$. The model and methodology in these studies are significantly different from those in this paper.

\subsection{Maximum Lateness $L_{\max}$ and the Delay Metrics in $\mathcal{D}_{\text{sym}}\cup\mathcal{D}_{\text{Sch-1}}$}
Next, we investigate the maximum lateness $L_{\max}$ and the delay metrics in $\mathcal{D}_{\text{sym}}\cup\mathcal{D}_{\text{Sch-1}}$. We consider a   \textbf{Earliest Due Date first (EDD)} policy: Each idle server is allocated to process a task from the job with the earliest due time among all jobs with unassigned tasks. Policy EDD can be obtained from Algorithm \ref{alg1} by revising step 10 as $j :=\arg\min\{d_i : i\in Q, \gamma_i>0\}$.
The delay performance of policy EDD is characterized in the following theorem.
\begin{theorem}\label{thm3}
If the task service times are NBU, independent across the servers, and {i.i.d.} across the tasks assigned to the same server, then for all ${\pi\in\Pi}$, and $\mathcal{I}$ 
\emph{
\begin{align}\label{eq_delaygap1_thm3}
&[L_{\max}(\bm{V}(\text{EDD}))|\mathcal{I}] \leq_{\text{st}} \left[L_{\max}(\bm{C}(\pi))\left.\right|\mathcal{I}\right].
\end{align}}
\end{theorem}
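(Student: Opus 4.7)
The plan is to reduce the due-date-weighted claim to an order-statistics comparison that parallels the sample-path framework used in Theorem \ref{thm1}, and then to close the argument with a classical rearrangement step. Relabel the jobs so that $d_1\le d_2\le\cdots\le d_n$, which causes EDD to process jobs in index order. Writing $L_{\max}(\bm{C}(\pi))=\max_i\bigl(C_i(\pi)-d_i\bigr)$, the goal is to sandwich $L_{\max}(\bm{V}(\text{EDD}))$ between a function of the sorted completion times $C_{(i)}(\pi)$ of $\pi$ and $L_{\max}(\bm{C}(\pi))$ itself.

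The first and principal step is to establish the sample-path ordering
\begin{align}
[V_i(\text{EDD})\,|\,\mathcal{I}]\le_{\text{st}}[C_{(i)}(\pi)\,|\,\mathcal{I}],\qquad i=1,\ldots,n,\label{eq_EDD_step1}
\end{align}
holding jointly in the upper-set sense of Section \ref{sec_model}. Intuitively, by the $V$-time of job $i$ under EDD every task of the first $i$ due-ordered jobs has entered service, whereas under $\pi$ at time $C_{(i)}(\pi)$ at least $i$ jobs have already fully completed, so in particular at least $i$ jobs' worth of tasks have entered service. The NBU assumption enters when one tries to make this rigorous by coupling: residual service times of in-progress tasks in $\pi$ have unknown elapsed ages, but by \eqref{eq_NBU} each is stochastically bounded above by a fresh i.i.d.\ copy drawn from $X_l$. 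Replacing residuals by fresh samples can only delay $\pi$, so the analysis reduces to a comparison in which EDD and $\pi$ operate on the same coupled i.i.d.\ service-time sequence, and then a due-date-indexed exchange argument (of the same flavor as the one presumably used for Theorem \ref{thm1}) yields \eqref{eq_EDD_step1}.

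The second step is a pathwise rearrangement inequality. For any realization $c_1,\ldots,c_n$ with sorted order statistics $c_{(1)}\le c_{(2)}\le\cdots\le c_{(n)}$, and any $d_1\le d_2\le\cdots\le d_n$, an elementary adjacent-transposition argument shows that
\begin{align}
\max_i\bigl(c_{(i)}-d_i\bigr)\le\max_i\bigl(c_i-d_i\bigr).\label{eq_EDD_step2}
\end{align}
Combining \eqref{eq_EDD_step1}, \eqref{eq_EDD_step2}, and the monotonicity of the map $(v_1,\ldots,v_n)\mapsto\max_i(v_i-d_i)$ via \eqref{eq_order_property} yields
\begin{align}
[L_{\max}(\bm{V}(\text{EDD}))\,|\,\mathcal{I}]\le_{\text{st}}\Bigl[\max_i\bigl(C_{(i)}(\pi)-d_i\bigr)\,\Big|\,\mathcal{I}\Bigr]\le_{\text{st}}[L_{\max}(\bm{C}(\pi))\,|\,\mathcal{I}],\nonumber
\end{align}
which is precisely \eqref{eq_delaygap1_thm3}.

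The hard part will be establishing \eqref{eq_EDD_step1}: making the coupled sample-path exchange rigorous across all servers and all decision epochs. When $\pi$ is currently serving a late-due-date task on some server, one would like to ``swap'' it with the EDD choice, and the NBU property is exactly what permits this swap at an arbitrary intervention time without needing to know the elapsed service. I expect the formal argument to construct, from any given $\pi$, a sequence of intermediate policies each one step closer to EDD, and to verify inductively that the joint ordering \eqref{eq_EDD_step1} is preserved at every step. This is where the novel sample-path orderings advertised in the introduction do the real work, and where essentially all of the technical subtlety of Theorem \ref{thm3} resides.
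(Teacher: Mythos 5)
Your second step (the rearrangement inequality $\max_i(c_{(i)}-d_i)\le\max_i(c_i-d_i)$ for $d_1\le\cdots\le d_n$) is fine, but the first and principal step is false, so the proof does not go through. The claimed per-index ordering $[V_i(\text{EDD})\,|\,\mathcal{I}]\le_{\text{st}}[C_{(i)}(\pi)\,|\,\mathcal{I}]$, with $i$ ranking jobs by due date on the left and by completion order on the right, ignores arrival times: the job with the earliest due date may arrive arbitrarily late, while $\pi$ can complete some early-arriving job long before that. Concretely, take one server and two single-task jobs with deterministic unit service times (deterministic is NBU): job $A$ arrives at $0$ with due date $100$, job $B$ arrives at $10$ with due date $11$. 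After relabeling in due-date order, job $1=B$, job $2=A$. Under EDD (which is non-anticipative), $A$ is served on $[0,1]$ and $B$ on $[10,11]$, so $V_1(\text{EDD})=10$; taking $\pi=\text{EDD}$ itself gives $C_{(1)}(\pi)=1<10$, so your first display fails, and moreover $\max_i\bigl(C_{(i)}(\pi)-d_i\bigr)=\max(1-11,\,11-100)=-10$ while $L_{\max}(\bm{V}(\text{EDD}))=\max(10-11,\,0-100)=-1$, so the first inequality in your final chain fails as well (the theorem itself is fine here since $L_{\max}(\bm{C}(\pi))=0$). In short, for $L_{\max}$ one cannot decouple the comparison into order statistics plus a global rearrangement; any rearrangement-type argument must be restricted to jobs that have already arrived.

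The paper's proof avoids per-index comparisons entirely. It establishes, on a coupled sample path, the due-date-indexed partial-sum ordering $\sum_{i:d_i\le\tau}\gamma_{i,\text{EDD}}(t)\le\sum_{i:d_i\le\tau}\xi_{i,\pi}(t)$ for all $\tau,t$ (Proposition \ref{lem2}, using the weak work-efficiency ordering of Definition \ref{def_order} together with the EDD priority rule), and then Proposition \ref{ordering_3_1} converts this into $L_{\max}(\bm{v}(\text{EDD}))\le L_{\max}(\bm{c}(\pi))$ by a contradiction argument in which job $w_j$ is compared only against jobs with $d_i\le d_{w_j}$ \emph{and} $a_i\le c_{w_j}$ --- exactly the arrival-time restriction your step omits. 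The NBU assumption enters only through the coupling of Lemma \ref{lem_coupling}, where the residual service time on a busy server under the work-conserving EDD policy is stochastically dominated by the fresh service time of the task served under $\pi$ (note the direction: NBU is applied to the EDD side to guarantee it starts a new task within each service interval of $\pi$, not to delay $\pi$). Rearrangement arguments of the kind you propose do appear in the paper, but only in Proposition \ref{lem_general} and Theorem \ref{coro_thm3_1}, and only under the additional hypotheses ($k_1=\cdots=k_n=1$, or sorted due dates and job sizes) that make them valid; they are not available for Theorem \ref{thm3} in general. To repair your argument you would need to replace your first step by a statement of the partial-sum type above, at which point you have essentially reconstructed the paper's route.
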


If the job sizes satisfy certain conditions, policy EDD is identical with policy FUT. In this case, policy EDD is near delay-optimal in stochastic ordering for minimizing both the average delay $D_{\text{avg}}(\cdot)$ and the maximum lateness $L_{\max}(\cdot)$. Interestingly, in this case policy EDD is near delay-optimal in stochastic ordering for minimizing all delay metrics in {$\mathcal{D}_{\text{sym}}\cup\mathcal{D}_{\text{Sch-1}}$}, as stated in the following theorem:



\begin{theorem}\label{coro_thm3_1}
If $k_1=\ldots=k_n=1$ (or $d_1\leq d_2\leq \ldots\leq d_n$ and $k_1\leq k_2 \leq \ldots\leq k_n$) and \emph{$L_{\max}$} is replaced by any \emph{$f\in\mathcal{D}_{\text{sym}}\cup\mathcal{D}_{\text{Sch-1}}$}, Theorem  \ref{thm3} still holds.
\end{theorem}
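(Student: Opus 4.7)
The plan is to reduce Theorem \ref{coro_thm3_1} to Theorems \ref{thm1} and \ref{thm3} by exploiting a structural coincidence between EDD and FUT under the stated hypotheses, augmented by a rearrangement argument for the Schur-convex case. The two pieces handle $\mathcal{D}_{\text{sym}}$ and $\mathcal{D}_{\text{Sch-1}}$ separately.

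For $f\in\mathcal{D}_{\text{sym}}$, I plan to show that under either hypothesis, EDD and FUT produce identical task assignments on every sample path, up to a consistent tie-breaking rule, so that $\bm{V}(\text{EDD})=\bm{V}(\text{FUT})$ almost surely under a common coupling of service times. If $k_1=\cdots=k_n=1$, every eligible job has a single remaining task, so FUT is indifferent among them, and breaking ties by earliest due time reproduces EDD. If instead $d_1\leq\cdots\leq d_n$ and $k_1\leq\cdots\leq k_n$, a short induction on task-assignment events shows that the job with the earliest due time among those with unassigned tasks also has the fewest unassigned tasks: since EDD has been peeling tasks off lower-indexed jobs first, $\gamma_i\leq k_i\leq k_j=\gamma_j$ whenever $i<j$ and job $j$ is still untouched. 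With this identification, the $\mathcal{D}_{\text{sym}}$ case reduces directly to Theorem \ref{thm1}.

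For $f\in\mathcal{D}_{\text{Sch-1}}$, write $f_1(\bm{x})=f(\bm{x}+\bm{d})$, which is Schur convex and increasing; in particular $f_1$ is symmetric. I plan to upgrade the sample-path ordering underlying Theorem \ref{thm3}, which as stated controls only the maximum lateness, to a componentwise comparison of the sorted lateness vectors,
\[
\bigl(\bm{V}(\text{EDD})-\bm{d}\bigr)_{\uparrow}\leq\bigl(\bm{C}(\pi)-\bm{d}\bigr)_{\uparrow},
\]
on every sample path under a suitable NBU coupling; Theorem \ref{thm3} corresponds to the largest coordinate of this inequality. Given this, symmetry of $f_1$ yields $f_1(\bm{V}(\text{EDD})-\bm{d})=f_1((\bm{V}(\text{EDD})-\bm{d})_{\uparrow})$ and likewise for $\bm{C}(\pi)$, and monotonicity of $f_1$ then gives $f(\bm{V}(\text{EDD}))\leq f(\bm{C}(\pi))$ sample-path-wise. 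Property \eqref{eq_order_property} promotes the sample-path inequality to the desired $[f(\bm{V}(\text{EDD}))|\mathcal{I}]\leq_{\text{st}}[f(\bm{C}(\pi))|\mathcal{I}]$.

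The \emph{main obstacle} is producing the strengthened sample-path ordering of the sorted lateness vectors, which is stronger than Theorem \ref{thm3} itself. I expect it to come out of the same interchange-and-coupling machinery that proves Theorem \ref{thm3}, most naturally by induction over task-completion epochs and using the NBU property to rematch service realizations between the two policies; the two restrictive hypotheses enter precisely here, precluding the pathological interleavings of due times and batch sizes that would otherwise let the $k$-th largest lateness under $\pi$ drop below the $k$-th largest lateness under EDD for $k>1$ while the extremal comparison still held.
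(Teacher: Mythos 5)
Your reduction for the $\mathcal{D}_{\text{sym}}$ part is sound and is essentially what the paper does: under either hypothesis the EDD priority rule coincides with the FUT priority rule (your induction is the same observation as in the paper's Corollary \ref{coro2}), and the sorted componentwise bound $v_{(i)}(\text{EDD})\leq c_{(i)}(\pi)$ from Propositions \ref{lem1}--\ref{lem1_1}, combined with the coupling lemma, gives the claim for symmetric increasing $f$.

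The $\mathcal{D}_{\text{Sch-1}}$ part, however, rests on an intermediate claim that is false at the sample-path level: the componentwise ordering of the \emph{sorted lateness} vectors, $(\bm{v}(\text{EDD})-\bm{d})_{\uparrow}\leq(\bm{c}(\pi)-\bm{d})_{\uparrow}$, does not follow from (and is not implied by) the structure the machinery provides, namely weak work-efficiency plus the EDD priority rule. Counterexample consistent with both and with your hypotheses: one server, two single-task jobs arriving at $t=0$ with $d_1=0$, $d_2=10$; in EDD job $1$ is served first with service time $20$, so $\bm{v}=(0,20)$ and the sorted lateness vector is $(0,10)$; in $\pi$ job $2$ is served first with service time $1$ and then job $1$ with service time $20$, so $\bm{c}=(21,1)$ and the sorted lateness vector is $(-9,21)$. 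One checks directly that EDD is weakly more work-efficient than $\pi$ on this path and that $v_{(i)}\leq c_{(i)}$ holds, yet the smallest lateness satisfies $0>-9$, so your strengthened ordering fails even though the theorem's conclusion is safe (max lateness $10\leq 21$ and the partial sums of the largest latenesses are ordered). The point is that $\pi$ can finish a far-due job very early, making its \emph{smallest} lateness beat EDD's, which does not hurt an increasing Schur-convex objective but kills any componentwise sorted comparison; so no amount of reworking the interchange/coupling argument will produce your inequality. The paper's route is strictly weaker and suffices: using $v_{(i)}(\text{EDD})\leq c_{(i)}(\pi)$, construct the permutation $\bm{v}'$ of $\bm{v}(\text{EDD})$ ordered consistently with $\bm{c}(\pi)$, so that $\bm{v}'\leq\bm{c}(\pi)$, and show $\bm{v}(\text{EDD})-\bm{d}\prec\bm{v}'-\bm{d}$ by repeated application of the rearrangement inequality (Lemma \ref{lem_rearrangement}), exploiting that EDD orders the entries of $\bm{v}$ consistently with $\bm{d}$; this yields $\bm{v}(\text{EDD})-\bm{d}\prec_{\text{w}}\bm{c}(\pi)-\bm{d}$, and since $f(\bm{x}+\bm{d})$ is increasing and Schur convex, the sample-path inequality $f(\bm{v}(\text{EDD}))\leq f(\bm{c}(\pi))$ follows (Proposition \ref{lem_general}), after which the coupling step you describe is the same as the paper's.
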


\subsection{Maximum Delay $D_{\max}$ and the Delay Metrics in $\mathcal{D}_{\text{sym}}\cup\mathcal{D}_{\text{Sch-2}}$}
We now study maximum delay $D_{\max}$ and the delay metrics in $\mathcal{D}_{\text{sym}}\cup\mathcal{D}_{\text{Sch-2}}$. We consider a \textbf{First-Come, First-Served  (FCFS)} policy: Each idle server is allocated to process a task from the job with the earliest arrival time among all jobs with unassigned tasks. Policy FCFS can be obtained from Algorithm \ref{alg1} by revising step 10 as $j :=\arg\min\{a_i : i\in Q, \gamma_i>0\}$. The delay performance of FCFS is characterized as follows:

\begin{corollary}\label{thm5}
If the task service times are NBU, independent across the servers, and {i.i.d.} across the tasks assigned to the same server, then for all  $\pi\in\Pi$, and $\mathcal{I}$
\emph{
\begin{align}\label{eq_delaygap1_thm5}
&[D_{\max}(\bm{V}(\text{FCFS}))|\mathcal{I}] \leq_{\text{st}} \left[D_{\max}(\bm{C}(\pi))|\mathcal{I}\right].
\end{align}}
\end{corollary}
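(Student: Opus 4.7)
The plan is to obtain Corollary \ref{thm5} as an immediate specialization of Theorem \ref{thm3}, by exploiting two trivial but crucial facts: the maximum-delay metric $D_{\max}(\bm{x})=\max_i[x_i-a_i]$ does not involve the due dates $d_i$ at all, and the FCFS policy is exactly what EDD reduces to when every $d_i$ is set equal to $a_i$. The first thing I would note is that neither side of \eqref{eq_delaygap1_thm5} depends on the due times: FCFS uses only the arrival times when choosing the next task to serve, and $D_{\max}$ is defined purely through the $a_i$ and the completion or start times. Consequently the conditional distributions of $\bm{C}(\pi)$ and $\bm{V}(\text{FCFS})$ given $\mathcal{I}$ are unchanged if each $d_i$ is replaced by $a_i$.

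With this in hand, for any instance $\mathcal{I}=\{n,(a_i,k_i,d_i)_{i=1}^n\}$ I would introduce the auxiliary instance $\mathcal{I}'=\{n,(a_i,k_i,a_i)_{i=1}^n\}$ obtained by overwriting each due time with the corresponding arrival time. Under $\mathcal{I}'$, ordering jobs by $d_i$ is the same as ordering them by $a_i$ (after fixing a common tie-breaking rule such as smallest job index first in step~10 of Algorithm \ref{alg1}), so EDD and FCFS produce pathwise-identical task-to-server assignments; and by the identity noted just below \eqref{eq_1} we have $L_{\max}(\bm{y})=D_{\max}(\bm{y})$ for every $\bm{y}\in\mathbb{R}^n$. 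Applying Theorem \ref{thm3} to $\mathcal{I}'$ then yields
\begin{equation*}
[L_{\max}(\bm{V}(\text{EDD}))\,|\,\mathcal{I}']\leq_{\text{st}}[L_{\max}(\bm{C}(\pi))\,|\,\mathcal{I}'],
\end{equation*}
which under the two identifications $\text{EDD}=\text{FCFS}$ and $L_{\max}=D_{\max}$ is exactly the statement \eqref{eq_delaygap1_thm5} on $\mathcal{I}'$; invoking the insensitivity to $d_i$ established in the previous paragraph then lifts this inequality back to the original $\mathcal{I}$.

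There is essentially no substantive mathematical obstacle here, which is exactly why the result is stated as a corollary: all of the sample-path machinery already sits in the proof of Theorem \ref{thm3}. The only mildly delicate point is to commit to a deterministic tie-breaking convention ensuring that EDD run on $\mathcal{I}'$ and FCFS run on $\mathcal{I}$ make identical decisions on every sample path. Once that convention is fixed, the reduction is routine and Corollary \ref{thm5} follows.
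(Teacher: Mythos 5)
Your proposal is correct and is essentially the paper's own argument: the paper omits the proof precisely because Corollary \ref{thm5} follows from Theorem \ref{thm3} by setting $d_i=a_i$ for every job, under which EDD coincides with FCFS and $L_{\max}$ reduces to $D_{\max}$. Your additional remarks on fixing a tie-breaking convention and on the insensitivity of the system dynamics (and of any competing policy's completion times) to the overwritten due times merely make this routine reduction explicit.
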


In addition, if the job sizes satisfy certain conditions, policy FCFS is identical with policy FUT. Hence, policy FCFS is near delay-optimal in stochastic ordering for minimizing both the average delay $D_{\text{avg}}(\cdot)$ and the maximum delay $D_{\max}(\cdot)$. In this case, we can further show that policy FCFS is  near delay-optimal in distribution for minimizing all delay metrics in {$\mathcal{D}_{\text{sym}}\cup\mathcal{D}_{\text{Sch-2}}$}, as stated in the following:

\begin{corollary}\label{coro5_1}
If $k_1\leq k_2 \leq \ldots\leq k_n$ and \emph{$D_{\max}$} is replaced by any \emph{$f\in\mathcal{D}_{\text{sym}}\cup\mathcal{D}_{\text{Sch-2}}$}, Theorem  \ref{thm5} still holds.
\end{corollary}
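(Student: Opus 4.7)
The plan is to obtain Corollary \ref{coro5_1} as an immediate specialization of Corollary \ref{coro_thm3_1}, via the standard device of replacing every due time by the corresponding arrival time. The observation behind the reduction is that Corollary \ref{thm5} is itself the ``$\bm{d}=\bm{a}$'' instance of Theorem \ref{thm3}, and the same transformation should turn Corollary \ref{coro_thm3_1} into the present statement. In other words, I do not expect to introduce any new sample-path argument: everything needed has already been proved for EDD in Corollary \ref{coro_thm3_1}, and the task reduces to bookkeeping.

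Concretely, I would apply Corollary \ref{coro_thm3_1} to the modified instance $\mathcal{I}' = \{n,(a_i,k_i,a_i)_{i=1}^n\}$ in which $d_i$ has been overwritten by $a_i$. Three facts need to be checked. First, on $\mathcal{I}'$ the rule EDD chooses at each decision epoch the job with the smallest arrival time among those with unassigned tasks, which is exactly what FCFS does on $\mathcal{I}$; hence the two policies induce the same (random) task-assignment path, so that the law of $\bm{V}(\text{EDD})$ under $\mathcal{I}'$ is the law of $\bm{V}(\text{FCFS})$ under $\mathcal{I}$. Second, the hypothesis of Corollary \ref{coro_thm3_1}, namely ``$d_1\leq\cdots\leq d_n$ and $k_1\leq\cdots\leq k_n$'', collapses to the single condition $k_1\leq\cdots\leq k_n$, because $a_1\leq\cdots\leq a_n$ is built into the model. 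Third, the class $\mathcal{D}_{\text{Sch-1}}$ evaluated with $\bm{d}=\bm{a}$ is precisely $\mathcal{D}_{\text{Sch-2}}$, since the defining property ``$f(\bm{x}+\bm{d})$ is Schur convex and increasing in $\bm{x}$'' becomes ``$f(\bm{x}+\bm{a})$ is Schur convex and increasing in $\bm{x}$'', while $\mathcal{D}_{\text{sym}}$ does not involve $\bm{d}$ at all and so is unchanged.

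With these identifications, Corollary \ref{coro_thm3_1} applied to $\mathcal{I}'$ gives $[f(\bm{V}(\text{EDD}))|\mathcal{I}']\leq_{\text{st}}[f(\bm{C}(\pi))|\mathcal{I}']$ for every $f\in\mathcal{D}_{\text{sym}}\cup\mathcal{D}_{\text{Sch-1}}(\bm{d}=\bm{a})$ and every $\pi\in\Pi$, which is exactly the desired inequality on $\mathcal{I}$. The only mild obstacle is to be explicit that the specialization is legitimate: one must verify that the joint distribution of $(\bm{V}(\text{FCFS}),\bm{C}(\pi))$ is unaffected by the due-time data. This is immediate, however, because the arrival process, the task service times, and the rule FCFS never read $\bm{d}$, and the metrics appearing on either side of the claim lie in $\mathcal{D}_{\text{sym}}\cup\mathcal{D}_{\text{Sch-2}}$ and hence also avoid $\bm{d}$. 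Once this bookkeeping is spelled out, no further work is required.
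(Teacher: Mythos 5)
Your proposal is correct and is exactly the paper's route: the paper omits the proof, noting that Corollaries \ref{thm5} and \ref{coro5_1} follow from Theorems \ref{thm3} and \ref{coro_thm3_1} by setting $d_i=a_i$ for all $i$, which is precisely your reduction (EDD on the modified instance coincides with FCFS, $d_1\leq\cdots\leq d_n$ is automatic since $0=a_1\leq\cdots\leq a_n$, and $\mathcal{D}_{\text{Sch-1}}$ with $\bm{d}=\bm{a}$ becomes $\mathcal{D}_{\text{Sch-2}}$). Your spelled-out bookkeeping is a faithful elaboration of what the paper leaves implicit.
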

The proofs of Corollaries \ref{thm5} and \ref{coro5_1} are omitted, because they follow directly from Theorems \ref{thm3} and \ref{coro_thm3_1} by
setting $d_i = a_i$ for all job $i$. 

In the sequel, we show that policy FCFS is within twice of the optimum for minimizing the mean maximum delay $\mathbb{E}\left[D_{\max}(\bm{C}(\pi))|\mathcal{I}\right]$ and mean $p$-norm of delay $\mathbb{E}\left[|| \bm{D}(\pi) ||_p|\mathcal{I}\right]$ for $p\geq 1$.\footnote{Because of the maximum operator over all jobs, $\mathbb{E}\left[D_{\max}(\bm{C}(\pi))|\mathcal{I}\right]$ will likely grow to infinite as the number of jobs $n$ increases. Hence, it is difficult to obtain an additive sub-optimality gap for minimizing $\mathbb{E}\left[D_{\max}(\bm{C}(\pi))|\mathcal{I}\right]$ that remains constant for all $n$.} 

\begin{lemma}\label{lem_ratio}
For arbitrarily given job parameters $\mathcal{I}$, if (i) the task service times are NBU and {i.i.d.} across the tasks and  servers, (ii) $f(\bm{x} + \bm{a})$ is sub-additive and increasing in $\bm{x}$, and (iii) two policies $P,\pi\in \Pi$ satisfy
\emph{ 
\begin{align}\label{eq_ratio_0}
 [f(\bm{V}(P))|\mathcal{I}] \leq_{\text{st}} \left[f(\bm{C}(\pi))|\mathcal{I}\right],
\end{align}}
then 
\emph{
\begin{align}\label{eq_ratio_1}
 [f(\bm{C}(P))|\mathcal{I}] \leq_{\text{st}}  \left[ 2 f(\bm{C}(\pi))|\mathcal{I}\right].
\end{align}}
\end{lemma}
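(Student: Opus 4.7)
The plan is to split $f(\bm{C}(P))$ into two pieces using sub-additivity, and then bound each piece on a common coupling by the same copy of $f(\bm{C}(\pi))$. Since $\bm{a}\leq\bm{V}(P)\leq\bm{C}(P)$ by \eqref{eq_V_i_small}, both summands in the decomposition $\bm{C}(P)-\bm{a} = (\bm{V}(P)-\bm{a}) + (\bm{C}(P)-\bm{V}(P))$ are componentwise non-negative, and sub-additivity of $f_2(\bm{x}) := f(\bm{x}+\bm{a})$ yields the pointwise inequality
\begin{equation*}
f(\bm{C}(P)) \,\leq\, f(\bm{V}(P)) \,+\, f\bigl(\bm{C}(P)-\bm{V}(P)+\bm{a}\bigr).
\end{equation*}

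Second, I would bound the residual $R_i(P) := C_i(P)-V_i(P)$ componentwise. At time $V_i(P)$ all $k_i$ tasks of job $i$ have already entered a server, so at most $k_i\wedge m$ of them remain in service (one per server). Writing their completion times as $S_{i,j}+X^{\text{res}}_{i,j}$ and using the elementary bound $\max_j(S_{i,j}+X^{\text{res}}_{i,j}) - \max_j S_{i,j} \leq \max_j X^{\text{res}}_{i,j}$, one obtains $R_i(P) \leq \max_j X^{\text{res}}_{i,j}$ with at most $k_i\wedge m$ terms. By NBU, each such residual is stochastically dominated by a fresh i.i.d.\ copy of the task service time. Under any policy $\pi$, on the other hand, each task of job $i$ requires its full service time after starting, and starts no earlier than $a_i$, so $C_i(\pi)-a_i \geq \max_{j=1,\dots,k_i} X^{(\pi)}_{i,j}$. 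Identifying the freshly generated dominators of the residual times (via the NBU quantile-transform coupling) with any $k_i\wedge m$ of the service times used by $\pi$ on job $i$ produces a coupling on which $R_i(P)+a_i \leq C_i(\pi)$ for every $i$, whence by monotonicity of $f$
\begin{equation*}
f\bigl(\bm{C}(P)-\bm{V}(P)+\bm{a}\bigr) \,\leq\, f(\bm{C}(\pi))\quad\text{a.s.}
\end{equation*}

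Finally, I would overlay this construction with the Strassen coupling supplied by condition (iii), so that $f(\bm{V}(P))\leq f(\bm{C}(\pi))$ a.s.\ against the \emph{same} $\bm{C}(\pi)$ used in the residual bound. Combining the two bounds with the sub-additive decomposition yields $f(\bm{C}(P))\leq 2f(\bm{C}(\pi))$ a.s., giving the stochastic ordering asserted by the lemma. The main obstacle is precisely this last compatibility argument: the two bounds must be realized against a single copy of $\bm{C}(\pi)$, because if we instead used two independent copies $Z_1,Z_2 =_{\text{st}} f(\bm{C}(\pi))$, we would only obtain $f(\bm{C}(P))\leq_{\text{st}} Z_1+Z_2$, and a sum of two i.i.d.\ copies of a random variable is not in general stochastically dominated by twice the variable. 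Carefully specifying the joint distribution of the service-time streams driving $P$ and $\pi$, while respecting the NBU domination of residual times, is the technical crux of the proof.
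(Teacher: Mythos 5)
Your decomposition and intermediate bounds are essentially the paper's own: it too writes $f(\bm{C}(P))\le f(\bm{V}(P))+f\bigl(\bm{C}(P)-\bm{V}(P)+\bm{a}\bigr)$ via sub-additivity of $g_2(\bm{x})=f(\bm{x}+\bm{a})$, bounds each residual $C_i(P)-V_i(P)$ by the maximum of at most $k_i\wedge m$ remaining service times, uses the NBU property to dominate these by fresh copies, and then dominates the result by $\bm{C}(\pi)-\bm{a}$ because each of the $k_i$ tasks of job $i$ receives a full service time after $a_i$ under non-preemption. The main technical difference up to this point is that the paper establishes the full $n$-dimensional stochastic ordering $[\bm{C}(P)-\bm{V}(P)\,|\,\mathcal{I}]\le_{\text{st}}[\bm{C}(\pi)-\bm{a}\,|\,\mathcal{I}]$ carefully, conditioning on the serving sets and elapsed times, invoking a conditional-independence lemma for the residuals, and inducting via Theorem 6.B.3 of \cite{StochasticOrderBook} before applying the increasing map $g_2$; your version of this step is stated componentwise and would need the same bookkeeping to preserve the marginal laws.

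The genuine gap is the step you yourself flag as the crux and then leave open: you never construct the single coupling under which both $f(\bm{V}(P))\le f(\bm{C}(\pi))$ and $f\bigl(\bm{C}(P)-\bm{V}(P)+\bm{a}\bigr)\le f(\bm{C}(\pi))$ hold against the \emph{same} realization of $\bm{C}(\pi)$. ``Overlaying'' the Strassen coupling from hypothesis \eqref{eq_ratio_0} with your residual coupling is not automatic: \eqref{eq_ratio_0} constrains only marginal distributions, while your residual construction already fixes a joint law for $(\bm{C}(P)-\bm{V}(P),\bm{C}(\pi))$, and two such requirements need not be simultaneously realizable — as your own Bernoulli-type remark shows, the two marginal orderings alone cannot deliver $A+B\le_{\text{st}}2Z$. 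So the proposal is incomplete exactly where it goes beyond the shared intermediate bounds. It is worth knowing that the paper does not perform such an overlay either: its proof ends by substituting \eqref{eq_ratio_0} and the residual bound \eqref{eq_order_good8} directly into the pointwise inequality \eqref{eq_lem_ratio_3}. That combination immediately gives the factor-two bound \emph{in expectation} (which is all that the mean statements of Theorems \ref{thm_max_delay_ratio} and \ref{thm_max_delay_ratio_1} require, by linearity), but it glosses over precisely the distributional compatibility issue you identified. In short: your argument matches the paper through the residual domination, correctly isolates a real subtlety in the final combination, but does not resolve it; either supply the joint construction you allude to, or restrict the claim to the expectation version, which follows from your bounds with no coupling at all.
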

\begin{proof}
See Appendix \ref{app_lem_ratio}.
\end{proof}

By applying Lemma \ref{lem_ratio} in Corollaries \ref{thm5} and \ref{coro5_1}, it is easy to obtain
\begin{theorem}\label{thm_max_delay_ratio}
If the task service times are NBU and {i.i.d.} across the tasks and  servers, then for all $\mathcal{I}$
\emph{
\begin{align}\label{}
&[D_{\max}(\bm{C}(\text{FCFS}))|\mathcal{I}] \leq_{\text{st}} [2 D_{\max}(\bm{C}(\pi))|\mathcal{I}],~\forall~\pi\in\Pi. \nonumber\\
&\mathbb{E}[D_{\max}(\bm{C}(\text{FCFS}))|\mathcal{I}] \leq\min_{\pi\in\Pi} 2 \mathbb{E}\left[D_{\max}(\bm{C}(\pi))|\mathcal{I}\right].\nonumber
\end{align}}
\end{theorem}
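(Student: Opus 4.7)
The plan is to reduce the theorem to a direct application of Lemma~\ref{lem_ratio}, using Corollary~\ref{thm5} to supply the hypothesis \eqref{eq_ratio_0}. First I would verify the structural condition on $f = D_{\max}$. By definition $D_{\max}(\bm{C}) = \max_{i}(C_i - a_i)$, so shifting by the arrival-time vector yields $D_{\max}(\bm{x} + \bm{a}) = \max_i x_i$. This shifted function is coordinatewise increasing in $\bm{x}$, and it is sub-additive because $\max_i(x_i + y_i) \leq \max_i x_i + \max_i y_i$ for all $\bm{x}, \bm{y} \in \mathbb{R}^n$. Thus condition~(ii) of Lemma~\ref{thm_max_delay_ratio}'s prerequisite Lemma~\ref{lem_ratio} holds for $f = D_{\max}$.

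Next I would supply hypothesis~(iii) of Lemma~\ref{lem_ratio} with $P = \text{FCFS}$ and an arbitrary $\pi \in \Pi$. Corollary~\ref{thm5} states exactly that $[D_{\max}(\bm{V}(\text{FCFS}))|\mathcal{I}] \leq_{\text{st}} [D_{\max}(\bm{C}(\pi))|\mathcal{I}]$ under the NBU and i.i.d.\ service-time assumptions; note that the present theorem's assumption that task service times are NBU and i.i.d.\ across tasks and servers is strictly stronger than Corollary~\ref{thm5}'s hypothesis (independence across servers plus i.i.d.\ across tasks at the same server), so Corollary~\ref{thm5} is applicable. With all three hypotheses of Lemma~\ref{lem_ratio} verified, invoking it yields
\[
[D_{\max}(\bm{C}(\text{FCFS}))|\mathcal{I}] \leq_{\text{st}} [2\, D_{\max}(\bm{C}(\pi))|\mathcal{I}], \quad \forall\, \pi \in \Pi,
\]
which is the first assertion of the theorem.

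For the mean inequality, I would use the characterization \eqref{eq_order_property} of the usual stochastic order: applying it with the increasing function $\phi(x) = x$ (equivalently, integrating the survival functions) to the display above gives $\mathbb{E}[D_{\max}(\bm{C}(\text{FCFS}))|\mathcal{I}] \leq 2\,\mathbb{E}[D_{\max}(\bm{C}(\pi))|\mathcal{I}]$ for every $\pi \in \Pi$. Taking the infimum over $\pi$ on the right-hand side produces the second claim, $\mathbb{E}[D_{\max}(\bm{C}(\text{FCFS}))|\mathcal{I}] \leq \min_{\pi \in \Pi} 2\,\mathbb{E}[D_{\max}(\bm{C}(\pi))|\mathcal{I}]$.

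The argument is essentially bookkeeping once Lemma~\ref{lem_ratio} and Corollary~\ref{thm5} are in hand; the only nontrivial checks are the sub-additivity and monotonicity of the shifted function $D_{\max}(\bm{x}+\bm{a})$, both of which collapse to elementary properties of the $\max$ function because the arrival-time shift cancels the dependence on $\bm{a}$. The real technical content has already been absorbed into Lemma~\ref{lem_ratio} itself (which leverages NBU to bound $\bm{C} - \bm{V}$ by an independent copy of the service-time process) and into Corollary~\ref{thm5} (which establishes the near-optimal sample-path ordering of FCFS against an arbitrary causal, non-preemptive policy). Consequently, I do not anticipate a substantive obstacle at this final step.
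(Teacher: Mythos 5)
Your proposal is correct and follows the paper's own route exactly: the paper obtains Theorem \ref{thm_max_delay_ratio} by applying Lemma \ref{lem_ratio} with $f=D_{\max}$ (whose shifted form $D_{\max}(\bm{x}+\bm{a})=\max_i x_i$ is increasing and sub-additive) together with Corollary \ref{thm5} supplying hypothesis \eqref{eq_ratio_0}, and the mean inequality then follows from the stochastic ordering. Your verification of the hypotheses, including that the theorem's i.i.d.-across-tasks-and-servers assumption implies Corollary \ref{thm5}'s weaker independence condition, matches what the paper leaves implicit.
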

\begin{theorem}\label{thm_max_delay_ratio_1}
If (i) $k_1\leq k_2 \leq \ldots\leq k_n$, (ii) the task service times are NBU and {i.i.d.} across the tasks and  servers, (iii) $f(\bm{x} + \bm{a})$ is sub-additive and increasing in $\bm{x}$, (iv) \emph{$f\in\mathcal{D}_{\text{sym}}\cup\mathcal{D}_{\text{Sch-2}}$}, then for all $\mathcal{I}$
\emph{
\begin{align}
&[f(\bm{C}(\text{FCFS}))|\mathcal{I}] \leq_{\text{st}} [2 f(\bm{C}(\pi))|\mathcal{I}],~\forall~\pi\in\Pi. \nonumber\\
&\mathbb{E}[f(\bm{C}(\text{FCFS}))|\mathcal{I}] \leq\min_{\pi\in\Pi} 2 \mathbb{E}\left[f(\bm{C}(\pi))|\mathcal{I}\right].\nonumber
\end{align}}
\end{theorem}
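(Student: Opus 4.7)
The plan is to derive Theorem~\ref{thm_max_delay_ratio_1} as an immediate composition of two results that are already in hand: Corollary~\ref{coro5_1} (the stochastic comparison between $\bm{V}(\text{FCFS})$ and $\bm{C}(\pi)$ under $f\in\mathcal{D}_{\text{sym}}\cup\mathcal{D}_{\text{Sch-2}}$ when $k_1\leq k_2\leq \cdots\leq k_n$) and Lemma~\ref{lem_ratio} (the generic ``$V\leq_{\text{st}} C$-comparison implies factor-of-two $C\leq_{\text{st}} 2C$ comparison'' for sub-additive delay metrics). No new sample-path argument is needed; the proof is really a bookkeeping step that verifies all hypotheses of Lemma~\ref{lem_ratio} are provided by the hypotheses of the theorem.

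First I would fix an arbitrary $\pi\in\Pi$ and invoke Corollary~\ref{coro5_1}: since hypotheses (i), (ii), and (iv) of Theorem~\ref{thm_max_delay_ratio_1} match the premises of that corollary exactly, we immediately obtain
\begin{align}
[f(\bm{V}(\text{FCFS}))\,|\,\mathcal{I}] \leq_{\text{st}} [f(\bm{C}(\pi))\,|\,\mathcal{I}].\nonumber
\end{align}
Next, I would feed this into Lemma~\ref{lem_ratio} with $P=\text{FCFS}$. The lemma requires three things: NBU and i.i.d.\ service times (supplied by hypothesis (ii)), sub-additivity and monotonicity of $f(\bm{x}+\bm{a})$ in $\bm{x}$ (supplied by hypothesis (iii)), and the just-established stochastic dominance between $f(\bm{V}(P))$ and $f(\bm{C}(\pi))$. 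Applying the lemma yields
\begin{align}
[f(\bm{C}(\text{FCFS}))\,|\,\mathcal{I}] \leq_{\text{st}} [2\,f(\bm{C}(\pi))\,|\,\mathcal{I}],\nonumber
\end{align}
which is the first claim of the theorem.

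For the second claim, I would use the standard consequence of stochastic ordering recorded in \eqref{eq_order_property}: taking $\phi$ to be the identity (which is increasing) converts the $\leq_{\text{st}}$ statement into an inequality of expectations, giving $\mathbb{E}[f(\bm{C}(\text{FCFS}))\,|\,\mathcal{I}] \leq 2\,\mathbb{E}[f(\bm{C}(\pi))\,|\,\mathcal{I}]$ for every $\pi\in\Pi$. Since the left-hand side does not depend on $\pi$, taking the minimum of the right-hand side over $\pi\in\Pi$ produces the desired bound. Because $\pi$ was arbitrary throughout, both conclusions hold uniformly over $\Pi$.

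There is essentially no obstacle here beyond making sure that every hypothesis of Lemma~\ref{lem_ratio} is actually granted by the hypotheses of Theorem~\ref{thm_max_delay_ratio_1}; the only thing worth double-checking is that condition (iv) (so that Corollary~\ref{coro5_1} applies) and condition (iii) (needed specifically for sub-additivity in Lemma~\ref{lem_ratio}) are stated as separate hypotheses and neither is implied by the other, which is the reason both appear in the theorem statement. All the genuine technical content---the NBU coupling underlying the $V$-vs-$C$ comparison and the sub-additivity trick converting the $V$ bound into a $2C$ bound---has already been established in Corollary~\ref{coro5_1} and Lemma~\ref{lem_ratio} respectively, so no further work is required.
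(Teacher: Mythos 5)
Your proposal is correct and matches the paper's own route exactly: the paper obtains Theorem \ref{thm_max_delay_ratio_1} precisely by combining Corollary \ref{coro5_1} with Lemma \ref{lem_ratio} (taking $P=\text{FCFS}$), and the passage from stochastic ordering to the expectation bound is the standard consequence of \eqref{eq_order_property}. Your hypothesis bookkeeping (in particular that (ii) covers both the corollary and the lemma, and that (iii) and (iv) are separate, non-redundant conditions) is accurate.
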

There exist a fair large class of delay metrics satisfying the conditions (iii) and (iv) of Theorem \ref{thm_max_delay_ratio_1}. For example, consider the {$p$-norm of delay} $D_p(\bm{C}(\pi))=D_p(\bm{D}(\pi)+\bm{a}) = || \bm{D}(\pi) ||_p$ for $p\geq 1$. 
It is known that any norm is sub-additive, i.e., $|| \bm{x} + \bm{y} || \leq || \bm{x} || + || \bm{y} ||$. In addition, $|| \bm{D}(\pi) ||_p$ is symmetric and convex  in the delay vector $\bm{D}(\pi)$, and hence is Schur-convex in $\bm{D}(\pi)$. Therefore,
 $f = {D}_{p}$ satisfies the conditions (iii) and (iv) of Theorem \ref{thm_max_delay_ratio_1} for all $p\geq 1$.




\section{Proofs of the Main Results} \label{sec_proofmain}
We propose a unified sample-path method to prove the main results in the previous section. This method contains three steps: 

\begin{itemize}
\item[\emph{1.}] \textbf{Sample-Path Orderings:} We first introduce several novel sample-path orderings (Propositions \ref{ordering_2}, \ref{ordering_3_1}, and Corollary \ref{ordering_4_1}).
Each of these sample-path orderings can be used to obtain a delay inequality for comparing the delay performance (i.e., average delay, maximum lateness, or maximum delay) of different policies in a near-optimal sense.
\item[\emph{2.}]  \textbf{Sufficient Conditions for Sample-Path Orderings:} In order to minimize delay, it is important to execute the tasks as fast as possible. Motivated by this, we introduce a weak work-efficiency ordering to compare the efficiency of task executions in different policies. By combining this weak work-efficiency ordering with appropriate priority rules (i.e., FUT, EDD, FCFS) for job services, we obtain several sufficient conditions  (Propositions \ref{lem1}-\ref{lem2}) of the sample-path orderings in \emph{Step 1}. In addition, if more than one sufficient conditions are satisfied simultaneously, we are able to obtain delay inequalities for comparing more general classes of delay metrics achieved by different policies (Propositions \ref{lem1_1}-\ref{lem_general}).

\item[\emph{3.}]  \textbf{Coupling Arguments:} We use coupling arguments to prove that for NBU task service times, the weak work-efficiency ordering is satisfied in the sense of stochastic ordering. By combining this with the priority rules (i.e., FUT, EDD, FCFS) for job services, we are able to prove the sufficient conditions in \emph{Step 2} in the sense of stochastic ordering. By this, the main results of this paper are proven. 
\end{itemize}
This sample-path method is quite general. In particular, \emph{Step 1} and \emph{Step 2} do not need to specify the queueing system model, and can be potentially used for establishing near delay optimality results in other queueing systems. One application of this sample-path method is in \cite{Yin_report2016}, which considered the problem of scheduling in queueing systems with replications.

\subsection{Step 1: Sample-path Orderings}
\label{sec_sufficient}

We first propose several sample-path orderings to compare the delay performance of different scheduling policies. Let us first define the system state of any policy $\pi\in\Pi$. 

\begin{definition}\label{def_state_thm3}
At any time instant $t\in[0,\infty)$, the system state of policy $\pi$ is specified by a pair of $n$-dimensional vectors $\bm{\xi}_{\pi}(t)=(\xi_{1,\pi}(t),\ldots,\xi_{n,\pi}(t))$ and $\bm{\gamma}_{\pi}(t)=(\gamma_{1,\pi}(t),\ldots,\gamma_{n,\pi}(t))$ with non-negative components, where $n$ is the total number of jobs and can be either finite or infinite. The components of $\bm{\xi}_{\pi}(t)$ and $\bm{\gamma}_{\pi}(t)$ are interpreted as follows:
If job $i$ is present in the system at time $t$, then $\xi_{i,\pi}(t)$ is the number of \emph{remaining} tasks (which are either stored in the queue or being executed by the servers) of job $i$, and $\gamma_{i,\pi}(t)$ is the number of \emph{unassigned} tasks (which are stored in the queue and not being executed by the servers) of job $i$; if job $i$ is not present in the system at time $t$ (i.e., job $i$ has not arrived at the system or has departed from the system), then $\xi_{i,\pi}(t)=\gamma_{i,\pi}(t)=0$. Hence, for all $i=1,\ldots,n$, $\pi\in\Pi$, and $t\in[0,\infty)$
\begin{align}\label{def_relation}
0\leq \gamma_{i,\pi}(t)\leq \xi_{i,\pi}(t)\leq k_i.
\end{align}

Let $\{\bm{\xi}_{\pi}(t),\bm{\gamma}_{\pi}(t),t\in[0,\infty)\}$ denote the state process of policy $\pi$ in a probability space $(\Omega,\mathcal{F},P)$, which is assumed to be right-continuous. The realization of the state process on a sample path $\omega\in \Omega$ can be expressed as $\{\bm{\xi}_{\pi}(\omega,t),$ $\bm{\gamma}_{\pi}(\omega,t),t\in[0,\infty)\}$. To ease the notational burden, we will omit $\omega$ henceforth and reuse $\{\bm{\xi}_{\pi}(t),\bm{\gamma}_{\pi}(t),t\in[0,\infty)\}$ to denote the realization of the state process on a sample path. Because the system starts to operate at time $t=0$,  there is no job in the system before time $t=0$. Hence, 
$\bm{\xi}_{\pi}(0^-) = \bm{\gamma}_{\pi}(0^-)=\bm{0}$. 
\end{definition}

%





The following proposition provides one condition \eqref{eq_ordering_1_1} for comparing the average delay $D_{\text{avg}}(\bm{c}(\pi))$ of different policies on a sample path, which was firstly used in \cite{Smith78} to prove the optimality of the preemptive SRPT policy to minimize the average delay in single-server queueing systems. 
 \begin{proposition} \label{ordering_1} 
For any given job parameters $\mathcal{I}$ and a sample path of two policies $P,\pi\in\Pi$, 
if 
\begin{eqnarray}\label{eq_ordering_1_1}
\sum_{i=j}^n {\xi}_{[i],P}(t)\leq \sum_{i=j}^n {\xi}_{[i],\pi}(t),~\forall~j = 1,2,\ldots,n,
\end{eqnarray}
holds for all $t\in[0,\infty)$,\footnote{In majorization theory \cite{Marshall2011},  \eqref{eq_ordering_1_1} is equivalent to ``$\bm{\xi}_{\pi}(t)$ is weakly supermajorized by $\bm{\xi}_{P}(t)$, i.e., $\bm{\xi}_{\pi}(t)\prec^{\text{w}}\bm{\xi}_{P}(t)$''.} where ${\xi}_{[i],\pi}(t)$ is the $i$-th largest component of $\bm{\xi}_{\pi}(t)$, then  
\begin{align}\label{eq_ordering_1_1_2}
c_{(i)}(P)\leq c_{(i)}(\pi),~\forall~i=1,2,\ldots,n,
\end{align}
where $c_{(i)}(\pi)$ is the $i$-th smallest component of $\bm{c}(\pi)$.\footnote{In other words, $c_{(i)}(\pi)$ is the earliest time in policy $\pi$ by which $i$ jobs have been completed.} Hence,
\emph{
\begin{align}\label{eq_ordering_1_2}
D_{\text{avg}}(\bm{c}(P))\leq D_{\text{avg}}(\bm{c}(\pi)).
\end{align}
}
\end{proposition}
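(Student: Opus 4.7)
The plan is to reduce the sample-path delay inequality to a comparison of the number of jobs present in the system under the two policies. For any policy $\pi$, define $N_{\pi}(t)=|\{i:\xi_{i,\pi}(t)>0\}|$, the number of jobs in the system at time $t$, and let $A(t)=|\{i:a_i\leq t\}|$ be the (policy-independent) number of arrivals by time $t$. Because $\xi_{i,\pi}(t)=0$ both for jobs that have not yet arrived and for jobs that have already completed, the number of completed jobs by time $t$ equals $M_\pi(t)=A(t)-N_\pi(t)$. Using right-continuity of the state process, one then has $c_{(i)}(\pi)=\inf\{t\geq 0:M_\pi(t)\geq i\}$.

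The first step I would carry out is to show that the hypothesis \eqref{eq_ordering_1_1} implies $N_P(t)\leq N_\pi(t)$ for every $t\geq 0$. Fix $t$ and set $k=N_P(t)$. By definition, $\xi_{[k],P}(t)\geq 1$, so $\sum_{i=k}^n \xi_{[i],P}(t)\geq 1$. Applying \eqref{eq_ordering_1_1} with $j=k$ gives $\sum_{i=k}^n \xi_{[i],\pi}(t)\geq 1$. Since $\xi_{i,\pi}(t)$ is integer-valued and the terms $\xi_{[k],\pi}(t)\geq\xi_{[k+1],\pi}(t)\geq\cdots\geq\xi_{[n],\pi}(t)$ are arranged in decreasing order, positivity of the tail sum forces $\xi_{[k],\pi}(t)\geq 1$, and therefore $\xi_{[j],\pi}(t)\geq 1$ for every $j\leq k$. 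This says that at least $k$ jobs have positive remaining work under $\pi$, i.e., $N_\pi(t)\geq k=N_P(t)$.

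The second step is to combine this with the identity $M_\pi(t)=A(t)-N_\pi(t)$ to obtain $M_P(t)\geq M_\pi(t)$ for all $t$. Taking the infimum of $\{t:M(t)\geq i\}$ under each policy yields $c_{(i)}(P)\leq c_{(i)}(\pi)$ for every $i$, which is exactly \eqref{eq_ordering_1_1_2}. The average-delay inequality \eqref{eq_ordering_1_2} then follows from the elementary identity $\sum_{i=1}^n c_i(\pi)=\sum_{i=1}^n c_{(i)}(\pi)$ and the fact that the arrival times $a_i$ do not depend on the policy: summing $c_{(i)}(P)\leq c_{(i)}(\pi)$ over $i$ and subtracting $\sum_i a_i$ before dividing by $n$ gives $D_{\text{avg}}(\bm{c}(P))\leq D_{\text{avg}}(\bm{c}(\pi))$.

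There is no real analytical obstacle in this argument; the only points requiring care are bookkeeping ones. One must verify that $\xi_{i,\pi}(t)=0$ is consistent with jobs both pre-arrival and post-completion (so that $A(t)-N_\pi(t)$ really counts completed jobs), and one must invoke right-continuity of $\{\bm{\xi}_\pi(t)\}$ to ensure that the infimum defining $c_{(i)}(\pi)$ is attained. With these conventions in place, the proof is a two-line consequence of the majorization-type hypothesis together with the integrality of task counts.
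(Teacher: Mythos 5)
Your argument is correct and is essentially the paper's own proof: both reduce the hypothesis to the statement that at every time $t$ the number of unfinished jobs under $P$ is at most that under $\pi$ (the paper argues via the tail sum being zero past the number of unfinished jobs in $\pi$, you via the contrapositive direction with $k=N_P(t)$), and both then use the policy-invariance of the arrival times to conclude $c_{(i)}(P)\leq c_{(i)}(\pi)$ and hence the average-delay inequality. Your extra bookkeeping with $M_\pi(t)=A(t)-N_\pi(t)$ and the infimum characterization of $c_{(i)}$ just makes explicit what the paper states in words.
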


\begin{proof}
Suppose that there are $l$ unfinished jobs at time $t$ in policy $\pi$, then $\sum_{i=l+1}^n {\xi}_{[i],\pi}(t)=0$. By \eqref{eq_ordering_1_1}, we get $\sum_{i=l+1}^n {\xi}_{[i],P}(t)=0$ and hence there are at most $l$ unfinished jobs in policy $P$. In other words, there are at least as many unfinished jobs in policy $\pi$ as in policy $P$ at any time $t\in[0,\infty)$. This implies \eqref{eq_ordering_1_1_2}, because the sequence of job arrival times $a_1,a_2,\ldots, a_n$ are invariant under any policy. In addition, \eqref{eq_ordering_1_2} follows from \eqref{eq_ordering_1_1_2}, which completes the proof.
\end{proof}

The sample-path ordering  \eqref{eq_ordering_1_1} is quite insightful. According to Proposition \ref{ordering_1}, if \eqref{eq_ordering_1_1} holds for all policies $\pi\in\Pi$ and all sample paths $\omega\in \Omega$, then policy $P$ is sample-path delay-optimal for minimizing the average delay $D_{\text{avg}}(\bm{c}(\pi))$. 
Interestingly, Proposition \ref{ordering_1} is also necessary: If \eqref{eq_ordering_1_1} does not hold at some time $t$, then one can construct an arrival process after time $t$ such that \eqref{eq_ordering_1_1_2} and \eqref{eq_ordering_1_2} do not hold  \cite{Smith78}. 

The sample-path ordering  \eqref{eq_ordering_1_1} has been successfully used to analyze single-server queueing systems \cite{Smith78}. However, applying this condition in multi-server queueing systems is challenging. 
Due to this fundamental difficulty, we consider an alternative method to relax the sample-path ordering \eqref{eq_ordering_1_1} and seek for near delay optimality.

 \begin{proposition} \label{ordering_2} 
For any given job parameters $\mathcal{I}$ and a sample path of two policies $P,\pi\in\Pi$, if 
\begin{eqnarray}\label{eq_ordering_2_1}
\sum_{i=j}^n {\gamma}_{[i],P}(t)\leq \sum_{i=j}^n {\xi}_{[i],\pi}(t),~\forall~j = 1,2,\ldots,n,
\end{eqnarray}
holds for all $t\in[0,\infty)$, where ${\gamma}_{[i],\pi}(t)$ is the $i$-th largest component of $\bm{\gamma}_{\pi}(t)$, then 
\begin{align}\label{eq_ordering_2_2}
v_{(i)}(P)\leq c_{(i)}(\pi),~\forall~i=1,2,\ldots,n,
\end{align}
where $v_{(i)}(P)$ is the $i$-th smallest component of $\bm{v}(P)$.\footnote{In other words, $v_{(i)}(P)$ is the earliest time in policy $P$ that there exist $i$ jobs whose tasks have all started service.}
Hence, 
\emph{
\begin{align}\label{eq_ordering_2_3}
D_{\text{avg}} (\bm{v}(P))\leq D_{\text{avg}}(\bm{c}(\pi)).
\end{align}}
\end{proposition}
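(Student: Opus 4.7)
The plan is to mirror the proof of Proposition~\ref{ordering_1} closely, replacing completion-time accounting on the policy-$P$ side with $V$-time accounting. The central observation is that the hypothesis \eqref{eq_ordering_2_1} is tailor-made to compare the count of jobs with unassigned tasks under $P$ to the count of unfinished jobs under $\pi$: at any time $t$, if $l$ denotes the number of jobs $i$ with $\xi_{i,\pi}(t)>0$, then $\sum_{i=l+1}^n \xi_{[i],\pi}(t)=0$, and taking $j=l+1$ in \eqref{eq_ordering_2_1} forces $\sum_{i=l+1}^n \gamma_{[i],P}(t)=0$. Consequently, the number of jobs with $\gamma_{i,P}(t)>0$ is at most $l$.

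Next I would translate this count inequality into statements about $V_i(P)$ and $C_i(\pi)$. Since the arrival sequence $a_1\leq\cdots\leq a_n$ is common to both policies (it is part of $\mathcal{I}$), the set of jobs present in the system at time $t$ is the same under $P$ and $\pi$. Among these arrived jobs, the ones with $\xi_{i,\pi}(t)=0$ are precisely those already completed by $\pi$, i.e., with $C_i(\pi)\leq t$, while the ones with $\gamma_{i,P}(t)=0$ are precisely those whose tasks have all entered service under $P$, i.e., with $V_i(P)\leq t$ (here I use that $V_i\geq a_i$, so jobs that have not yet arrived are excluded from both counts). The previous paragraph therefore gives
\begin{equation*}
\big|\{i:V_i(P)\leq t\}\big|\;\geq\;\big|\{i:C_i(\pi)\leq t\}\big|,\qquad \forall\,t\in[0,\infty).
\end{equation*}

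From this pointwise count inequality, the order-statistics bound \eqref{eq_ordering_2_2} follows by the standard argument: for any $i$, setting $t=c_{(i)}(\pi)$ makes the right-hand count at least $i$, hence the left-hand count is at least $i$, which forces $v_{(i)}(P)\leq t = c_{(i)}(\pi)$. Summing over $i$ yields $\sum_i v_i(P)=\sum_i v_{(i)}(P)\leq \sum_i c_{(i)}(\pi)=\sum_i c_i(\pi)$, and dividing by $n$ after subtracting $\sum_i a_i$ from both sides gives \eqref{eq_ordering_2_3}. I do not anticipate any substantive obstacle, since the argument is a direct analog of Proposition~\ref{ordering_1}; the only subtle point is making sure to restrict the ``$\gamma_{i,P}(t)=0$'' count to arrived jobs before identifying it with $|\{i:V_i(P)\leq t\}|$, and this is automatic because non-arrived jobs have $\gamma_{i,P}(t)=0$ under \emph{both} policies and $V_i(P)\geq a_i>t$ in that case.
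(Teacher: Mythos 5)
Your proposal is correct and follows essentially the same argument as the paper: use the hypothesis at the index where the $\pi$-side order statistics vanish to conclude that the number of jobs with unassigned tasks under $P$ is at most the number of unfinished jobs under $\pi$, then convert this count comparison (using the common arrival sequence) into $v_{(i)}(P)\leq c_{(i)}(\pi)$ at $t=c_{(i)}(\pi)$, and average to get \eqref{eq_ordering_2_3}. The only cosmetic difference is that you establish the pointwise count inequality for all $t$ first and then specialize, whereas the paper works directly at $t=c_{(j)}(\pi)$; the content is identical.
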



\begin{proof}
See Appendix \ref{app_lem2}. 
\end{proof}

Hence, by relaxing the sample-path ordering \eqref{eq_ordering_1_1} as \eqref{eq_ordering_2_1}, a relaxed delay inequality \eqref{eq_ordering_2_3} is obtained which can be used to compare the average delay of policy $P$ and policy $\pi$ in a near-optimal sense. 



Similarly, we develop two sample-path orderings in the following two lemmas to compare the maximum lateness $L_{\max}(\cdot)$ achieved by different policies.




 \begin{proposition} \label{ordering_3} 
For any given job parameters $\mathcal{I}$ and a sample path of two policies $P,\pi\in\Pi$, if 
\begin{eqnarray}\label{eq_ordering_3_1}
\sum_{i: d_i \leq \tau} {\xi}_{i,P}(t)\leq \sum_{i: d_i \leq \tau} {\xi}_{i,\pi}(t),~\forall~\tau\in[0,\infty),
\end{eqnarray}
holds for all $t\in[0,\infty)$, then 
\emph{
\begin{align}\label{eq_ordering_3_2}
L_{\max} (\bm{c}(P))\leq L_{\max}(\bm{c}(\pi)).
\end{align}}
\end{proposition}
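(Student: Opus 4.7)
The plan is to reduce the maximum lateness comparison to a completion-time comparison job by job. Set $L^{\ast} := L_{\max}(\bm{c}(\pi))$; this is the target upper bound. It suffices to show that for every job $i$, the lateness $c_i(P) - d_i$ under $P$ does not exceed $L^{\ast}$, since taking the maximum (or supremum, if $n=\infty$) over $i$ then yields \eqref{eq_ordering_3_2}.

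Fix an arbitrary job $i$. The key step is to instantiate the hypothesis \eqref{eq_ordering_3_1} at the pair $(\tau, t) = (d_i,\, d_i + L^{\ast})$. By the very definition of $L^{\ast}$, every job $j$ satisfies $c_j(\pi) \leq d_j + L^{\ast}$, so for each $j$ with $d_j \leq d_i$ we have $c_j(\pi) \leq d_i + L^{\ast}$. By Definition \ref{def_state_thm3}, a job that has departed has residual task count zero, hence
\[
\sum_{j:\, d_j \leq d_i} \xi_{j,\pi}(d_i + L^{\ast}) \;=\; 0 .
\]
The hypothesis then forces the corresponding sum under $P$ to be zero as well, and because the summands are nonnegative, each $\xi_{j,P}(d_i + L^{\ast})$ must vanish individually. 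In particular $\xi_{i,P}(d_i + L^{\ast}) = 0$, so job $i$ has been completed under $P$ by time $d_i + L^{\ast}$, which gives $c_i(P) - d_i \leq L^{\ast}$.

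Since $i$ was arbitrary, $L_{\max}(\bm{c}(P)) \leq L^{\ast} = L_{\max}(\bm{c}(\pi))$, establishing the claim. There is no substantive obstacle in this argument: the only nonobvious choice is the threshold pair $(d_i,\, d_i + L^{\ast})$, which aligns the deadline slice used in the hypothesis with the precise time by which policy $\pi$ has necessarily cleared every job in that slice. The reasoning is purely sample-path and deterministic, mirrors the structure of Proposition \ref{ordering_2} (with the deadline slice $\{j : d_j \leq \tau\}$ playing the role that the top-order statistics of the residual vector played there), and invokes neither the NBU hypothesis nor any stochastic comparison.
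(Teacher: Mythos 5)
Your proof is correct, and it takes a more direct route than the paper's. The paper argues by contradiction: for each job $w_j$ completed under $P$, it assumes every job with due time at most $d_{w_j}$ and arrival before $c_{w_j}(P)$ finishes strictly before $c_{w_j}(P)$ under $\pi$, defines $\tau_j$ as the latest such completion time, and applies \eqref{eq_ordering_3_1} with threshold $d_{w_j}$ at time $\tau_j$ to force $\xi_{w_j,P}(\tau_j)=0$, a contradiction; this produces a job $i$ with $d_i\leq d_{w_j}$ and $c_{w_j}(P)\leq c_i(\pi)$, from which the lateness bound follows. You instead instantiate \eqref{eq_ordering_3_1} directly at $(\tau,t)=(d_i,\,d_i+L^{\ast})$ with $L^{\ast}=L_{\max}(\bm{c}(\pi))$, where the $\pi$-side slice sum is zero by the definition of $L^{\ast}$, so nonnegativity forces the $P$-side to vanish and job $i$ is done by $d_i+L^{\ast}$. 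The underlying mechanism is the same in both arguments (choose a time at which the deadline slice $\{j: d_j\leq \tau\}$ has been cleared under $\pi$), but your choice of evaluation time removes the contradiction scaffolding and shortens the proof. One detail you should make explicit: $\xi_{i,P}(d_i+L^{\ast})=0$ could a priori also mean that job $i$ has not yet arrived, since Definition \ref{def_state_thm3} assigns zero state to absent jobs; because arrival times are policy-independent and $a_i\leq c_i(\pi)\leq d_i+L^{\ast}$, job $i$ has arrived by time $d_i+L^{\ast}$ under $P$, so the zero state does certify completion. (Also, in the $n\to\infty$ case $L^{\ast}$ may be $+\infty$, in which case \eqref{eq_ordering_3_2} is vacuous and nothing needs to be proved.) With those remarks added, the argument is complete.
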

\begin{proof}
See Appendix \ref{app_lem3}. 
\end{proof}
\begin{proposition} \label{ordering_3_1} 
For any given job parameters $\mathcal{I}$ and a sample path of two policies $P,\pi\in\Pi$, if 
\begin{eqnarray}\label{eq_ordering_3_3}
\sum_{i: d_i \leq \tau} {\gamma}_{i,P}(t)\leq \sum_{i: d_i \leq \tau} {\xi}_{i,\pi}(t),~\forall~\tau\in[0,\infty),
\end{eqnarray}
holds for all $t\in[0,\infty)$, then 
\emph{
\begin{align}\label{eq_ordering_3_4}
L_{\max} (\bm{v}(P))\leq L_{\max}(\bm{c}(\pi)).
\end{align}} 
\end{proposition}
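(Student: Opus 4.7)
The plan is to mirror the argument for Proposition~\ref{ordering_3}, substituting $\bm{v}(P)$ for $\bm{c}(P)$ and $\gamma_{i,P}$ for $\xi_{i,P}$ on the ``$P$-side'' everywhere. The key observation is the translation between the state process and the completion/entry times: by right-continuity of the state process and the meaning of $\xi$ and $\gamma$, we have
\begin{equation*}
c_i(\pi) \;=\; \inf\{t\geq 0 : \xi_{i,\pi}(t) = 0\}, \qquad v_i(P) \;=\; \inf\{t \geq 0 : \gamma_{i,P}(t) = 0 \text{ after the arrival of job } i\}.
\end{equation*}
So $\xi_{i,\pi}(t)=0$ iff job $i$ has finished service by time $t$ under $\pi$, and $\gamma_{i,P}(t)=0$ iff every task of job $i$ has at least entered a server by time $t$ under $P$.

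With this, the proof proceeds as follows. Set $\tau^\ast := L_{\max}(\bm{c}(\pi)) = \max_i \big(c_i(\pi) - d_i\big)$. Fix an arbitrary job $i$ and define $t_0 := d_i + \tau^\ast$. For any job $j$ with $d_j \leq d_i$, the definition of $\tau^\ast$ gives $c_j(\pi) \leq d_j + \tau^\ast \leq d_i + \tau^\ast = t_0$, so $\xi_{j,\pi}(t_0) = 0$. Thus $\sum_{j: d_j \leq d_i} \xi_{j,\pi}(t_0) = 0$. Applying the hypothesis \eqref{eq_ordering_3_3} with $\tau = d_i$ and $t = t_0$,
\begin{equation*}
\sum_{j: d_j \leq d_i} \gamma_{j,P}(t_0) \;\leq\; \sum_{j: d_j \leq d_i} \xi_{j,\pi}(t_0) \;=\; 0.
\end{equation*}
Since $\gamma_{j,P}(t_0) \geq 0$ and the index $i$ itself lies in the sum (as $d_i\leq d_i$), we obtain $\gamma_{i,P}(t_0) = 0$, which means all tasks of job $i$ have already entered service by time $t_0$. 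Hence $v_i(P) \leq t_0 = d_i + \tau^\ast$, i.e., $v_i(P) - d_i \leq \tau^\ast = L_{\max}(\bm{c}(\pi))$. Taking the maximum over $i$ yields \eqref{eq_ordering_3_4}.

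The argument is essentially routine once the correspondence between $v_i(P)$ and $\{\gamma_{i,P}(t)=0\}$ is set up, so I do not expect a serious obstacle. The only minor points to handle carefully are (i) using right-continuity of the state process to conclude $\gamma_{i,P}(t_0)=0$ implies $v_i(P)\leq t_0$ (rather than $<$); (ii) the case $n=\infty$, where $\tau^\ast$ may be the supremum rather than the maximum, but the same inequality $c_j(\pi)\leq d_j+\tau^\ast$ still holds for every $j$, so the argument goes through unchanged; and (iii) verifying implicitly that job $i$ has already arrived by time $t_0$, which follows since $c_i(\pi) \leq d_i + \tau^\ast = t_0$ forces $a_i \leq c_i(\pi) \leq t_0$.
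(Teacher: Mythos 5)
Your proof is correct, but it takes a somewhat different route from the paper. The paper proves Proposition~\ref{ordering_3} by contradiction and then adapts it: for the job $w_j$ attaining each completion (resp.\ service-start) time in policy $P$, it supposes that every job $i$ with $a_i\leq v_{w_j}(P)$ and $d_i\leq d_{w_j}$ finishes strictly before $v_{w_j}(P)$ in $\pi$, defines $\tau_j$ as the latest such completion time, evaluates the hypothesis \eqref{eq_ordering_3_3} at $\tau_j$ to force $\gamma_{w_j,P}(\tau_j)=0$, and derives a contradiction; this yields a job $i$ with $d_i\leq d_{w_j}$ and $v_{w_j}(P)\leq c_i(\pi)$, from which the lateness bound follows. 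You instead argue directly: fixing $\tau^\ast=L_{\max}(\bm{c}(\pi))$ and evaluating \eqref{eq_ordering_3_3} at the single time $t_0=d_i+\tau^\ast$ with $\tau=d_i$, you get $\sum_{j:d_j\leq d_i}\xi_{j,\pi}(t_0)=0$, hence $\gamma_{i,P}(t_0)=0$, hence $v_i(P)\leq d_i+\tau^\ast$. The underlying mechanism is identical (the ordering evaluated at a time when all higher-priority jobs are done in $\pi$ forces the unassigned-task count in $P$ to vanish), but your choice of the evaluation time is explicit rather than extracted from a contradiction, which makes the argument shorter and dispenses with the auxiliary quantities $w_j$ and $\tau_j$. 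The side issues you flag are genuine and handled correctly: $a_i\leq c_i(\pi)\leq t_0$ guarantees job $i$ has arrived by $t_0$ in $P$ (so $\gamma_{i,P}(t_0)=0$ cannot be due to non-arrival), right-continuity of the state process gives the non-strict inequality $v_i(P)\leq t_0$, and for $n=\infty$ replacing the max by a sup changes nothing since only $c_j(\pi)-d_j\leq\tau^\ast$ per job is used.
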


\begin{proof}
See Appendix \ref{app_lem3}. 
\end{proof}

%
%
%

If $d_i=a_i$ for all $i$,  the maximum lateness $L_{\max}(\cdot)$ reduces to the {maximum delay} $D_{\max}(\cdot)$. Hence, we can obtain 

 \begin{corollary} \label{ordering_4} 
For any given job parameters $\mathcal{I}$ and a sample path of two policies $P,\pi\in\Pi$, if 
\begin{eqnarray}\label{eq_ordering_4_1}
\sum_{i: a_i \leq \tau} {\xi}_{i,P}(t)\leq \sum_{i: a_i \leq \tau} {\xi}_{i,\pi}(t),~\forall~\tau\in[0,\infty),
\end{eqnarray}
holds for all $t\in[0,\infty)$, then 
\emph{
\begin{align}\label{eq_ordering_4_2}
D_{\max} (\bm{c}(P))\leq D_{\max}(\bm{c}(\pi)).
\end{align}}
\end{corollary}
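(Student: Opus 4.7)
\textbf{Proof proposal for Corollary \ref{ordering_4}.} The plan is to obtain this as an immediate specialization of Proposition \ref{ordering_3} via a reduction of the due dates to the arrival times. The key observation is that the job completion times $\bm{c}(P)$ and $\bm{c}(\pi)$, as well as the state processes $\{\bm{\xi}_\pi(t),\bm{\gamma}_\pi(t)\}$, depend only on the arrival times $a_i$, the batch sizes $k_i$, the scheduling policy, and the realized service times; they do \emph{not} depend on the due times $d_i$ in any way. Thus we may freely replace the due times in the job parameters $\mathcal{I}$ without affecting either side of the inequality \eqref{eq_ordering_4_1} or the completion times appearing in \eqref{eq_ordering_4_2}.

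Concretely, fix the sample path and, for this argument, consider the auxiliary job parameters $\widetilde{\mathcal{I}} = \{n,(a_i,k_i,\widetilde{d}_i)_{i=1}^n\}$ with $\widetilde{d}_i := a_i$ for every job $i$. Under this relabeling the lateness of job $i$ becomes $\widetilde{L}_i = C_i - \widetilde{d}_i = C_i - a_i = D_i$, so $L_{\max}(\bm{c}(\cdot))$ computed with due times $\widetilde{d}_i$ is exactly $D_{\max}(\bm{c}(\cdot))$. Moreover, the hypothesis \eqref{eq_ordering_4_1} of the corollary is literally the hypothesis \eqref{eq_ordering_3_1} of Proposition \ref{ordering_3} evaluated at $\widetilde{d}_i = a_i$, since for every $\tau \in [0,\infty)$ the index sets $\{i : \widetilde{d}_i \leq \tau\}$ and $\{i : a_i \leq \tau\}$ coincide.

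Applying Proposition \ref{ordering_3} to the auxiliary instance $\widetilde{\mathcal{I}}$ then yields $L_{\max}(\bm{c}(P)) \leq L_{\max}(\bm{c}(\pi))$, which under the identification $\widetilde{d}_i = a_i$ reads $D_{\max}(\bm{c}(P)) \leq D_{\max}(\bm{c}(\pi))$, giving \eqref{eq_ordering_4_2}.

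I do not anticipate any real obstacle here: once Proposition \ref{ordering_3} is in hand, Corollary \ref{ordering_4} is a one-line specialization, and the only thing to check carefully is that completion times and state processes are independent of the due times so that the reduction is legitimate. Since the system dynamics (arrivals, task assignments, service completions) are driven solely by $(n,\bm{a},\bm{k})$, the policy, and the service-time realizations, this independence is immediate from the model description in Section \ref{sec_model}.
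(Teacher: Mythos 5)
Your proposal is correct and is exactly the paper's route: the paper omits the proof, stating that Corollaries \ref{ordering_4}--\ref{ordering_4_1} follow directly from Propositions \ref{ordering_3}--\ref{ordering_3_1} by setting $d_i=a_i$ for all $i$, which is your reduction. One minor remark: your claim that the state processes never depend on the due times is not literally true (e.g., the EDD policy's decisions use $d_i$), but it is also unnecessary — Proposition \ref{ordering_3} is a purely deterministic, pathwise implication about the given trajectories, so it may be applied with the relabeled due times $\widetilde{d}_i=a_i$ without any consistency requirement between the fixed sample path and the new due dates.
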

\begin{corollary} \label{ordering_4_1} 
For any given job parameters $\mathcal{I}$ and a sample path of two policies $P,\pi\in\Pi$, if
\begin{eqnarray}\label{eq_ordering_4_3}
\sum_{i: a_i \leq \tau} {\gamma}_{i,P}(t)\leq \sum_{i: a_i \leq \tau} {\xi}_{i,\pi}(t),~\forall~\tau\in[0,\infty),
\end{eqnarray}
holds for all $t\in[0,\infty)$, then 
\emph{
\begin{align}\label{eq_ordering_4_4}
D_{\max} (\bm{v}(P))\leq D_{\max}(\bm{c}(\pi)).
\end{align}}
\end{corollary}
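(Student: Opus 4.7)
The plan is to obtain Corollary \ref{ordering_4_1} as an immediate specialization of Proposition \ref{ordering_3_1}, exploiting the definitional fact already noted in the excerpt: when $d_i = a_i$ for every job $i$, the maximum lateness metric $L_{\max}$ coincides with the maximum delay metric $D_{\max}$ (compare \eqref{eq_1} with \eqref{eq_11}). So no new machinery is needed beyond what Proposition \ref{ordering_3_1} provides.

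Concretely, I would argue as follows. Corollary \ref{ordering_4_1} concerns only the delay metric $D_{\max}$ and the arrival times $a_i$; the due times $d_i$ do not appear in its statement. Therefore, without loss of generality we may set $d_i := a_i$ for every $i=1,\ldots,n$, since this choice of due times has no effect on either the hypothesis \eqref{eq_ordering_4_3} or the conclusion \eqref{eq_ordering_4_4}, and it does not alter the system dynamics (the policies in $\Pi$ operate on the queue state regardless of what $d_i$ is). Under this choice, the hypothesis \eqref{eq_ordering_3_3} of Proposition \ref{ordering_3_1} reads exactly as \eqref{eq_ordering_4_3}, so the hypothesis of Proposition \ref{ordering_3_1} is satisfied on the given sample path.

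Applying Proposition \ref{ordering_3_1} then yields
\begin{align}
L_{\max}(\bm{v}(P)) \leq L_{\max}(\bm{c}(\pi)) \nonumber
\end{align}
on that sample path. But with $d_i = a_i$ for all $i$, the definitions \eqref{eq_1} and \eqref{eq_11} give $L_{\max}(\bm{v}(P)) = D_{\max}(\bm{v}(P))$ and $L_{\max}(\bm{c}(\pi)) = D_{\max}(\bm{c}(\pi))$, so the inequality above becomes \eqref{eq_ordering_4_4}, which is the desired conclusion.

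There is no substantive obstacle here: the corollary is a notational specialization and the entire content is carried by Proposition \ref{ordering_3_1}. The only thing to be careful about is to confirm that the substitution $d_i := a_i$ is legitimate, i.e., that the due times play no role in the dynamics of the policies $P$ and $\pi$ assumed in the hypothesis (they do not, since the state variables $\bm{\xi}_\pi(t)$ and $\bm{\gamma}_\pi(t)$ in Definition \ref{def_state_thm3} are defined without reference to due times), and that the hypothesis \eqref{eq_ordering_4_3} is already in precisely the form required by \eqref{eq_ordering_3_3} after the relabeling $d_i \leftrightarrow a_i$.
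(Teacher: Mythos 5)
Your proposal is correct and is exactly the paper's argument: the paper omits the proof, noting that Corollaries \ref{ordering_4} and \ref{ordering_4_1} follow directly from Propositions \ref{ordering_3} and \ref{ordering_3_1} by setting $d_i = a_i$ for all $i$. Your additional check that the due times play no role in the hypothesis, the dynamics, or the conclusion is a sensible (if brief) justification of that same specialization.
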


The proofs of Corollaries \ref{ordering_4}-\ref{ordering_4_1} are omitted, because they follow directly from Propositions \ref{ordering_3}-\ref{ordering_3_1} by setting $d_i=a_i$ for all $i$. 

The sample-path orderings in Propositions \ref{ordering_1}-\ref{ordering_3_1} and Corollaries \ref{ordering_4}-\ref{ordering_4_1} are of similar forms. Their distinct features are 
\begin{itemize} 
\item In the sample-path orderings \eqref{eq_ordering_1_1} and \eqref{eq_ordering_2_1} corresponding to the average delay $D_{\text{avg}}(\cdot)$, the summations are taken over the jobs with the fewest remaining/unassigned tasks;
 
\item In the sample-path orderings \eqref{eq_ordering_3_1} and \eqref{eq_ordering_3_3} corresponding to the maximum lateness $L_{\max}(\cdot)$, the summations are taken over the jobs with the earliest due times;
 
\item In the sample-path orderings \eqref{eq_ordering_4_1} and \eqref{eq_ordering_4_3} corresponding to the maximum delay $D_{\max}(\cdot)$, the summations are taken over the jobs with the earliest arrival times.
\end{itemize} 
These features are tightly related to the priority rules for minimizing the corresponding delay metrics near optimally: The priority rule for minimizing the average delay $D_{\text{avg}}(\cdot)$ is 
FUT first; the priority rule for minimizing the maximum lateness $L_{\max}(\cdot)$ is EDD first; the priority rule for minimizing the maximum delay $D_{\max}(\cdot)$ is FCFS. Hence, \emph{the summations in these sample-path orderings are taken over the high priority jobs}. This is one key insight behind these sample-path orderings.

A number of popular sample-path methods --- such as forward induction, backward induction, and interchange arguments  \cite{Liu1995} --- have been successfully used to establish delay optimality results in single-server queueing systems \cite{Schrage68,Jackson55,Baccelli:1993}. However, it appears to be difficult to directly generalize these methods and characterize the sub-optimal delay gap from the optimum. 
On the other hand, the sample-path orderings \eqref{eq_ordering_1_1}, \eqref{eq_ordering_2_1}, \eqref{eq_ordering_3_1}, \eqref{eq_ordering_3_3}, \eqref{eq_ordering_4_1}, and \eqref{eq_ordering_4_3} provide an interesting unified framework for  sample-path delay comparisons, aiming for both delay optimality and near delay optimality. 
 To the best of our knowledge, except for \eqref{eq_ordering_1_1} developed in \cite{Smith78}, the sample-path orderings \eqref{eq_ordering_2_1}, \eqref{eq_ordering_3_1}, \eqref{eq_ordering_3_3}, \eqref{eq_ordering_4_1}, and \eqref{eq_ordering_4_3} have not appeared before.


\subsection{Step 2: Sufficient Conditions for Sample-path Orderings}\label{sec_delay_ineq}
Next, we provide several sufficient conditions for the sample-path orderings \eqref{eq_ordering_2_1}, \eqref{eq_ordering_3_3}, and \eqref{eq_ordering_4_3}. In addition, we will also develop several sufficient conditions for comparing more general delay metrics in $\mathcal{D}_{\text{sym}}$, $\mathcal{D}_{\text{Sch-1}}$, and $\mathcal{D}_{\text{Sch-2}}$.

\subsubsection{Weak Work-efficiency Ordering}
In single-server queueing systems, the service delay is largely governed by the work conservation law (or its generalizations): At any time, the expected total amount of time for completing the jobs in the queue is invariant among all work-conserving policies \cite{Leonard_Kleinrock_book,Jose2010,Gittins:11}. However, this work conservation law does not hold in queueing systems with parallel servers, where it is difficult to efficiently utilize the full service capacity (some servers can be idle if all tasks are being processed by the remaining servers). In order to minimize delay, the system needs to execute the tasks as fast as possible.
In the sequel, we introduce an ordering to compare the efficiency of task executions in different policies in a near-optimal sense, which is called \emph{weak work-efficiency ordering}.\footnote{Work-efficiency orderings are also used in \cite{Yin_report2016} to study queueing systems with replications.}

\begin{definition} \label{def_order} \textbf{Weak Work-efficiency Ordering:}
For any given job parameters $\mathcal{I}$ and a sample path of two policies $P,\pi\in\Pi$, policy $P$ is said to be \textbf{weakly more work-efficient than} policy $\pi$, if the following assertion is true:
For each task $j$ executed in policy $\pi$, if
\begin{itemize}
\item[1.] In policy $\pi$, task $j$ starts service at time $\tau$ and completes service at time $\nu$ ($\tau\leq \nu$), 
\item[2.] In policy $P$, the queue is not empty (there exist unassigned tasks in the queue) during $[\tau,\nu]$, 
\end{itemize}
then there exists one corresponding task $j'$ in policy $P$ which starts service during $[\tau,\nu]$. \end{definition}

\begin{figure}
\centering 
\includegraphics[width=0.35\textwidth]{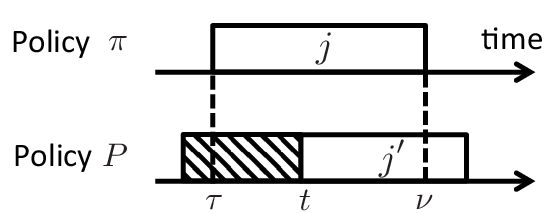} \caption{Illustration of the weak work-efficiency ordering, where the service duration of a task is indicated by a rectangle. 
Task $j$ starts service at time $\tau$ and completes service at time $\nu$ in policy $\pi$, and one corresponding task $j'$ starts service at  time $t\in[\tau,\nu]$ in policy $P$.}
\label{Work_Efficiency_Ordering} 
\end{figure} 

An illustration of this weak work-efficiency ordering is provided in Fig. \ref{Work_Efficiency_Ordering}. Note that the weak work-efficient ordering requires the service starting time of task $j'$ in policy $P$ to be within the service duration of its corresponding task $j$ in policy $\pi$. This is the key feature that enables establishing tight sub-optimality delay gap later on. 
\subsubsection{Sufficient Conditions for  Sample-path Orderings}
Using weak work-efficiency ordering, we can obtain 
the following sufficient condition for the sample-path ordering \eqref{eq_ordering_2_1} associated to the average delay $D_{\text{avg}}(\cdot)$.

\begin{proposition}\label{lem1}
For any given job parameters $\mathcal{I}$ and a sample path of two policies $P,\pi\in\Pi$, if
\begin{itemize}
\itemsep0em 
\item[1.] Policy $P$ is {weakly} {more work-efficient} than policy $\pi$,
\item[2.] In policy $P$, each task starting service is from the job with the fewest unassigned tasks among all jobs with unassigned tasks, 
\end{itemize} 
then \eqref{eq_ordering_2_1}-\eqref{eq_ordering_2_3} hold.  
\end{proposition}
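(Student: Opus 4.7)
The plan is to establish \eqref{eq_ordering_2_1} directly from conditions 1 and 2 by forward induction on the event times of the coupled sample paths; once this is in hand, Proposition \ref{ordering_2} immediately yields \eqref{eq_ordering_2_2} and \eqref{eq_ordering_2_3}. For each fixed $j\in\{1,\ldots,n\}$ I would introduce the potential
\begin{equation*}
\phi_{j}(t)\;=\;\sum_{i=j}^{n}\xi_{[i],\pi}(t)\;-\;\sum_{i=j}^{n}\gamma_{[i],P}(t),
\end{equation*}
so \eqref{eq_ordering_2_1} is equivalent to ``$\phi_{j}(t)\geq 0$ for all $j$ and all $t\geq 0$.'' Both state vectors are $\bm{0}$ at $t=0^{-}$, giving the base case $\phi_j(0^{-})=0$.

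Only three kinds of events can alter $\phi_j$: (a) a job arrival (both $\xi_{i,\pi}$ and $\gamma_{i,P}$ jump from $0$ to $k_i$); (b) a task completion in $\pi$ (some $\xi_{\ell,\pi}$ drops by one); (c) a task start in $P$ (some $\gamma_{j',P}$ drops by one). Task starts in $\pi$ and task completions in $P$ leave $\phi_j$ unchanged. For (a) I would invoke an auxiliary majorization lemma: if $\bm{x},\bm{y}\in\mathbb{R}_{\geq 0}^n$ satisfy $\sum_{i=j}^n x_{[i]}\leq \sum_{i=j}^n y_{[i]}$ for every $j$ and $x_\ell=y_\ell$, then adding the same constant $c\geq 0$ to $x_\ell$ and $y_\ell$ preserves all of these inequalities. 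This can be proved by writing the partial sum of $\bm{x}+c\bm{e}_\ell$ as a pointwise minimum of two quantities (corresponding to whether $x_\ell+c$ is included among the $n-j+1$ smallest entries) and comparing term by term with the analogous decomposition of $\bm{y}+c\bm{e}_\ell$. Event (c) can only weakly decrease $\sum_{i=j}^n\gamma_{[i],P}(t)$, hence only weakly increases $\phi_j$.

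The crux is event (b). Consider a task completion of job $\ell$ in $\pi$ at time $\nu$ with service interval $[\tau,\nu]$. If $P$'s queue is empty at some instant $s\in[\tau,\nu]$, then $\bm{\gamma}_P(s)=\bm{0}$ and the left-hand side of \eqref{eq_ordering_2_1} equals $0$ at $s$; from $s$ to $\nu$ the left-hand side can only grow through arrivals (already handled by the majorization lemma) while the right-hand side stays non-negative, so $\phi_j$ remains non-negative across the completion. Otherwise $P$'s queue is non-empty throughout $[\tau,\nu]$, and weak work-efficiency (condition 1) supplies a paired task start in $P$ at some instant $t_{j'}\in[\tau,\nu]$; FUT (condition 2) then forces $\gamma_{j',P}(t_{j'}^{-})$ to be the smallest positive coordinate of $\bm{\gamma}_P(t_{j'}^{-})$. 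The algebraic core is to pair this start with the completion: for any $j$ at which the right-hand side of \eqref{eq_ordering_2_1} truly decreases by one at $\nu$, the number of zero entries of $\bm{\xi}_\pi(\nu^-)$ is strictly less than $n-j+1$, and I would combine this with the cumulative count ``starts in $P\geq$ completions in $\pi$'' (the $j=1$ consequence of condition 1) and the FUT structure of $\bm{\gamma}_P$ to conclude that the number of zero entries of $\bm{\gamma}_P(t_{j'}^-)$ is also strictly less than $n-j+1$. Hence $\gamma_{j',P}$ sits among the $n-j+1$ smallest entries of $\bm{\gamma}_P(t_{j'}^-)$ and the paired start already lowered the left-hand side by one, cancelling the completion's decrement of the right-hand side.

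The main obstacle is the asynchrony of the pairing: arrivals and unrelated starts/completions can occur between $t_{j'}$ and $\nu$, so $\phi_{j}$ may fluctuate inside that window. I plan to handle this through chronological bookkeeping, sweeping through events in order, maintaining a running queue of ``pending'' completions in $\pi$ that are not yet matched, and dequeueing at most one whenever a task starts in $P$. Condition 1 guarantees that every completion finds a matched start by the time of its own $\nu$, and the injectivity of the matching is preserved because distinct completions dequeue distinct starts. Once $\phi_j(t)\geq 0$ is verified at every event time---and hence at every $t\geq 0$ by right-continuity of the state processes---Proposition \ref{ordering_2} supplies \eqref{eq_ordering_2_2} and \eqref{eq_ordering_2_3}, completing the proof of Proposition \ref{lem1}.
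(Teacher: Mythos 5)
Your overall scaffolding (an arrival-preservation lemma, the FUT property that a start in $P$ removes a task from the smallest positive coordinate of $\bm{\gamma}_P$, weak work-efficiency to pair $\pi$-completions with $P$-starts, and finally Proposition \ref{ordering_2}) matches the paper's ingredients, but the per-event induction you build on it has a genuine gap at the crux. Your inductive hypothesis is only $\phi_j(t)\geq 0$, yet the inductive step at a $\pi$-completion time $\nu$ needs $\phi_j(\nu^-)\geq 1$ whenever the right-hand side of \eqref{eq_ordering_2_1} truly drops at $\nu$. You try to supply this credit from the paired start at $t_{j'}\leq\nu$, but that decrement occurred in the past and does not persist: between $t_{j'}$ and $\nu$ the potential $\phi_j$ is not monotone, and in particular a coupled arrival can strictly decrease it (e.g.\ with $\bm{\gamma}_P=(0,5,0)$, $\bm{\xi}_\pi=(3,3,0)$ and a new job of size $5$, the gap for $j=2$ falls from $3$ to $1$), so "the paired start already lowered the left-hand side by one, cancelling the completion's decrement" is not a valid telescoping argument. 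The same problem afflicts your queue-empty subcase: asserting that the left-hand side "can only grow through arrivals while the right-hand side stays non-negative" does not rule out $\phi_j(\nu^-)=0$ with a genuine decrement at $\nu$. The "chronological bookkeeping" you invoke is exactly the missing invariant, and it is never specified (note also that its direction is reversed: the matched start precedes the completion, so one must bank pending starts and spend them at completions, not the other way around).

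The paper closes this hole differently: it inducts over the inter-arrival intervals $[a_i,a_{i+1})$ rather than over individual events. Given the ordering at $a_i$, it compares, for an arbitrary $t$ in the interval, the cumulative counts of $P$-starts and $\pi$-completions on $[a_i,t]$ and applies a batch preservation lemma (Lemma \ref{lem_non_prmp1}: allocating $b_P\geq b_\pi$ tasks under FUT in $P$ while completing $b_\pi$ tasks in $\pi$ preserves \eqref{eq_ordering_2_1}), with arrivals handled by Lemma \ref{lem_non_prmp2}. Two further devices do the work your sketch leaves open: tasks whose service straddles $a_i$ but whose paired start in $P$ occurred before $a_i$ are handled by modifying the sample path of $\pi$ (moving their completion to $a_i^-$), which restores the count "starts in $P\geq$ completions in $\pi$ within the interval"; and if $P$'s queue empties at some $t'\in[a_i,t]$, then, because no arrival occurs in $(t',t]$, the left-hand side of \eqref{eq_ordering_2_1} is identically zero at $t$, so the ordering holds trivially there. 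If you want to salvage your event-by-event scheme, you would need to formalize an invariant of this cumulative-credit type; as written, the key step does not go through.
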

\begin{proof}
See Appendix \ref{app0}.
\end{proof}



Similarly, one sufficient condition is obtained for the sample-path orderings \eqref{eq_ordering_3_3} for comparing the maximum lateness $L_{\max}(\cdot)$ of different policies.

\begin{proposition}\label{lem2}
For any given job parameters $\mathcal{I}$ and a sample path of two policies $P,\pi\in\Pi$, if
\begin{itemize}
\itemsep0em 
\item[1.] Policy $P$ is {weakly} {more work-efficient} than policy $\pi$,
\item[2.] In policy $P$, each task starting service is from the job with the earliest due time among all jobs with unassigned tasks, 
\end{itemize} 
then \eqref{eq_ordering_3_3} and \eqref{eq_ordering_3_4} hold. \end{proposition}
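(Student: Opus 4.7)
The plan is to establish the sample-path inequality \eqref{eq_ordering_3_3} by forward induction on the chronological sequence of events in the system, in parallel with the argument behind Proposition \ref{lem1}; once \eqref{eq_ordering_3_3} is proved, the lateness bound \eqref{eq_ordering_3_4} follows immediately from Proposition \ref{ordering_3_1}. First I would fix any threshold $\tau \geq 0$, set $J_\tau := \{i : d_i \leq \tau\}$, and write $f_P(t) := \sum_{i \in J_\tau} \gamma_{i,P}(t)$ and $f_\pi(t) := \sum_{i \in J_\tau} \xi_{i,\pi}(t)$, so that \eqref{eq_ordering_3_3} reads $f_P(t) \leq f_\pi(t)$ for all $t$ and all $\tau$.

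Both sides vanish at $t = 0^-$. Among the events that can move them, job arrivals change $f_P$ and $f_\pi$ by the same amount ($k_i$ if $i \in J_\tau$, otherwise zero); task starts in $P$ can only decrease $f_P$; and task starts in $\pi$ as well as task completions in $P$ leave both sums unchanged. Consequently the induction can fail only at a $\pi$-completion of a task of some job $i^* \in J_\tau$ at a time $t^*$ satisfying $f_P((t^*)^-) = f_\pi((t^*)^-) \geq 1$, with no simultaneous $P$-side start of a $J_\tau$-task. The core of the proof is to rule this out using weak work-efficiency together with the EDD priority rule of $P$.

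To that end I would let $[\tau_0, t^*]$ be the service interval of the completing task $j^*$ in $\pi$ and split on whether $f_P$ ever vanishes on $[\tau_0, t^*]$. If $f_P(t) > 0$ throughout $[\tau_0, t^*]$, then at every such instant some $J_\tau$-job carries an unassigned task in $P$, so the queue in $P$ is non-empty on the whole interval; weak work-efficiency then produces a corresponding task $j'$ that starts service in $P$ at some moment in $[\tau_0, t^*]$, and the EDD rule of $P$, combined with $f_P > 0$, forces $j'$ to come from a $J_\tau$-job. Iterating this over the $J_\tau$-task completions in $\pi$ occurring during $[\tau_0, t^*]$, and using the one-to-one nature of the weak-work-efficiency correspondence, I would argue that the count of distinct $J_\tau$-task starts in $P$ during $[\tau_0, t^*]$ strictly exceeds the count of $J_\tau$-task completions in $\pi$ during the same interval, contradicting $f_P((t^*)^-) = f_\pi((t^*)^-)$. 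If instead $f_P(s) = 0$ for some $s \in [\tau_0, t^*]$, I would take the latest such $s$ and note that at $s$ every $J_\tau$-task arrived so far has already begun service in $P$, so $s$ serves as a clean reset point and the previous argument can then be applied on $(s, t^*]$.

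The hard part will be the one-to-one bookkeeping needed to match $J_\tau$-task completions in $\pi$ to distinct $J_\tau$-task starts in $P$ when their $\pi$-side service intervals overlap in time. The EDD rule is exactly what keeps this bookkeeping tight: whenever $f_P > 0$, the task supplied by weak work-efficiency is forced to come from $J_\tau$, so every such match contributes to the sum defining $f_P$, delivering the $P$-side decrement needed to preserve the inequality against each $\pi$-side decrement of $f_\pi$.
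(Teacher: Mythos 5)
Your high-level architecture (reduce \eqref{eq_ordering_3_4} to \eqref{eq_ordering_3_3} via Proposition \ref{ordering_3_1}, track $f_P(t)=\sum_{i:d_i\le\tau}\gamma_{i,P}(t)$ and $f_\pi(t)=\sum_{i:d_i\le\tau}\xi_{i,\pi}(t)$ event by event, and use the EDD rule to force any task starting in $P$ while $f_P>0$ to come from a job with $d_i\le\tau$) is sound, and your event bookkeeping is correct. The gap is in the counting step of your Case A. The $J_\tau$-task completions in $\pi$ occurring during $[\tau_0,t^*]$ need not have service intervals contained in $[\tau_0,t^*]$: such a task may have started service in $\pi$ before $\tau_0$. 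For it, weak work-efficiency (if it applies at all --- it requires $P$'s queue to be nonempty over that task's \emph{entire} service interval, which your hypothesis ``$f_P>0$ on $[\tau_0,t^*]$'' does not guarantee) only yields a partner starting somewhere in that longer interval, possibly before $\tau_0$, hence outside your window and outside your count of within-window $J_\tau$-starts. So the claimed count inequality (let alone the strict one you invoke) can fail --- e.g.\ a $J_\tau$-task served in $\pi$ on $[1,9]$ whose partner starts in $P$ at time $1.5$, with $[\tau_0,t^*]=[2,10]$ --- and the inductive bound $f_P(\tau_0^-)\le f_\pi(\tau_0^-)$ carries no extra slack to absorb these straddling completions. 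The EDD rule cannot repair this: it only says a partner that \emph{does} start inside the window is a $J_\tau$-start; it says nothing about where the partner lands. The same boundary issue reappears across your reset point $s$ in Case B; there it could be absorbed because $f_P(s)=0$ while every straddler is still counted in $f_\pi(s)$, but your outline never makes that charging argument --- it just re-runs the Case A counting on $(s,t^*]$.

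Handling these straddling tasks is precisely what the paper's proof is organized around: it repeats the proof of Proposition \ref{lem1} with Lemmas \ref{lem_non_prmp1} and \ref{lem_non_prmp2} replaced by their EDD analogues (Lemmas \ref{lem_non_prmp1_thm3} and \ref{lem_non_prmp2_thm3}), inducting over the arrival epochs $a_i$ and, inside each window, \emph{modifying} $\pi$'s sample path by moving to $a_i^-$ the completion of every task that started before $a_i$, completes inside the window, and whose partner in $P$ also started before $a_i$; Claim 1 keeps $\pi$'s state at the evaluation time unchanged, Claim 3 restores the within-window count of $P$-starts versus $\pi$-completions, and the induction hypothesis is invoked on the modified sample path, which still satisfies conditions 1 and 2. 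If you prefer your local window argument, it can be repaired without modification by anchoring the window at the last instant $s\le t$ with $f_P(s^-)=0$ (it exists since $f_P(0^-)=0$): completions in $\pi$ during $[s,t]$ split into straddlers, whose number is at most $f_\pi(s^-)$ because each is an uncompleted $J_\tau$-task of $\pi$ at $s^-$, and tasks served entirely inside $[s,t]$, which weak work-efficiency plus EDD matches injectively to $J_\tau$-starts of $P$ inside $[s,t]$; with $f_P(s^-)=0$ this gives $f_P(t)\le f_\pi(t)$ directly. Either way, an explicit mechanism for the straddlers must be supplied; as written, your proof has none.
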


\begin{proof}
See Appendix \ref{app0_1}. 
\end{proof}

%

\subsubsection{More General Delay Metrics}

We now investigate more general delay metrics in $\mathcal{D}_{\text{sym}}$, $\mathcal{D}_{\text{Sch-1}}$, and $\mathcal{D}_{\text{Sch-2}}$. First, Proposition \ref{lem1} can be directly generalized to all delay metrics in $\mathcal{D}_{\text{sym}}$. 

\begin{proposition}\label{lem1_1}
If the conditions of Proposition \ref{lem1} are satisfied, then for all \emph{$f\in\mathcal{D}_{\text{sym}}$}
\begin{align}
f (\bm{v}(P))\leq f(\bm{c}(\pi)).\nonumber
\end{align}
\end{proposition}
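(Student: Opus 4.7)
The plan is to observe that Proposition \ref{lem1_1} is essentially a direct corollary of Proposition \ref{lem1}, once we exploit the two defining properties of $\mathcal{D}_{\text{sym}}$: symmetry and monotonicity. The heavy lifting --- constructing the correspondence between tasks under the weak work-efficiency ordering and tracking unassigned-task counts --- has already been done in Proposition \ref{lem1}. Here the only task is to pass from the componentwise inequality $v_{(i)}(P)\leq c_{(i)}(\pi)$ on sorted completion/start times to a general symmetric increasing functional.

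Concretely, I would proceed as follows. First, invoke Proposition \ref{lem1}: under its hypotheses (policy $P$ is weakly more work-efficient than policy $\pi$, and $P$ uses the FUT priority among jobs with unassigned tasks), the sample-path ordering \eqref{eq_ordering_2_1} holds, and hence by Proposition \ref{ordering_2} we obtain the componentwise inequality \eqref{eq_ordering_2_2}, i.e.\ $v_{(i)}(P)\leq c_{(i)}(\pi)$ for every $i=1,\ldots,n$. In vector form, this reads $\bm{v}(P)_{\uparrow}\leq \bm{c}(\pi)_{\uparrow}$. Second, fix an arbitrary $f\in\mathcal{D}_{\text{sym}}$. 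By the symmetry of $f$, we have $f(\bm{v}(P))=f(\bm{v}(P)_{\uparrow})$ and $f(\bm{c}(\pi))=f(\bm{c}(\pi)_{\uparrow})$. Finally, by the monotonicity (increasingness) of $f$ applied to the componentwise inequality on the sorted vectors, we conclude
\begin{equation*}
f(\bm{v}(P))=f(\bm{v}(P)_{\uparrow})\leq f(\bm{c}(\pi)_{\uparrow})=f(\bm{c}(\pi)),
\end{equation*}
which is the claimed inequality.

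There is essentially no genuine obstacle here: the proof is little more than bookkeeping. The only point one might want to make explicit is why the componentwise ordering of sorted vectors suffices --- this is precisely what symmetry plus monotonicity buys us, and it is the reason the class $\mathcal{D}_{\text{sym}}$ was defined the way it was. Any subtlety (e.g.\ handling the case $n=\infty$) can be dealt with by noting that $f$ being increasing and symmetric extends the argument termwise to the partial sums/maxima that arise in the $\limsup$/$\sup$ formulations of Section \ref{sec:metrics}.
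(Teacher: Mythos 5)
Your proposal is correct and follows essentially the same route as the paper: invoke Proposition \ref{lem1} (via Proposition \ref{ordering_2}) to obtain the sorted componentwise inequality \eqref{eq_ordering_2_2}, i.e.\ $\bm{v}_{\uparrow}(P)\leq \bm{c}_{\uparrow}(\pi)$, and then use the symmetry and monotonicity of $f\in\mathcal{D}_{\text{sym}}$ to conclude $f(\bm{v}(P))=f(\bm{v}_{\uparrow}(P))\leq f(\bm{c}_{\uparrow}(\pi))=f(\bm{c}(\pi))$. No gaps.
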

\begin{proof}
See Appendix \ref{app_lem1_1}.
\end{proof}

In some scenarios, there exists a policy $P$ simultaneously satisfying the sufficient conditions in Proposition \ref{lem1} (or Proposition \ref{lem1_1}) and Proposition \ref{lem2}. In this case, we can obtain a delay inequality for comparing any delay metric in $\mathcal{D}_{\text{sym}}\cup\mathcal{D}_{\text{Sch-1}}$.

\begin{proposition}\label{lem_general}
For any given job parameters $\mathcal{I}$ and a sample path of two policies $P,\pi\in\Pi$, if
\begin{itemize}
\itemsep0em 
\item[1.] Policy $P$ is {weakly} {more work-efficient} than policy $\pi$,
\item[2.] In policy $P$, each task starting service is from the job with the fewest unassigned tasks among all jobs with unassigned tasks, and this job is also the job with the earliest due time among all jobs with unassigned tasks, 
\end{itemize} 
then for all \emph{$f\in\mathcal{D}_{\text{sym}}\cup\mathcal{D}_{\text{Sch-1}}$}
\begin{align}\label{eq_lem_general}
f (\bm{v}(P))\leq f(\bm{c}(\pi)).
\end{align}
\end{proposition}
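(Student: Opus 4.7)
The plan is to decompose the class $\mathcal{D}_{\text{sym}}\cup \mathcal{D}_{\text{Sch-1}}$ and handle the two cases separately. Condition~2 of Proposition~\ref{lem_general} enforces that in policy $P$, each task starting service comes from the job that simultaneously has the fewest unassigned tasks and the earliest due time, so the hypotheses of both Proposition~\ref{lem1_1} (via the FUT rule) and Proposition~\ref{lem2} (via the EDD rule) are satisfied on every sample path.

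For any $f\in\mathcal{D}_{\text{sym}}$, Proposition~\ref{lem1_1} immediately gives $f(\bm{v}(P))\leq f(\bm{c}(\pi))$, so the real work lies in the case $f\in\mathcal{D}_{\text{Sch-1}}$. In that case $g(\bm{x}):=f(\bm{x}+\bm{d})$ is Schur convex and increasing in $\bm{x}$; writing the lateness vectors $\bm{L}_v(P)=\bm{v}(P)-\bm{d}$ and $\bm{L}(\pi)=\bm{c}(\pi)-\bm{d}$, the target inequality becomes $g(\bm{L}_v(P))\leq g(\bm{L}(\pi))$. Using the standard fact that an increasing Schur convex function preserves weak majorization, i.e., $\bm{x}\prec_{\text{w}}\bm{y}$ implies $g(\bm{x})\leq g(\bm{y})$, it will suffice to establish
\begin{align*}
\sum_{i=1}^{j} L_{v,[i]}(P) \;\leq\; \sum_{i=1}^{j} L_{[i]}(\pi), \qquad j=1,\ldots,n.
\end{align*}

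To establish this weak majorization, I plan to exploit the sample-path ordering \eqref{eq_ordering_3_3}, which holds here because Condition~2 of Proposition~\ref{lem_general} implies Condition~2 of Proposition~\ref{lem2}. For each threshold $\ell\geq 0$, observe that if a job $i$ in policy $P$ has lateness $v_i(P)-d_i>\ell$, then at time $t=d_i+\ell$ job $i$ still has at least one unassigned task, so $\gamma_{i,P}(d_i+\ell)\geq 1$. Evaluating \eqref{eq_ordering_3_3} at carefully chosen pairs $(t,\tau)$ — for instance with $\tau$ equal to the largest due time among the top-$k$ most-late jobs of $P$ and $t$ taken just after the corresponding $d_i+\ell$ — should upper-bound the number of jobs in $P$ with lateness exceeding $\ell$ by the analogous count in $\pi$. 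Chaining these bounds across thresholds would then yield the pointwise sorted domination $L_{v,[i]}(P)\leq L_{[i]}(\pi)$ for every $i$, which in particular implies the required weak majorization and hence $g(\bm{L}_v(P))\leq g(\bm{L}(\pi))$.

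The main obstacle is the passage from the task-level ordering \eqref{eq_ordering_3_3} to the job-level lateness count, since $\sum_{i:d_i\leq\tau}\xi_{i,\pi}(t)$ counts outstanding tasks rather than unfinished jobs, and similarly on the $P$ side. Condition~2 of Proposition~\ref{lem_general} enforces a tight structural alignment between the FUT and EDD orderings in $P$, and I expect this alignment to be precisely the lever that bridges tasks and jobs without losing constants; carrying out this bridge rigorously, while keeping track of the non-anticipative and non-preemptive nature of both policies, is the technically delicate step of the proof.
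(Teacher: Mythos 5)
Your reduction of the $\mathcal{D}_{\text{Sch-1}}$ case to the weak majorization $\bm{v}(P)-\bm{d}\prec_{\text{w}}\bm{c}(\pi)-\bm{d}$ is exactly the paper's target \eqref{eq_coro2_3}, but the route you propose for establishing it has a genuine gap: the intermediate claim of pointwise sorted domination, $L_{v,[i]}(P)\leq L_{[i]}(\pi)$ for every $i$ (equivalently, the per-threshold count bound for \emph{all} thresholds $\ell$), is false under the hypotheses of the proposition. Counterexample: one server, deterministic unit service times, three single-task jobs arriving at time $0$ with due times $d_1=d_2=0$, $d_3=10$. Policy $P$ (EDD, which here also satisfies the fewest-unassigned-tasks requirement since $k_i\equiv 1$) gives $\bm{v}(P)=(0,1,2)$, hence sorted lateness $(1,0,-8)$; a policy $\pi$ that serves job $3$ first gives $\bm{c}(\pi)=(2,3,1)$, hence sorted lateness $(3,2,-9)$. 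The third components violate your claim ($-8>-9$), and correspondingly the count of jobs with lateness exceeding $\ell=-8.5$ is $3$ under $P$ but $2$ under $\pi$; weak majorization nevertheless holds ($1\leq 3$, $1\leq 5$, $-7\leq -4$), which is all the proposition needs. Restricting to thresholds $\ell\geq 0$, as you suggest, does not repair this: count bounds for nonnegative thresholds only control the tardiness (positive part) vector, and domination of positive parts does not imply $\prec_{\text{w}}$ of the signed lateness vectors (e.g., $(0,0)$ versus $(0.5,-10)$). So no refinement of the bridge from \eqref{eq_ordering_3_3} to lateness counts can deliver what you want, because the statement you are bridging to is not true.

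The paper's proof avoids this entirely and, tellingly, does not use \eqref{eq_ordering_3_3} at all in the $\mathcal{D}_{\text{Sch-1}}$ case. It instead combines the order-statistic inequality $v_{(i)}(P)\leq c_{(i)}(\pi)$ of \eqref{eq_ordering_2_2} — which comes from the FUT half of condition 2 via Proposition \ref{lem1}, an ingredient your Sch-1 argument never invokes — with the EDD structure inside policy $P$: one constructs the permutation $\bm{v}'$ of $\bm{v}(P)$ whose ranks match those of $\bm{c}(\pi)$, obtains $\bm{v}'\leq\bm{c}(\pi)$ componentwise from \eqref{eq_ordering_2_2}, and then shows $\bm{v}(P)-\bm{d}\prec\bm{v}'-\bm{d}$ by repeated application of the rearrangement inequality (Lemma \ref{lem_rearrangement}), since each interchange bringing $\bm{v}'$ back to the EDD-consistent order satisfies $(v_i'-v_j')(d_i-d_j)\leq 0$. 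Chaining these gives exactly the weak majorization, and Theorem 3.A.8 of the majorization reference finishes the argument as you intended. If you want to salvage your write-up, replace the counting step by this permutation-plus-rearrangement argument; the rest of your outline (the $\mathcal{D}_{\text{sym}}$ case via Proposition \ref{lem1_1}, and the final step that increasing Schur convex functions preserve $\prec_{\text{w}}$) is fine.
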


\begin{proof}[Proof sketch of Proposition \ref{lem_general}]
For any $f\in\mathcal{D}_{\text{sym}}$, \eqref{eq_ordering_2_2} and \eqref{eq_lem_general} follow from Propositions \ref{lem1} and \ref{lem1_1}.
For any $f\in\mathcal{D}_{\text{Sch-1}}$, we construct an $n$-dimensional vector $\bm{v}'$ and show that  
\begin{align}\label{eq_coro2_6_0}
\bm{v}(P)-\bm{d}\prec \bm{v}'- \bm{d} \leq \bm{c} (\pi)-\bm{d},
\end{align}
where the first majorization ordering in \eqref{eq_coro2_6_0} follows from the rearrangement inequality \cite[Theorem 6.F.14]{Marshall2011},\cite{Chang93rearrangement_majorization}, and the second inequality in  \eqref{eq_coro2_6_0} is proven by using  \eqref{eq_ordering_2_2}.
This further implies 
\begin{align}\label{eq_coro2_3_0}
\bm{v}(P)-\bm{d}\prec_{\text{w}} \bm{c} (\pi)-\bm{d}. 
\end{align} 
Using this, we can show that \eqref{eq_lem_general} holds for all $f\in\mathcal{D}_{\text{Sch-1}}$.
The details are provided in Appendix \ref{app_coro2}.
\end{proof}

Using this proposition, we can obtain 

\begin{corollary}\label{coro2}
For any given job parameters $\mathcal{I}$ and a sample path of two policies $P,\pi\in\Pi$, if
\begin{itemize}
\itemsep0em 
\item[1.] $k_1=\ldots=k_n=1$ (or $d_1\leq d_2\leq \ldots\leq d_n$ and $k_1\leq k_2 \leq \ldots\leq k_n$),
\item[2.] Policy $P$ is {weakly} {more work-efficient} than policy $\pi$,
\item[3.] In policy $P$, each task starting service is from the job with the earliest due time among all jobs with unassigned tasks, 
\end{itemize} 
then 
for all \emph{$f\in\mathcal{D}_{\text{sym}}\cup\mathcal{D}_{\text{Sch-1}}$}
\begin{align}\label{eq_coro2}
f (\bm{v}(P))\leq f(\bm{c}(\pi)).
\end{align}
\end{corollary}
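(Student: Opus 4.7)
The plan is to reduce Corollary~\ref{coro2} to Proposition~\ref{lem_general} by verifying that, under either alternative in condition~1, the EDD priority rule in condition~3 automatically coincides with the joint FUT/EDD priority rule required by Proposition~\ref{lem_general}. Concretely, conditions~1 and~2 of Proposition~\ref{lem_general} (weak work-efficiency, together with the picked job having the earliest due date) are already granted by Corollary~\ref{coro2}, so the only thing to verify is that whenever policy $P$ starts a task from some job $j^*$, that job $j^*$ also attains $\min\{\gamma_{i,P}(t)\colon i\in Q(t),\ \gamma_{i,P}(t)>0\}$, where $Q(t)$ is the set of jobs present in the system at time $t$. Once this is shown, Proposition~\ref{lem_general} delivers \eqref{eq_coro2} for all $f\in\mathcal{D}_{\text{sym}}\cup\mathcal{D}_{\text{Sch-1}}$.

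The first case, $k_1=\cdots=k_n=1$, is essentially trivial: every job with unassigned tasks has exactly one unassigned task at any decision epoch, so the job picked by EDD vacuously attains the minimum of $\gamma_{i,P}(t)$ over jobs with $\gamma_{i,P}(t)>0$. So in this case the FUT/EDD coincidence is immediate and the reduction goes through.

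The substantive case is $d_1\le\cdots\le d_n$ and $k_1\le\cdots\le k_n$, together with $a_1\le\cdots\le a_n$ from the model. I would argue by contradiction. Let $j_1$ be the job selected by EDD at some time $t$, i.e.\ the smallest-indexed job with $\gamma_{j_1,P}(t)>0$, and suppose some other job $j_2>j_1$ in the system satisfies $\gamma_{j_2,P}(t)<\gamma_{j_1,P}(t)$. If no task of $j_2$ has ever been started by time $t$, then $\gamma_{j_2,P}(t)=k_{j_2}\ge k_{j_1}\ge\gamma_{j_1,P}(t)$, contradicting the assumption. Otherwise a task of $j_2$ started at some earlier time $s\le t$; since $a_{j_1}\le a_{j_2}\le s$, job $j_1$ was already present in the system at $s$, so for EDD to pick $j_2$ rather than $j_1$ at $s$ we must have $\gamma_{j_1,P}(s)=0$. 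But $\gamma_{j_1,P}(\cdot)$ is non-increasing in time (once a task is assigned it stays assigned, since preemption is disallowed), so $\gamma_{j_1,P}(t)\le\gamma_{j_1,P}(s)=0$, contradicting $\gamma_{j_1,P}(t)>0$. Thus no such $j_2$ exists, EDD = FUT at every decision epoch on the sample path, and Proposition~\ref{lem_general} applies verbatim.

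The only subtle step is the monotonicity claim for $\gamma_{i,P}(\cdot)$ used in the contradiction, which relies on non-preemption plus the convention that $\gamma$ counts \emph{unassigned} (not merely incomplete) tasks, so $\gamma_{j_1,P}$ can only decrement when a task is dispatched and never increments after job $j_1$'s arrival. I expect this is the one place a careful reader will want an explicit line, since it is what forces the strict ordering of $d$'s \emph{and} $k$'s in condition~1 to propagate through the dynamics; everything else is a direct appeal to Proposition~\ref{lem_general}.
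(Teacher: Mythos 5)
Your proposal is correct and takes essentially the same route as the paper: the paper's own proof of Corollary~\ref{coro2} is exactly the reduction to Proposition~\ref{lem_general}, arguing that under either alternative in condition~1 the job with the earliest due time is also the job with the fewest unassigned tasks. The only difference is that the paper asserts this coincidence for the case $d_1\leq\cdots\leq d_n$, $k_1\leq\cdots\leq k_n$ without proof, whereas you supply the explicit contradiction argument (using non-preemption and the monotonicity of $\gamma_{i,P}(\cdot)$), which is a harmless elaboration of the same idea.
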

\begin{proof}
If $k_1=\ldots=k_n=1$, each job has only one task. Hence, the job with the earliest due time is also one job with the fewest unassigned tasks. If $d_1\leq d_2\leq \ldots\leq d_n$, $k_1\leq k_2 \leq \ldots\leq k_n$, and each task starting service is from the job with the earliest due time among all jobs with unassigned tasks, then each task starting service is also from the job with the fewest unassigned tasks among all jobs with unassigned tasks. 
By using 
Proposition \ref{lem_general}, Corollary \ref{coro2} follows. This completes the proof.
\end{proof}

\subsection{Step 3: Coupling Arguments}\label{sec_delay_ineq}

The following coupling lemma plays one important role in the proof of our main results.

\begin{lemma}\label{lem_coupling}
Consider two policies $P,\pi\in \Pi$. If policy $P$ is work-conserving, the task service times are NBU, independent across the servers, and {i.i.d.} across the tasks assigned to the same server, then there exist two state processes $\{\bm{\xi}_{P_1}(t),\bm{\gamma}_{P_1}(t),t\in[0,\infty)\}$ and $\{\bm{\xi}_{\pi_1}(t),\bm{\gamma}_{\pi_1}(t),t\in[0,\infty)\}$ of policy $P_1$ and  policy $\pi_1$, such that 
\begin{itemize}
\itemsep0em 
\item[1.] The state process $\{\bm{\xi}_{P_1}(t),\bm{\gamma}_{P_1}(t),t\in[0,\infty)\}$ of policy $P_1$ has the same distribution as the state process $\{\bm{\xi}_{P}(t),\bm{\gamma}_{P}(t),t\in[0,\infty)\}$ of policy $P$,
\item[2.] The state process $\{\bm{\xi}_{\pi_1}(t),\bm{\gamma}_{\pi_1}(t),t\in[0,\infty)\}$ of policy $\pi_1$ has the same distribution as the state process $\{\bm{\xi}_{\pi}(t),\bm{\gamma}_{\pi}(t),t\in[0,\infty)\}$  of policy $\pi$,
\item[3.] Policy $P_1$ is weakly more work-efficient than policy $\pi_1$ with probability one. 
\end{itemize} 
\end{lemma}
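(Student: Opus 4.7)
The plan is to build $P_1$ and $\pi_1$ jointly on a common probability space by an event-driven inductive coupling that leverages the NBU property to control residual service times. The basic tool is the following consequence of NBU and Strassen's theorem: for any server $l$ and any age $A \geq 0$, the residual life $(X_l - A \mid X_l > A)$ is stochastically dominated by a fresh $X_l$, so there exists a joint law---for instance, the common-quantile coupling $R = F_{l,A}^{-1}(U)$, $Y = F_l^{-1}(U)$ with $U \sim \mathrm{Unif}[0,1]$---in which $R$ follows the residual-life law, $Y \sim X_l$, and $R \leq Y$ almost surely.

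Both processes are simulated in lockstep, processing arrivals, task starts, and task completions in chronological order. For $P_1$, whenever a task begins service on server $l$ I draw its service time freshly and independently from $X_l$; since $P_1$ obeys the same decision rule as $P$, this gives $P_1 =_{\mathrm{st}} P$. For $\pi_1$, whenever a task begins service on server $l$ at time $\tau$, I inspect server $l$ in $P_1$: if server $l$ in $P_1$ is currently busy with a task of age $A$ and total service time $X^P$, I draw the new $\pi_1$-task's service time $Y$ via the coupling above applied to the residual $X^P - A$, so that $X^P - A \leq Y$ a.s.\ and $Y \sim X_l$ marginally; if server $l$ in $P_1$ is idle, I draw $Y$ fresh and independent. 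A key observation is that consecutive $\pi_1$-task starts on server $l$ at times $\tau_1 < \tau_2 < \cdots$ satisfy $\tau_{k+1} \geq \tau_k + Y_k$, while the $P_1$-task coupled at $\tau_k$ has residual at most $Y_k$ and so completes by $\tau_k + Y_k \leq \tau_{k+1}$; distinct $\pi_1$-tasks on server $l$ therefore get coupled with distinct $P_1$-tasks on server $l$, whose service times are mutually independent by construction. It follows that $\pi_1$'s service times on each server are i.i.d.\ $X_l$, so $\pi_1 =_{\mathrm{st}} \pi$.

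Weak work-efficiency then holds on every sample path. Fix a task $j$ in $\pi_1$ that starts on server $l$ at $\tau$ and completes at $\nu = \tau + Y$, and assume the $P_1$-queue is non-empty throughout $[\tau, \nu]$. If server $l$ in $P_1$ is idle just before $\tau$, then by work-conservation the $P_1$-queue must have been empty just before $\tau$ and becomes non-empty at $\tau$ through an arrival; work-conservation of $P_1$ then forces a new $P_1$-task (assigned to some idle server, for example server $l$) to start at $\tau \in [\tau, \nu]$. Otherwise server $l$ in $P_1$ is busy at $\tau$ with residual at most $Y$ by the coupling, and hence completes by $\nu$; since the $P_1$-queue is still non-empty at that completion time, work-conservation causes a new $P_1$-task to begin on server $l$ at that time, which lies in $[\tau, \nu]$.

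The main obstacle I anticipate is the careful bookkeeping needed to prove simultaneously that the marginal law of $\pi_1$ matches $\pi$ and that the residual coupling is consistent at every event. In particular, one must argue that the service times introduced into $\pi_1$ via the Strassen pairings, although measurable with respect to $P_1$'s state, remain independent across $\pi_1$-tasks on each server and independent of $\pi_1$'s past (which itself depends on earlier service times drawn on the common space). This reduces to the pairing observation above together with an induction over event times relative to the joint filtration generated by both policies, using fresh auxiliary uniforms at each pairing; making this measurability/independence argument fully rigorous is the main technical step.
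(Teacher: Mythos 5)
Your proposal is essentially the paper's own argument: NBU gives residual-life $\leq_{\text{st}}$ fresh service time, a Strassen/quantile coupling turns this into an almost-sure ordering between the residual of the task occupying server $l$ in $P_1$ and the service time of the $\pi_1$-task starting at $\tau$, work-conservation of $P_1$ then yields a task start on server $l$ within $[\tau,\nu]$, and the full processes are built by a progressive (event-by-event) construction preserving the marginal laws. The only differences are presentational---you draw $P_1$'s service times eagerly and argue the conditional residual law plus the disjoint-pairing observation, whereas the paper draws each residual--fresh pair jointly at the coupling instant and delegates the bookkeeping you flag to the constructive definition of stochastic ordering and the progressive construction of Theorem 6.B.3 of the stochastic-orders reference---so the approach and level of rigor match.
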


\begin{proof}
See Appendix \ref{app1}.
\end{proof}

Now, we are ready to prove our main results. 

\begin{proof}[Proof of Theorem \ref{thm1}]
 According to lemma \ref{lem_coupling}, for any policy $\pi\in\Pi$, there exist two state processes $\{\bm{\xi}_{\text{FUT}_1}(t),\bm{\gamma}_{\text{FUT}_1}(t),t\in[0,\infty)\}$ and $\{\bm{\xi}_{\pi_1}(t),\bm{\gamma}_{\pi_1}(t),t\in[0,\infty)\}$ of policy FUT$_1$ and  policy $\pi_1$, such that (i) the state process $\{\bm{\xi}_{\text{FUT}_1}(t),\bm{\gamma}_{\text{FUT}_1}(t),t\in[0,\infty)\}$ of policy FUT$_1$  has the same distribution as the state process $\{\bm{\xi}_{\text{FUT}}(t),\bm{\gamma}_{\text{FUT}}(t),t\in[0,\infty)\}$ of policy FUT, (ii) the state process  $\{\bm{\xi}_{\pi_1}(t),\bm{\gamma}_{\pi_1}(t),t\in[0,\infty)\}$ of policy   $\pi_1$ has the same distribution as the state process $\{\bm{\xi}_{\pi}(t),\bm{\gamma}_{\pi}(t),t\in[0,\infty)\}$ of policy $\pi$, and (iii) policy FUT$_1$ is weakly more work-efficient than policy $\pi_1$ with probability one.
By (iii) and Proposition \ref{lem1_1}, for all $f\in\mathcal{D}_{\text{sym}}$
\begin{align}
\Pr[f(\bm{V}(\text{FUT}_1)) \leq f(\bm{C}(\pi_1)) |\mathcal{I}]=1. \nonumber
\end{align}
By (i), $f(\bm{V}(\text{FUT}_1))$ has the same distribution as $f(\bm{V}(\text{FUT}))$. By (ii), $f(\bm{C}(\pi_1))$ has the same distribution as $f(\bm{C}(\pi))$.
Using the property of stochastic ordering \cite[Theorem 1.A.1]{StochasticOrderBook}, we can obtain \eqref{eq_delaygap1}. This completes the proof.
\end{proof}

\begin{proof}[Proof of Theorem \ref{lem7_NBU}]
After Theorem \ref{thm1} is established, Theorem \ref{lem7_NBU} is proven in Appendix \ref{app_lem7}.
\end{proof}

\begin{proof}[Proof of Theorem \ref{thm3}]
 According to lemma \ref{lem_coupling}, for any policy $\pi\in\Pi$, there exist two state processes $\{\bm{\xi}_{\text{EDD}_1}(t),\bm{\gamma}_{\text{EDD}_1}(t),t\in[0,\infty)\}$ and $\{\bm{\xi}_{\pi_1}(t),\bm{\gamma}_{\pi_1}(t),t\in[0,\infty)\}$ of policy EDD$_1$ and  policy $\pi_1$, such that (i) the state process $\{\bm{\xi}_{\text{EDD}_1}(t),\bm{\gamma}_{\text{EDD}_1}(t),t\in[0,\infty)\}$ of policy EDD$_1$  has the same distribution as the state process $\{\bm{\xi}_{\text{EDD}}(t),\bm{\gamma}_{\text{EDD}}(t),t\in[0,\infty)\}$ of policy EDD, (ii) the state process  $\{\bm{\xi}_{\pi_1}(t),\bm{\gamma}_{\pi_1}(t),t\in[0,\infty)\}$ of policy   $\pi_1$ has the same distribution as the state process $\{\bm{\xi}_{\pi}(t),\bm{\gamma}_{\pi}(t),t\in[0,\infty)\}$ of policy $\pi$, and (iii) policy EDD$_1$ is weakly more work-efficient than policy $\pi_1$ with probability one.
By (iii) and Proposition \ref{lem2}, 
\begin{align}
\Pr[L_{\max}(\bm{V}(\text{EDD}_1)) \leq L_{\max}(\bm{C}(\pi_1)) |\mathcal{I}]=1. \nonumber
\end{align}
By (i), $L_{\max}(\bm{V}(\text{EDD}_1))$ has the same distribution as $L_{\max}(\bm{V}(\text{EDD}))$. By (ii), $L_{\max}(\bm{C}(\pi_1))$ has the same distribution as $L_{\max}(\bm{C}(\pi))$.
Using the property of stochastic ordering \cite[Theorem 1.A.1]{StochasticOrderBook}, we can obtain \eqref{eq_delaygap1_thm3}. This completes the proof. 
\end{proof}

\begin{proof}[Proof of Theorem \ref{coro_thm3_1}]
By replacing Proposition \ref{lem2} with Proposition \ref{coro2} in the proof of Theorem \ref{thm3},
Theorem \ref{coro_thm3_1} is proven.
\end{proof}

\section{Conclusions}\label{sec_conclusion}
In this paper, we present a comprehensive study on near delay-optimal scheduling of batch jobs in queueing systems with parallel servers. We prove that for NBU task service time distributions and for arbitrarily given job parameters, the FUT policy, EDD  policy, and the  FCFS  policy are near delay-optimal in stochastic ordering for minimizing average delay, maximum lateness, and maximum delay, respectively, among all causal and non-preemptive policies. If the job parameters satisfy some additional conditions, we can further show that these policies are near delay-optimal in stochastic ordering for minimizing three general classes of delay metrics.

The key tools for showing these results are some novel sample-path conditions for comparing the delay performance of different policies. 
These sample-path conditions do not need to specify the queueing system model and hence can potentially be  applied to obtain near delay-optimal results for other queueing systems. One application of these sample-path conditions is in \cite{Yin_report2016}, which considered the problem of scheduling batch jobs in multi-server queueing systems with replications. 

\section*{Acknowledgements}

The authors are grateful to Ying Lei, Eytan Modiano, R. Srikant, and Junshan Zhang for helpful
discussions.

%
%
%

\ifreport
\appendices
\else
\appendix
\fi
\section{Proof of Lemma  \ref{lem_ratio}}\label{app_lem_ratio}
We will need the following lemma:
\begin{lemma}\label{lem_independent}
Suppose that $X_1, \ldots, X_m$ are non-negative independent random variables, $\chi_1, \ldots, \chi_m$ are arbitrarily given non-negative constants, $R_l = [X_l - \chi_l | X_l > \chi_l]$ for  $l=1,\ldots,m$, then $R_1, \ldots, R_m$  are  mutually independent.
\end{lemma}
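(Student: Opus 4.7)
The plan is to verify mutual independence directly from the definition: show that the joint distribution of $(R_1, \ldots, R_m)$ factors as the product of its marginals. The essential observation is that, for each $l$, both the quantity $X_l - \chi_l$ and the conditioning event $\{X_l > \chi_l\}$ are deterministic functions of $X_l$ alone (since $\chi_l$ is a constant). Hence no coupling between coordinates is introduced by the shift-and-condition operation, and the product structure inherited from the independence of $X_1, \ldots, X_m$ should carry over intact to $(R_1, \ldots, R_m)$.

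Concretely, I would fix arbitrary Borel sets $B_1, \ldots, B_m \subseteq [0, \infty)$ and expand the joint probability $\Pr(R_1 \in B_1, \ldots, R_m \in B_m)$ as the ratio
\[
\frac{\Pr\!\left(\bigcap_{l=1}^m \{X_l - \chi_l \in B_l,\, X_l > \chi_l\}\right)}{\Pr\!\left(\bigcap_{l=1}^m \{X_l > \chi_l\}\right)}.
\]
Then I would apply the independence of $X_1, \ldots, X_m$ to factor both the numerator and the denominator as products over $l$, and recognize each resulting ratio as the marginal $\Pr(R_l \in B_l)$. This gives
\[
\Pr(R_1 \in B_1, \ldots, R_m \in B_m) = \prod_{l=1}^m \Pr(R_l \in B_l),
\]
which is exactly the definition of mutual independence. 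To extend this from product rectangles to all joint Borel events, a standard $\pi$-$\lambda$ argument suffices, but it is arguably not even needed since independence of random variables is conventionally characterized by product-rectangle probabilities.

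Since this is essentially a one-line application of the definitions of independence and conditional probability, I do not anticipate any substantive obstacle. The only small caveat is the implicit requirement $\Pr(X_l > \chi_l) > 0$ for each $l$, so that the conditional random variables $R_l$ are well-defined to begin with; one should also note that neither the NBU hypothesis nor the non-negativity of the $X_l$ plays any role here — the statement is a purely measure-theoretic consequence of independence, and the lemma is used elsewhere only as a convenient decoupling tool.
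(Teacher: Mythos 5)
Your proposal is correct and follows essentially the same route as the paper's proof: write the joint probability of the conditioned variables as a ratio of intersection probabilities, factor numerator and denominator using the independence of the $X_l$, and recognize each factor as the marginal of $R_l$ (the paper does this with tail events $\{R_l > t_l\}$ rather than general Borel sets, which is an immaterial difference). No gap to report.
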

\begin{proof}
For all constants $t_l \geq 0$, $l=1,\dots m$, we have
\begin{align}
&\Pr[R_l >t_l, l=1,\ldots,m] \nonumber\\
=& \Pr[X_l - \chi_l >t_l, l=1,\ldots,m| X_l > \chi_l, l=1,\ldots,m]\nonumber\\
=&\frac{\Pr[X_l   >t_l + \chi_l, l=1,\ldots,m]}{\Pr[X_l > \chi_l, l=1,\ldots,m]}\nonumber\\
=& \frac{\prod_{l=1}^m \Pr[X_l   >t_l + \chi_l]}{\prod_{l=1}^m \Pr[X_l   > \chi_l]}\nonumber\\
=& \prod_{l=1}^m \Pr[X_l - \chi_l >t_l| X_l > \chi_l]\nonumber\\
=&\prod_{l=1}^m \Pr[R_l >t_l].
\end{align}
Hence, $R_1, \ldots, R_m$  are mutually independent.
\end{proof}

\begin{proof}[Proof of lemma \ref{lem_ratio}]
Define a function $g_2(\bm{x})=f(\bm{x}+\bm{a})$.
Because $g_2(\bm{x})=f(\bm{x} + \bm{a})$ is sub-additive in $\bm{x}$, i.e., $g_2(\bm{x} + \bm{y}) \leq g_2(\bm{x} ) + g_2(\bm{y} )$, we can obtain
\begin{align}
& f(\bm{C}(P)) \nonumber\\
=& g_2(\left[\bm{V}(P)-\bm{a}\right]+\left[\bm{C}(P)-\bm{V}(P)\right]) \nonumber\\
\leq &  g_2(\bm{V}(P)-\bm{a})+g_2(\bm{C}(P)-\bm{V}(P))\nonumber\\
=& f(\bm{V}(P))+g_2(\bm{C}(P)-\bm{V}(P)).\label{eq_lem_ratio_3}
\end{align}
At time $V_i(P)$, all tasks of job $i$ have started service. If $k_i> m$, then job $i$ has at most $m$ incomplete tasks at time $V_i(P)$ because there are only $m$ servers. If $k_i\leq m$, then job $i$ has at most $k_i$ incomplete tasks at time $V_i(P)$. Therefore, at most $k_i\wedge m$ tasks of job $i$ are in service at time $V_i(P)$. Let $X_{i,l}$ denote the random service time of the $l$-th remaining task of job $i$. 
We will prove that for all $\mathcal{I}$
\begin{align}\label{eq_order_good}
[\bm{C}(P)-\bm{V}(P) | \mathcal{I}]\leq_{\text{st}}\left[\bigg(\max_{l=1,\ldots,k_1 \wedge m} X_{1,l},\ldots,\max_{l=1,\ldots,k_n \wedge m} X_{n,l}\bigg)\bigg| k_1,\ldots,k_n\right].
\end{align}
Without loss of generality, suppose that at time $V_i(P)$, the remaining tasks of job $i$ are being executed by the set of servers $\mathcal{S}_i\subseteq\{1,\ldots,m\}$, which satisfies $|\mathcal{S}_i|\leq k_i \wedge m$. Hence, the $X_{i,l}$'s are \emph{i.i.d.} In policy $P$, let $\chi_{i,l}$ denote the amount of time that was spent on executing the $l$-th remaining task of job $i$ before time $V_i(P)$, and let $R_{i,l}$ denote the remaining service time to complete this task after time $V_i(P)$. Then, $R_{i,l}$ can be expressed as $R_{i,l} = [X_{i,l} - \chi_{i,l} | X_{i,l} > \chi_{i,l}]$. Suppose that in policy $P$, $V_{j_i}(P)$ associated with job $j_i$ is the $i$-th smallest component of the vector $\bm{V}(P)$, i.e., $V_{j_i}(P)=V_{(i)}(P)$. We prove \eqref{eq_order_good} by using an inductive argument supported by Theorem 6.B.3 of \cite{StochasticOrderBook}, which contains two steps.

\emph{Step 1: We first show that for all $\mathcal{I}$} 
\begin{align}\label{eq_order_good1}
[{C}_{j_1}(P)-V_{j_1}(P)|\mathcal{I}] \leq_{\text{st}}  \left[\max_{{l=1,\ldots,k_{j_1} \wedge m}} X_{j_1,l}\bigg|\mathcal{I}\right].
\end{align}
Because the $X_{i,l}$'s are \emph{i.i.d.} NBU, for all realizations of $j_1$ and $\chi_{j_1,l}$, we have
\begin{align}
[R_{j_1,l}|j_1,\chi_{j_1,l}] \leq_{\text{st}} [X_{j_1,l}| j_1],~\forall~l\in \mathcal{S}_{j_1}.\label{eq_order_good4}
\end{align}
Lemma \ref{lem_independent} tells us that 
the $R_{j_1,l}$'s are conditional independent for any given realization of $\{\chi_{{j_1},l},l\in \mathcal{S}_{j_1}\}$. Hence, for all realizations of $j_1$, $\mathcal{S}_{j_1}$, and $\{\chi_{{j_1},l},l\in \mathcal{S}_{j_1}\}$
\begin{align}
&~~[{C}_{j_1}(P)-V_{j_1}(P)|{j_1}, \mathcal{S}_{j_1},\{\chi_{j_1,l}, l\in \mathcal{S}_{j_1}\}] \nonumber\\
&=  [\max_{l\in \mathcal{S}_{j_1}} R_{j_1,l} |{j_1},\mathcal{S}_{j_1},\{\chi_{j_1,l}, l\in \mathcal{S}_{j_1}\}] \nonumber\\
&\leq_{\text{st}}  [\max_{l\in \mathcal{S}_{j_1}} X_{j_1,l} |{j_1},\mathcal{S}_{j_1}]\label{eq_order_good2}\\
& \leq [\max_{l=1,\ldots,k_{j_1} \wedge m} X_{j_1,l} |{j_1}],\label{eq_order_good3}
\end{align}
where in \eqref{eq_order_good2} we have used \eqref{eq_order_good4} and Theorem 6.B.14 of \cite{StochasticOrderBook}, and in \eqref{eq_order_good3} we have used $|\mathcal{S}_i|\leq k_i \wedge m$. Because ${j_1}$, $\mathcal{S}_{j_1}$, and $\{\chi_{j_1,l}, l\in \mathcal{S}_{j_1}\}$
 are random variables which depend on  the job parameters $\mathcal{I}$, by using Theorem 6.B.16(e) of \cite{StochasticOrderBook},  \eqref{eq_order_good1} is proven.

\emph{Step 2: We show that for all given $\mathcal{I}$ and $\{{C}_{j_i}(P),V_{j_i}(P), i=1,\ldots h\}$} 
\begin{align}\label{eq_order_good6}
[{C}_{j_{h+1}}(P)-V_{j_{h+1}}(P)|\mathcal{I},\{{C}_{j_i}(P),V_{j_i}(P), i=1,\ldots h\}] \leq_{\text{st}}  \left[\max_{{l=1,\ldots,k_{j_{h+1}} \wedge m}} X_{j_{h+1},l}\bigg| \mathcal{I}\right].
\end{align}
Similar with \eqref{eq_order_good3}, for all realizations of $j_{h+1}$, $\mathcal{S}_{j_{h+1}}$, and $\{\chi_{j_{h+1},l}, l\in \mathcal{S}_{j_{h+1}}\}$, we can get
\begin{align}
[{C}_{j_{h+1}}(P)-V_{j_{h+1}}(P)|j_{h+1},\mathcal{S}_{j_{h+1}},\{\chi_{j_{h+1},l}, l\in \mathcal{S}_{j_{h+1}}\}] \leq_{\text{st}} [\max_{{l=1,\ldots,k_{j_{h+1}} \wedge m}} X_{j_{h+1},l}|{j_{h+1}}].\nonumber
\end{align}
Because $j_{h+1}$, $\mathcal{S}_{j_{h+1}}$, and $\{\chi_{j_{h+1},l}, l\in \mathcal{S}_{j_{h+1}}\}$
 are random variables which are determined by $\mathcal{I}$ and $\{{C}_{j_i}(P),V_{j_i}(P), i=1,\ldots h\}$, 
by using Theorem 6.B.16(e) of \cite{StochasticOrderBook}, \eqref{eq_order_good6} is proven.
By \eqref{eq_order_good1}, \eqref{eq_order_good6}, and Theorem 6.B.3 of \cite{StochasticOrderBook}, we have
 \begin{align}
&[\bm{C}(P)-\bm{V}(P) | \mathcal{I}]\nonumber\\
\leq_{\text{st}} &\left[\bigg(\max_{l=1,\ldots,k_1 \wedge m} X_{1,l},\ldots,\max_{l=1,\ldots,k_n \wedge m} X_{n,l}\bigg)\bigg| \mathcal{I}\right] \nonumber\\
=~ & \left[\bigg(\max_{l=1,\ldots,k_1 \wedge m} X_{1,l},\ldots,\max_{l=1,\ldots,k_n \wedge m} X_{n,l}\bigg)\bigg| k_1,\ldots,k_n\right].\nonumber
\end{align}
Hence, \eqref{eq_order_good} holds in policy $P$.

In policy $\pi$, $k_i$ tasks of job $i$ must be accomplished to complete job $i$. Hence, using the above arguments again, yields   
\begin{align}\label{eq_order_good7}
\left[\bigg(\max_{l=1,\ldots,k_1 \wedge m} X_{1,l},\ldots,\max_{l=1,\ldots,k_n \wedge m} X_{n,l}\bigg)\bigg| k_1,\ldots,k_n\right]\leq_{\text{st}} [\bm{C}(\pi) - \bm{a}| \mathcal{I}].
\end{align}
Combining \eqref{eq_order_good1} and \eqref{eq_order_good7}, we have
\begin{align}
[\bm{C}(P)-\bm{V}(P) | \mathcal{I}]\leq_{\text{st}} [\bm{C}(\pi) - \bm{a}| \mathcal{I}].\nonumber\end{align}
Because $g_2$ is increasing, by Theorem 6.B.16(a) of \cite{StochasticOrderBook}
\begin{align}\label{eq_order_good8}
[g_2(\bm{C}(P)-\bm{V}(P))|\mathcal{I}] \leq_{\text{st}}  [g_2(\bm{C}(\pi) - \bm{a})|\mathcal{I}] = [f(\bm{C}(\pi))|\mathcal{I}].
\end{align}
By substituting \eqref{eq_ratio_0} and \eqref{eq_order_good8} into \eqref{eq_lem_ratio_3}, \eqref{eq_ratio_1} is proven. This completes the proof.
\end{proof}

\section{Proof of Proposition~\ref{ordering_2}}\label{app_lem2}
Let $j$ be any integer chosen from $\{1,\ldots,n\}$, and $y_{j}$ be the number of jobs that have arrived by the time  $c_{(j)}(\pi)$, where $y_{j}\geq j$. Because $j$ jobs are completed by the time $c_{(j)}(\pi)$ in policy $\pi$, 
there are exactly $(y_{j}-j)$ incomplete jobs in the system at time $c_{(j)}(\pi)$. By the definition of the system state, we have $\xi_{[i],\pi}(c_{(j)}(\pi))=0$ for $i=y_{j}-j+1,\ldots,n$. Hence,
\begin{align}
\sum_{i=y_{j}-j+1}^n\xi_{[i],\pi}(c_{(j)}(\pi))=0.\nonumber
\end{align}
Combining this with \eqref{eq_ordering_2_1}, yields that policy $P$ satisfies 
\begin{align}\label{eq_proof_1}
\sum_{i=y_{j}-j+1}^n \gamma_{[i],P}\left(c_{(j)}(\pi)\right)\leq 0.
\end{align}
Further, the definition of the system state tells us that $ \gamma_{i,P}\left(t\right)\geq0$ holds for all $i=1,\ldots,n$ and $t\geq0$. Hence, we have
\begin{align}\label{eq_proof_2}
\gamma_{[i],P}\left(c_{(j)}(\pi)\right)=0, ~\forall~i=y_j-j+1,\ldots,n.
\end{align}
Therefore, there are at most $y_j-j$ jobs which have unassigned tasks at time  $c_{(j)}(\pi)$ in policy $P$.
Because  the sequence of job arrival times $a_1,a_2,\ldots,a_n$ are invariant under any policy, $y_j$ jobs have arrived by the time  $c_{(j)}(\pi)$ in policy $P$. Thus, there are at least $j$ jobs which have no unassigned tasks  at the time  $c_{(j)}(\pi)$  in policy $P$, which can be equivalently expressed as
\begin{align}\label{eq_proof_222}
v_{(j)}(P)\!  \leq c_{(j)}(\pi).
\end{align}
Because $j$ is arbitrarily chosen, \eqref{eq_proof_222} holds for all $j=1,\ldots,n$, which is exactly \eqref{eq_ordering_2_2}. In addition, \eqref{eq_ordering_2_3} follows from \eqref{eq_ordering_2_2}, which completes the proof.

\section{Proofs of Proposition~\ref{ordering_3} and Proposition~\ref{ordering_3_1}}\label{app_lem3}
\begin{proof}[Proof of Proposition~\ref{ordering_3}]
Let $w_i$ be the index of the job associated with the job completion time $c_{(i)}(P)$. In order to prove \eqref{eq_ordering_3_2}, it is sufficient to show that for each $j=1,2,\ldots,n$,
\begin{align}\label{eq_Condition_3_12_thm3}
c_{w_j}(P)-d_{w_j} \leq \max_{i=1,2,\ldots,n }[c_{i}(\pi)-d_i].
\end{align}
We prove \eqref{eq_Condition_3_12_thm3} by contradiction. \emph{For this, let us assume that 
\begin{align}\label{eq_proof_9_thm3}
c_{i}(\pi)< c_{w_j}(P) 
\end{align}
holds for all job $i$ satisfying $a_i\leq c_{w_j}(P)$ and $d_i\leq d_{w_j}$.} That is, if job $i$ arrives before time $c_{w_j}(P)$ and its due time is no later than $d_{w_j}$, then  job $i$ is completed before time $c_{w_j}(P)$ in policy $\pi$.
Define  
\begin{align}\label{eq_def}
\tau_j=\max_{i:a_i\leq c_{w_j}(P),d_i\leq d_{w_j}} c_{i}(\pi).
\end{align} 
According to \eqref{eq_proof_9_thm3} and \eqref{eq_def}, we can obtain
\begin{align}\label{eq_proof_19_thm3}
\tau_j<c_{w_j}(P).
\end{align}

On the other hand, \eqref{eq_def} tells us that all job  $i$ satisfying $d_i\leq d_{w_j}$ and $a_i\leq c_{w_j}(P)$ are completed by time $\tau_j$ in policy $\pi$. 
By this,
the system state of policy $\pi$ satisfies  
\begin{align}
\sum_{i:d_i\leq d_{w_j}}\xi_{i,\pi}(\tau_j) =0.\nonumber
\end{align}
Combining this with \eqref{eq_ordering_3_1}, yields
\begin{align}\label{eq_proof_1_thm3}
\sum_{i:d_i\leq d_{w_j}} \xi_{i,P}(\tau_j)\leq 0.
\end{align}
Further, the definition of the system state tells us that $\xi_{i,P}(t)\geq0$ for all $i=1,\ldots,n$ and $t\geq0$. Using this and \eqref{eq_proof_1_thm3}, we get that job $w_j$ satisfies
\begin{align}
\xi_{w_j,P}(\tau_j)= 0.\nonumber
\end{align}
That is, all tasks of job $w_j$ are completed by time $\tau_j$ in policy $P$. Hence, $c_{w_j}(P)\leq \tau_j$, where contradicts with \eqref{eq_proof_19_thm3}. Therefore, there exists at least one job $i$ satisfying the conditions 
$a_i\leq c_{w_j}(P)$, $d_i\leq d_{w_j}$, and $c_{w_j}(P) \leq c_{i}(\pi)$. This can be equivalently expressed as
\begin{align}\label{eq_proof_3_thm3}
c_{w_j}(P) \leq \max_{i:a_i\leq c_{w_j}(P),d_i\leq d_{w_j} }c_{i}(\pi).
\end{align}
Hence, for each $j=1,2,\ldots,n$,
\begin{align}
c_{w_j}(P)-d_{w_j} &\leq \max_{i:a_i\leq c_{w_j}(P),d_i\leq d_{w_j} }c_{i}(\pi)-d_{w_j} \nonumber\\
& \leq \max_{i:a_i\leq c_{w_j}(P),d_i\leq d_{w_j} }[c_{i}(\pi)-d_{i}]\nonumber\\
&\leq \max_{i=1,2,\ldots,n }[c_{i}(\pi)-d_i].\nonumber
\end{align}
This implies \eqref{eq_ordering_3_2}. Hence, Proposition~\ref{ordering_3} is proven.
\end{proof}
The proof of Proposition~\ref{ordering_3_1} is almost identical with that of 
Proposition~\ref{ordering_3}, and hence is not repeated here. The only difference is that $c_{w_j}(P)$ and $\bm{\xi}_{P}(\tau_j)$ in the proof of Proposition~\ref{ordering_3} should be replaced by $v_{w_j}(P)$ and $\bm{\gamma}_{P}(\tau_j)$, respectively.

\section{Proof of Proposition \ref{lem1}} \label{app0}

The following two lemmas are needed to prove Proposition \ref{lem1}:  

\begin{lemma}\label{lem_non_prmp1}
Suppose that under policy $P$, $\{\bm{\xi}_{P}',\bm{\gamma}_{P}'\}$ is obtained by allocating $b_P$ unassigned tasks to the servers in the system whose state is $\{\bm{\xi}_{P},\bm{\gamma}_{P}\}$. Further, suppose that under policy $\pi$, $\{\bm{\xi}_{\pi}',\bm{\gamma}_{\pi}'\}$ is obtained by completing $b_\pi$ tasks in the system whose state is $\{\bm{\xi}_{\pi},\bm{\gamma}_{\pi}\}$.
If $b_P\geq b_\pi$, condition 2 of Proposition \ref{lem1} is satisfied in policy $P$, and
\begin{eqnarray}\label{eq_non_prmp_41}
\sum_{i=j}^n {\gamma}_{[i],P}\leq \sum_{i=j}^n {\xi}_{[i],\pi}, ~\forall~j = 1,2,\ldots,n,\nonumber
\end{eqnarray}
then
\begin{eqnarray}\label{eq_non_prmp_40}
\sum_{i=j}^n {\gamma}_{[i],P}'\leq \sum_{i=j}^n {\xi}_{[i],\pi}', ~\forall~j = 1,2,\ldots,n.
\end{eqnarray}
\end{lemma}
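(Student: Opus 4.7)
The plan is to reduce the batch operation to a sequence of unit operations and argue the invariant is preserved at each step. Because $P$'s $b_P$ allocations act on $\bm{\gamma}_P$ while $\pi$'s $b_\pi$ completions act on $\bm{\xi}_\pi$, and these are independent state processes, I may reorder the unit operations freely. I will interleave them as $b_\pi$ rounds of (one $P$-allocation followed by one $\pi$-completion), and then tack on the remaining $b_P - b_\pi \geq 0$ lone $P$-allocations. The claim will be that every round preserves the weak-supermajorization-type inequality of the hypothesis.

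First I would work out the sorted-vector effect of each kind of unit operation. Let $k_P$ denote the number of positive components of $\bm{\gamma}_P$; condition~2 of Proposition~\ref{lem1} forces the $P$-allocation to decrement the smallest positive entry $\gamma_{[k_P],P}$, from which a direct computation on the sorted representation gives that $\sum_{i=j}^{n} \gamma_{[i],P}$ decreases by exactly $1$ for $j \leq k_P$ and is unchanged for $j > k_P$. Analogously, when $\pi$ decrements some $\xi_{i^{*},\pi}$ lying in a tie group of the sorted $\bm{\xi}_\pi$ ending at position $m$, the tail sum $\sum_{i=j}^{n} \xi_{[i],\pi}$ decreases by $1$ for $j \leq m$ and is unchanged for $j > m$. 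A useful byproduct of the hypothesis evaluated at $j = k_P$ is that $k_P \leq k_\pi$, where $k_\pi$ counts the positive components of $\bm{\xi}_\pi$, since $\sum_{i=k_P}^n \gamma_{[i],P} = \gamma_{[k_P],P} \geq 1$ forces $\sum_{i=k_P}^n \xi_{[i],\pi} \geq 1$.

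With these drop-counts in hand, preservation of the inequality during one paired round reduces to a short case split on $j$. Write $L_j = \sum_{i=j}^n \gamma_{[i],P}$ and $R_j = \sum_{i=j}^n \xi_{[i],\pi}$. For $j \leq \min(k_P, m)$ both $L_j$ and $R_j$ drop by $1$, preserving $L_j \leq R_j$. For $m < j \leq k_P$ only $L_j$ drops, so the inequality is trivially maintained. For $j > \max(k_P, m)$ neither side changes. The delicate case is $k_P < j \leq m$, where only $R_j$ drops while $L_j = 0$ is unchanged; one must separately argue $R_j \geq 1$, which follows from $\xi_{[j],\pi} \geq \xi_{[m],\pi} \geq 1$ because the component $\pi$ actually decremented was positive. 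After the $b_\pi$ paired rounds, each of the residual $b_P - b_\pi$ lone $P$-allocations can only shrink $L_j$ while keeping $R_j$ fixed, so the inequality persists to the final state. The principal obstacle is exactly this third case: it cannot be dispatched by the inductive inequality alone but hinges on positivity of the component $\pi$ actually touched, and it is the reason the argument must track sorted vectors and their tie groups rather than componentwise changes.
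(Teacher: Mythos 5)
Your proposal is correct: the drop-count claims for the sorted tail sums are right (decrementing the smallest positive entry of $\bm{\gamma}_P$ lowers $\sum_{i=j}^n\gamma_{[i],P}$ by one exactly for $j\leq k_P$, and a completion lowers $\sum_{i=j}^n\xi_{[i],\pi}$ by one exactly for $j\leq m$, the end of the tie group), the interleaving is legitimate because the two systems evolve independently so the final pair $\{\bm{\gamma}'_P,\bm{\xi}'_\pi\}$ does not depend on the order of the unit operations, and your delicate case $k_P<j\leq m$ is handled correctly via $L_j=0$ and positivity of the touched component. The paper reaches the same conclusion by a shorter, batch-level argument: for each fixed $j$ it splits on whether the post-allocation tail sum $\sum_{i=j}^n\gamma'_{[i],P}$ is zero (trivial) or positive; in the latter case, since every allocated task comes from a job with the minimum positive number of unassigned tasks, all $b_P$ allocations are drawn from the tail positions $j,\ldots,n$, so the left side drops by exactly $b_P$, while the right side drops by at most $b_\pi\leq b_P$, giving $\sum_{i=j}^n\gamma'_{[i],P}=\sum_{i=j}^n\gamma_{[i],P}-b_P\leq\sum_{i=j}^n\xi_{[i],\pi}-b_\pi\leq\sum_{i=j}^n\xi'_{[i],\pi}$ in one chain. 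In effect, the paper's zero-versus-positive dichotomy dispatches wholesale the case you must treat separately, and avoids the unit-step induction, the interleaving, and the tie-group bookkeeping; your version costs more machinery but makes the sorted-vector mechanics fully explicit, which is a reasonable trade.
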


\begin{proof}
If $\sum_{i=j}^n {\gamma}_{[i],P}'=0$, then the inequality \eqref{eq_non_prmp_40} follows naturally. 
If $\sum_{i=j}^n {\gamma}_{[i],P}'>0$, then there exist  unassigned tasks which have not been assigned to any server. 
In policy $P$, each task allocated to the servers is from the job with the minimum positive ${\gamma}_{i,P}$. Hence,
$\sum_{i=j}^n {\gamma}_{[i],P}'=\sum_{i=j}^n {\gamma}_{[i],P} - b_P \leq \sum_{i=j}^n {\xi}_{[i],\pi} - b_\pi \leq \sum_{i=j}^n {\xi}_{[i],\pi}'$.
\end{proof}

\begin{lemma}\label{lem_non_prmp2}
Suppose that, under policy $P$, $\{\bm{\xi}_{P}',\bm{\gamma}_{P}'\}$ is obtained by adding a job with $b$ tasks to the system whose state is $\{\bm{\xi}_{P},\bm{\gamma}_{P}\}$. Further, suppose that, under policy $\pi$, $\{\bm{\xi}_{\pi}',\bm{\gamma}_{\pi}'\}$ is obtained by adding a job with $b$ tasks to the system whose state is $\{\bm{\xi}_{\pi},\bm{\gamma}_{\pi}\}$.
If
\begin{eqnarray}
\sum_{i=j}^n {\gamma}_{[i],P}\leq \sum_{i=j}^n {\xi}_{[i],\pi}, ~\forall~j = 1,2,\ldots,n,\nonumber
\end{eqnarray}
then
\begin{eqnarray}
\sum_{i=j}^n {\gamma}_{[i],P}'\leq \sum_{i=j}^n {\xi}_{[i],\pi}', ~\forall~j = 1,2,\ldots,n.\nonumber
\end{eqnarray}
\end{lemma}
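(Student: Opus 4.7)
The plan is to reduce the claim to an elementary calculation by exploiting the fact that the newly-arrived job occupies the same ``slot'' in both vectors and that slot was previously zero in each. Let $m$ denote the index of the arriving job, so $\gamma_{m,P} = \xi_{m,\pi} = 0$ before its arrival, and $\gamma_{m,P}' = \xi_{m,\pi}' = b$ after. Let $\tilde{\bm{\gamma}}_P$ and $\tilde{\bm{\xi}}_\pi$ denote the $(n-1)$-dimensional vectors obtained by deleting the $m$-th coordinate from $\bm{\gamma}_P$ and $\bm{\xi}_\pi$ respectively, and write $\tilde{G}_P(k)$, $\tilde{G}_\pi(k)$ for the sum of the $k$ smallest components of these two vectors ($k=0,\ldots,n-1$, with $\tilde{G}(0)=0$).

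Since the deleted $m$-th coordinate is $0$ (hence a minimal entry), the sum of the $k$ smallest entries of the full $n$-vector equals the sum of the $k-1$ smallest entries of the reduced $(n-1)$-vector. Therefore $\sum_{i=j}^n \gamma_{[i],P} = \tilde{G}_P(n-j+1-1) = \tilde{G}_P(n-j)$ for $j \geq 1$ (interpreting $\tilde{G}_P(n-1)$ for $j=1$ as the total sum, since $\gamma_{m,P}=0$). The analogous identity holds for $\pi$. Hence the hypothesis $\sum_{i=j}^n \gamma_{[i],P}\leq \sum_{i=j}^n \xi_{[i],\pi}$ for $j=1,\ldots,n$ is equivalent to
\[
\tilde{G}_P(k)\leq \tilde{G}_\pi(k),\quad k=0,1,\ldots,n-1.
\]

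Next, I would express the sum of the $k$ smallest components of the augmented vector $\bm{\gamma}_P'$ by considering whether the new entry $b$ is among those $k$ smallest or not: either the $k$ smallest all lie in $\tilde{\bm{\gamma}}_P$, contributing $\tilde{G}_P(k)$; or they include $b$ together with the $k-1$ smallest of $\tilde{\bm{\gamma}}_P$, contributing $\tilde{G}_P(k-1)+b$. Thus
\[
\sum_{i=n-k+1}^n \gamma_{[i],P}' \;=\; \min\bigl\{\tilde{G}_P(k),\; \tilde{G}_P(k-1)+b\bigr\}\quad (1\leq k\leq n-1),
\]
and for $k=n$ the sum equals $\tilde{G}_P(n-1)+b$. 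The same formulas hold for $\bm{\xi}_\pi'$. Applying $\tilde{G}_P(k)\leq \tilde{G}_\pi(k)$ termwise (and using monotonicity of $\min$) gives the desired inequality for every $k$, which upon reindexing $j=n-k+1$ yields the conclusion.

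I do not anticipate any serious obstacle: the argument is a routine coordinate accounting once the $(n-1)$-dimensional reduction is made. The only point that requires slight care is the extremal case $j=1$ (equivalently $k=n$), where the new value $b$ must appear in both totals, so the preservation of the inequality reduces directly to the $j=1$ instance of the hypothesis.
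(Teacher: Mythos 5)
Your argument is correct. It proves the same statement as the paper but organizes the accounting differently: the paper fixes the ranks $l$ and $m$ at which the new value $b$ lands among the components of $\bm{\gamma}_P'$ and $\bm{\xi}_\pi'$ and then verifies the partial-sum inequality in four cases according to how $l$ and $m$ compare with $j$; you instead delete the (zero) slot of the arriving job, pass to sums of the $k$ smallest entries of the reduced $(n-1)$-vectors, and use the identity that adjoining a single element $b$ turns the sum of the $k$ smallest into $\min\bigl\{\tilde{G}(k),\,\tilde{G}(k-1)+b\bigr\}$, after which the conclusion is immediate from termwise monotonicity of the minimum (with the boundary case $k=n$ handled separately, as you note). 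The two routes encode the same elementary fact about where $b$ falls in the sorted order; yours buys a case-free, more compact derivation at the cost of the reindexing between largest-component sums and smallest-component sums (which you carry out correctly, using that the deleted coordinate is $0$ and all entries are non-negative), while the paper's four-case computation stays entirely in the notation $\sum_{i=j}^n x_{[i]}$ used elsewhere in the proofs.
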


\begin{proof}
Without loss of generalization, we suppose that after the job arrival, $b$ is the $l$-th largest component of $\bm{\gamma}_{P}'$ and the $m$-th largest component of $\bm{\xi}_{\pi}'$, i.e., $\gamma'_{[l],P} = \xi'_{[m],\pi}= b$. We consider the following four cases:

{Case 1}: $l<j, m<j$. We have $\sum_{i=j}^n {\gamma}_{[i],P}' =\sum_{i=j-1}^n {\gamma}_{[i],P} \leq \sum_{i=j-1}^n {\xi}_{[i],\pi}= \sum_{i=j}^n {\xi}_{[i],\pi}'$.

{Case 2}: $l<j, m\geq j$. We have $\sum_{i=j}^n {\gamma}_{[i],P}' =\sum_{i=j-1}^n {\gamma}_{[i],P} \leq b + \sum_{i=j}^n {\gamma}_{[i],P} \leq b + \sum_{i=j}^n {\xi}_{[i],\pi} = \sum_{i=j}^n {\xi}_{[i],\pi}'$.

{Case 3}: $l\geq j, m<j$. We have $\sum_{i=j}^n {\gamma}_{[i],P}' = b + \sum_{i=j}^n {\gamma}_{[i],P} \leq \sum_{i=j-1}^n {\gamma}_{[i],P} \leq \sum_{i=j-1}^n {\xi}_{[i],\pi} = \sum_{i=j}^n {\xi}_{[i],\pi}'$.

{Case 4}: $l\geq j, m\geq j$. We have $\sum_{i=j}^n {\gamma}_{[i],P}' = b + \sum_{i=j}^n {\gamma}_{[i],P} \leq b + \sum_{i=j}^n {\xi}_{[i],\pi} = \sum_{i=j}^n {\xi}_{[i],\pi}'$.
\end{proof}

We now use Lemma \ref{lem_non_prmp1} and Lemma \ref{lem_non_prmp2} to prove Proposition \ref{lem1}.
\ifreport
\begin{proof}[Proof of Proposition \ref{lem1}]
\else
\begin{proof}[of Proposition \ref{lem1}]
\fi

Assume that no task is completed at the job arrival times $a_i$ for $i=1,\ldots,n$. This does not lose any generality, because if a task is completed at time $t_j=a_i$, Proposition \ref{lem1} can be proven by first proving for the case $t_j=a_i+\epsilon$ and then taking the limit $\epsilon\rightarrow 0$. 
We prove \eqref{eq_ordering_2_1} by induction. 

\emph{Step 1: We will show that \eqref{eq_ordering_2_1} holds during $[0,a_2)$.\footnote{Note that $a_1=0$.}} 

Because $\bm{\xi}_{P}(0^-) = \bm{\gamma}_{P}(0^-) =\bm{\xi}_{\pi}(0^-) = \bm{\gamma}_{\pi}(0^-)=\bm{0}$, \eqref{eq_ordering_2_1} holds at time $0^-$. Job 1 arrives at time $a_1=0$. By Lemma \ref{lem_non_prmp2}, \eqref{eq_ordering_2_1} holds at time $0$. Let $t$ be an arbitrarily chosen time during $(0,a_{2})$. Suppose that  $b_\pi$ tasks start execution and also complete execution during $[0,t]$ in policy $\pi$. We need to consider two cases: 

Case 1: The queue is not empty (there exist unassigned tasks in the queue) during $[0,t]$ in policy $P$. By the weak work-efficiency ordering condition, no fewer than $b_\pi$ tasks start execution during $[0,t]$ in policy $P$. Because \eqref{eq_ordering_2_1} holds at time $0$, by Lemma \ref{lem_non_prmp1}, \eqref{eq_ordering_2_1} also holds at time $t$. 

Case 2: The queue is empty (all tasks in the system are in service) by time $t'\in[0,t]$ in policy $P$. Because $t\in(0,a_{2})$ and there is no task arrival during $(0,a_2)$, there is no task arrival during $(t',t]$. Hence, it must hold that all tasks in the system are in service at time $t$. Then, the system state of policy $P$ satisfies $\sum_{i=j}^n {\gamma}_{[i],P}(t)=0$ for all $j=1,2,\ldots,n$ at time $t$. Hence, \eqref{eq_ordering_2_1} holds at time $t$.

In summary of these two cases, \eqref{eq_ordering_2_1} holds for all $t\in[0,a_2)$.

\emph{Step 2: Assume that for some integer $i\in\{2,\ldots, n\}$, the conditions of Proposition \ref{lem1} imply that \eqref{eq_ordering_2_1} holds for all $t\in[0,a_i)$. We will prove that the conditions of Proposition \ref{lem1} imply that  \eqref{eq_ordering_2_1} holds for all $t\in[0,a_{i+1})$.}

Let $t$ be an arbitrarily chosen time during $(a_i,a_{i+1})$. We modify the task completion times in policy $\pi$ as follows: For each pair of corresponding task $j$ and task $j'$ mentioned in the definition of the weak work-efficiency ordering, if 
\begin{itemize}
\item In policy $\pi$, task $j$ starts execution at time $\tau\in[0,a_i]$ and completes execution at time $\nu\in(a_i,t]$,
\item In policy $P$, the queue is not empty (there exist unassigned tasks in the queue) during $[\tau,\nu]$,
\item In policy $P$, the corresponding task $j'$ starts execution at time $t'\in[0,a_i]$,
\end{itemize}
then the completion time of task $j$ is modified from $\nu$ to $a_i^-$ in policy $\pi$, as illustrated in Fig. \ref{fig_modification}.

This modification satisfies the following three claims:
\begin{itemize}
\item[1.] The system state of policy $\pi$ at time $t$ remains the same before and after this modification;
\item[2.] Policy $P$ is still weakly more work-efficient than policy $\pi$ after this modification;
\item[3.] If $b_\pi$ tasks complete execution during $[a_i,t]$ on the modified sample path of policy $\pi$, and  the queue is not empty (there exist unassigned tasks in the queue) during $[a_i,t]$ in policy $P$, then no fewer than $b_\pi$ tasks start execution during $[a_i,t]$ in policy $P$.
\end{itemize}
We now prove these three claims. Claim 1 follows from the fact that the tasks completed during $[0,t]$ remain the same before and after this modification. It is easy to prove Claim 2 by checking the definition of work-efficiency ordering. For Claim 3, notice that if a task $j$ starts execution and completes execution during $[a_i,t]$ on the modified sample path of policy $\pi$, then by Claim 2, its corresponding task $j'$ must start execution during $[a_i,t]$ in policy $P$. On the other hand, if a task $j$ starts execution during $[0,a_i]$ and completes execution during $[a_i,t]$ on the modified sample path of policy $\pi$, then by the modification, its corresponding task $j'$ must start execution during $[a_i,t]$ in policy $P$. By combining these two cases, Claim 3 follows.

\ifreport
\begin{figure}
\centering 
\includegraphics[width=0.7\textwidth]{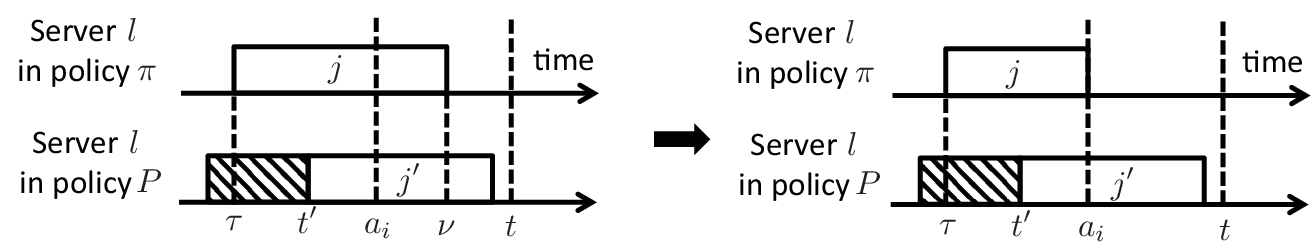} \caption{Illustration of the modification of task completion times in policy $\pi$: If in policy $\pi$, task $j$ starts execution at time $\tau\in[0,a_i]$ and completes execution at time $\nu\in(a_i,t]$, and in policy $P$, task $j'$ starts execution at time $t'\in[0,a_i]$, then the completion time of task $j$ is changed from $\nu$ to $a_i^-$ in policy $\pi$.}
\label{fig_modification} 
\end{figure} 
\fi

We use these three claims to prove the  statement of \emph{Step 2}. According to Claim 2, policy $P$ is weakly more work-efficient than policy $\pi$ after the modification. By the assumption of \emph{Step 2}, \eqref{eq_ordering_2_1} holds during $[0,a_i)$ for the modified sample path of policy $\pi$. Job $j$ arrives at time $a_i$. By Lemma \ref{lem_non_prmp2}, \eqref{eq_ordering_2_1} holds at time $a_i$ for the modified sample path of policy $\pi$. Suppose that $b_\pi$ tasks complete execution during $[a_i,t]$ on the modified sample path of policy $\pi$.
We need to consider two cases: 

Case 1: The queue is not empty (there exist unassigned tasks in the queue) during $[a_i,t]$ in policy $P$. By Claim 3, no fewer than $b_\pi$ tasks start execution during $[a_i,t]$ in policy $P$. Because \eqref{eq_ordering_2_1} holds at time $a_i$, by Lemma \ref{lem_non_prmp1}, \eqref{eq_ordering_2_1} also holds at time $t$ for the modified sample path of policy $\pi$. 

Case 2: The queue is empty (all tasks in the system are in service) at  time $t'\in[a_i,t]$ in policy $P$. Because $t\in(a_i,a_{i+1})$ and $t'\in[a_i,t]$, there is no task arrival during $(t',t]$. Hence, it must hold that all tasks in the system are in service at time $t$. Then, the system state of policy $P$ satisfies $\sum_{i=j}^n {\gamma}_{[i],P}(t)=0$ for all $j=1,2,\ldots,n$  at time $t$. Hence, \eqref{eq_ordering_2_1} holds at time $t$ for the modified sample path of policy $\pi$.

In summary of these two cases, \eqref{eq_ordering_2_1} holds at time $t$ for the modified sample path of policy $\pi$. By Claim 1, the system state of policy $\pi$ at time $t$ remains the same before and after this modification. Hence, \eqref{eq_ordering_2_1} holds at time $t$ for the original sample path of policy $\pi$. Therefore, if  the assumption of \emph{Step 2} is true, then \eqref{eq_ordering_2_1} holds for all $t\in[0,a_{i+1})$. 

By induction, \eqref{eq_ordering_2_1} holds at time $t\in[0,\infty)$. Then, \eqref{eq_ordering_2_2} and \eqref{eq_ordering_2_3} follow from Proposition \ref{ordering_2}.
This completes the proof.\end{proof}


\section{Proof  of Proposition \ref{lem2}} \label{app0_1}



The proof of Proposition \ref{lem2} requires the following two lemmas:

\begin{lemma}\label{lem_non_prmp1_thm3}
Suppose that, in policy $P$, $\{\bm{\xi}_P',\bm{\gamma}_P'\}$ is obtained by allocating $b_P$ unassigned tasks to the servers in the system whose state is $\{\bm{\xi}_P,\bm{\gamma}_P\}$. Further, suppose that, in policy $\pi$, $\{\bm{\xi}_\pi',\bm{\gamma}_\pi'\}$ is obtained by completing $b_\pi$ tasks in the system whose state is $\{\bm{\xi}_\pi,\bm{\gamma}_\pi\}$.
If $b_P\geq b_\pi$, condition 2 of Proposition \ref{lem2} is satisfied in policy $P$, and
\begin{eqnarray}
\sum_{i:d_i\leq\tau} \gamma_{i,P}\leq \sum_{i:d_i\leq\tau} \xi_{i,\pi},~\forall~\tau\in[0,\infty),\nonumber
\end{eqnarray}
then
\begin{eqnarray}\label{eq_non_prmp_40_thm3}
\sum_{i:d_i\leq\tau} \gamma_{i,P}'\leq \sum_{i:d_i\leq\tau} \xi_{i,\pi}',~\forall~\tau\in[0,\infty).\end{eqnarray}
\end{lemma}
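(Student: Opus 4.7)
The plan is to adapt the two-case argument used to establish Lemma \ref{lem_non_prmp1}, with the due-time-indexed subset $S(\tau) = \{i : d_i \leq \tau\}$ playing the role that the ``smallest $n-j+1$ components'' subset played there. Fix an arbitrary $\tau \in [0,\infty)$. The trivial case $\sum_{i\in S(\tau)} \gamma'_{i,P} = 0$ is immediate, since the RHS of \eqref{eq_non_prmp_40_thm3} is nonnegative.

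The substantive case is $\sum_{i\in S(\tau)} \gamma'_{i,P} > 0$, i.e., some job $i_\star \in S(\tau)$ still has unassigned tasks after the $b_P$ sequential allocations in $P$. The key observation I would establish is that throughout the entire batch of $b_P$ allocations, there was always at least one job in $S(\tau)$ with positive $\gamma$: because $\gamma_{i_\star}$ is non-increasing during the allocations and is positive at the end, it was positive throughout. Invoking Condition~2 of Proposition~\ref{lem2}, each allocation selects the job with earliest due date among those with $\gamma>0$, so each selected job has due date $\leq d_{i_\star}\leq \tau$, hence lies in $S(\tau)$. Thus all $b_P$ allocations decrement $\gamma$ mass inside $S(\tau)$, giving $\sum_{i\in S(\tau)} \gamma'_{i,P} = \sum_{i\in S(\tau)} \gamma_{i,P} - b_P$.

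On the $\pi$ side, completing $b_\pi$ tasks (regardless of which jobs they belong to) can reduce $\sum_{i\in S(\tau)} \xi_{i,\pi}$ by at most $b_\pi$, so $\sum_{i\in S(\tau)} \xi'_{i,\pi} \geq \sum_{i\in S(\tau)} \xi_{i,\pi} - b_\pi$. Chaining these with the hypothesized inequality $\sum_{i\in S(\tau)}\gamma_{i,P} \leq \sum_{i\in S(\tau)}\xi_{i,\pi}$ and the assumption $b_P \geq b_\pi$ yields
\[
\sum_{i\in S(\tau)} \gamma'_{i,P}
\;=\; \sum_{i\in S(\tau)} \gamma_{i,P} - b_P
\;\leq\; \sum_{i\in S(\tau)} \xi_{i,\pi} - b_P
\;\leq\; \sum_{i\in S(\tau)} \xi'_{i,\pi} + b_\pi - b_P
\;\leq\; \sum_{i\in S(\tau)} \xi'_{i,\pi},
\]
which is \eqref{eq_non_prmp_40_thm3}. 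Since $\tau$ was arbitrary, the proof is complete.

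The only delicate step is the invariance claim that some job in $S(\tau)$ has positive $\gamma$ throughout the batch; everything else is a short chain of inequalities structurally identical to the FUT-case proof of Lemma \ref{lem_non_prmp1}. I expect this invariance step to be the main (but minor) obstacle, and it hinges on the monotonicity of $\gamma$ during a pure allocation phase together with the existence of the witness job $i_\star$ at the end.
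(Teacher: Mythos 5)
Your proposal is correct and follows essentially the same route as the paper's proof: the trivial zero case, then (via the EDD priority rule and the surviving witness job with positive $\gamma$ in $\{i: d_i\leq\tau\}$) the equality $\sum_{i:d_i\leq\tau}\gamma'_{i,P}=\sum_{i:d_i\leq\tau}\gamma_{i,P}-b_P$, chained with the hypothesis, $b_P\geq b_\pi$, and the fact that completing $b_\pi$ tasks removes at most $b_\pi$ from the restricted sum of $\xi_{i,\pi}$. Your explicit invariance argument merely spells out a detail the paper leaves implicit.
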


\begin{proof}
If $\sum_{i:d_i\leq\tau} \gamma_{i,P}'=0$, then the inequality \eqref{eq_non_prmp_40_thm3} follows naturally. 
If $\sum_{i:d_i\leq\tau} \gamma_{i,P}'>0$, then there exist some unassigned tasks in the queue. 
In policy $P$, each task allocated to the servers is from the job with the earliest due time. Hence,
$\sum_{i:d_i\leq\tau} \gamma_{i,P}'=\sum_{i:d_i\leq\tau} \gamma_{i,P} - b_P \leq \sum_{i:d_i\leq\tau} \xi_{i,\pi} -b_\pi \leq \sum_{i:d_i\leq\tau} \xi_{i,\pi}'$.
\end{proof}

\begin{lemma}\label{lem_non_prmp2_thm3}
Suppose that under policy $P$, $\{\bm{\xi}_P',\bm{\gamma}_P'\}$ is obtained by adding a job with $b$ tasks and due time $d$ to the system whose state is $\{\bm{\xi}_P,\bm{\gamma}_P\}$. Further, suppose that under policy $\pi$, $\{\bm{\xi}_\pi',\bm{\gamma}_\pi'\}$ is obtained by adding a job with $b$ tasks and due time $d$ to the system whose state is $\{\bm{\xi}_\pi,\bm{\gamma}_\pi\}$.
If
\begin{eqnarray}
\sum_{i:d_i\leq\tau} \gamma_{i,P}\leq \sum_{i:d_i\leq\tau} \xi_{i,\pi}, ~\forall~\tau\in[0,\infty),\nonumber
\end{eqnarray}
then
\begin{eqnarray}
\sum_{i:d_i\leq\tau} \gamma_{i,P}'\leq \sum_{i:d_i\leq\tau} \xi_{i,\pi}', ~\forall~\tau\in[0,\infty).\nonumber\end{eqnarray}
\end{lemma}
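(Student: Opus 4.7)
The plan is to prove Lemma \ref{lem_non_prmp2_thm3} by a simple case analysis on the threshold $\tau$ relative to the due time $d$ of the arriving job. The intuition is that the new arrival contributes exactly $b$ to each side of the proposed inequality whenever it is counted at all, and contributes nothing on either side otherwise, so the bound is preserved trivially in both cases.

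Let $j$ denote the index of the newly arrived job, so after the arrival we have $\xi_{j,P}' = \gamma_{j,P}' = \xi_{j,\pi}' = \gamma_{j,\pi}' = b$ and $d_j = d$, while $\xi_{i,P}' = \xi_{i,P}$, $\gamma_{i,P}' = \gamma_{i,P}$, $\xi_{i,\pi}' = \xi_{i,\pi}$, $\gamma_{i,\pi}' = \gamma_{i,\pi}$ for every other job $i \neq j$. Fix any $\tau \in [0,\infty)$. I would split into two cases.

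First, if $\tau < d$, then the newly arrived job does not appear in either summation after the arrival, so both sides are unchanged from their pre-arrival values, and the desired inequality follows directly from the hypothesis. Second, if $\tau \geq d$, then the new job contributes exactly $b$ to the sum on each side, yielding
\begin{align}
\sum_{i: d_i \leq \tau} \gamma_{i,P}' = \sum_{i: d_i \leq \tau} \gamma_{i,P} + b \leq \sum_{i: d_i \leq \tau} \xi_{i,\pi} + b = \sum_{i: d_i \leq \tau} \xi_{i,\pi}',\nonumber
\end{align}
where the middle inequality is again the hypothesis. Since $\tau$ was arbitrary, this completes the proof.

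There is no real obstacle here; unlike Lemma \ref{lem_non_prmp2} (where the ordering is by the rank of $\gamma_{[i]}$ and $\xi_{[i]}$, forcing a four-case analysis on where $b$ inserts into each sorted list), the present ordering is by due time, which is a fixed attribute attached to the job itself. Consequently, the new job lands in the same position on both sides relative to the cutoff $\tau$, and the bookkeeping collapses to the two cases above. The only thing to be slightly careful about is to treat the boundary $\tau = d$ correctly (the new job is included, since the summation is over $d_i \leq \tau$), which the case split $\tau < d$ versus $\tau \geq d$ handles cleanly.
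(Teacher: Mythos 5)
Your proof is correct and follows essentially the same route as the paper's: a two-case split on whether $\tau \geq d$ or $\tau < d$, with the hypothesis applied directly in each case (the paper writes the per-case chains with inequalities rather than your equalities, but the argument is identical in substance).
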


\begin{proof}
If $d\leq\tau$, then
$\sum_{i:d_i\leq\tau} \gamma_{i,P}'\leq \sum_{i:d_i\leq\tau} \gamma_{i,P} + b\leq \sum_{i:d_i\leq\tau} \xi_{i,\pi}+b \leq \sum_{i:d_i\leq\tau} \xi_{i,\pi}'$.

If $d>\tau$, then
$\sum_{i:d_i\leq\tau} \gamma_{i,P}'\leq \sum_{i:d_i\leq\tau} \gamma_{i,P} \leq \sum_{i:d_i\leq\tau} \xi_{i,\pi} \leq \sum_{i:d_i\leq\tau} \xi_{i,\pi}'$.
\end{proof}

The proof of Proposition \ref{lem2} is almost identical with that of Proposition \ref{lem1}, and hence is not repeated here. The only difference is that Lemma \ref{lem_non_prmp1} and Lemma \ref{lem_non_prmp2} in the proof of Proposition \ref{lem1} should be replaced by Lemma \ref{lem_non_prmp1_thm3} and Lemma \ref{lem_non_prmp2_thm3}, respectively.

\section{Proof of Proposition \ref{lem1_1}} \label{app_lem1_1}

We have proven that \eqref{eq_ordering_2_2} holds under the conditions of Proposition \ref{lem1_1}.
Note that \eqref{eq_ordering_2_2} can be equivalently expressed in the following vector form:
\begin{align}
\bm{v}_{\uparrow} (P)\leq \bm{c}_{\uparrow} (\pi).\nonumber
\end{align}
Because any $f\in\mathcal{D}_{\text{sym}}$ is a symmetric and increasing function, we can obtain
\begin{align}
&f(\bm{v} (P))=f(\bm{v}_{\uparrow} (P)) \nonumber\\
\leq_{}&f(\bm{c}_{\uparrow} (\pi))= f(\bm{c} (\pi)).\nonumber
\end{align} 
This completes the proof.


\section{Proof of Proposition \ref{lem_general}}\label{app_coro2}

In the proof of Proposition \ref{lem_general}, we need to use the following rearrangement inequality: 
\begin{lemma} \cite[Theorem 6.F.14]{Marshall2011}\label{lem_rearrangement}
Consider two $n$-dimensional vectors $(x_1,\dots,x_n)$ and $(y_1,\ldots,y_n)$. If $(x_i- x_j)(y_i-y_j)\leq0$ for two indices $i$ and $j$ where $1\leq i<j\leq n$, then 
\begin{align}
&(x_1\!-\!y_1,\ldots,x_j\!-\!y_i,\ldots,x_i\!-\!y_j,\ldots,x_n\!-\!y_n)\nonumber\\
\prec& (x_1\!-\!y_1,\ldots,x_i\!-\!y_i,\ldots,x_j\!-\!y_j,\ldots,x_n\!-\!y_n).\nonumber
\end{align}
\end{lemma}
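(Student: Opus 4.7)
The plan is to reduce the claim to a two-coordinate majorization statement and then verify the corresponding one-line inequality. Write $\bm{w}=(x_1-y_1,\ldots,x_i-y_i,\ldots,x_j-y_j,\ldots,x_n-y_n)$ and $\bm{u}=(x_1-y_1,\ldots,x_j-y_i,\ldots,x_i-y_j,\ldots,x_n-y_n)$ for the right-hand and left-hand vectors, respectively. The two vectors agree in every coordinate outside of $\{i,j\}$, and in those coordinates their sums match: $u_i+u_j=(x_i-y_j)+(x_j-y_i)=(x_i-y_i)+(x_j-y_j)=w_i+w_j$. So the second (equality) condition in the definition of majorization for $\bm{u}\prec\bm{w}$ is automatic, and the remaining partial-sum inequalities reduce to showing that the pair $(u_i,u_j)$ is majorized by $(w_i,w_j)$ in $\mathbb{R}^2$.

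Next, I would compute the gaps in the two coordinates: setting $a=x_i-x_j$ and $b=y_i-y_j$, one has $w_i-w_j=a-b$ and $u_i-u_j=a+b$. The hypothesis $(x_i-x_j)(y_i-y_j)\le 0$ is exactly $ab\le 0$, which gives $(a-b)^2-(a+b)^2=-4ab\ge 0$, i.e. $|u_i-u_j|\le |w_i-w_j|$. Combined with $u_i+u_j=w_i+w_j$, this is the standard criterion for two-coordinate majorization: if two pairs of reals have equal sum and one is closer together (smaller range) than the other, the tighter pair is majorized by the wider one. Explicitly, if $s=w_i+w_j$ and WLOG $w_{\max}=\max\{w_i,w_j\}\ge s/2$, then $\max\{u_i,u_j\}\le s/2+\tfrac12|u_i-u_j|\le s/2+\tfrac12|w_i-w_j|=w_{\max}$, which gives the required partial-sum inequality for $l=1$; the $l=2$ case is the equality of sums already noted.

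Finally, I would lift this two-coordinate majorization to the full $n$-dimensional statement. Since $\bm{u}$ and $\bm{w}$ coincide outside $\{i,j\}$, for every $l$ the partial sum $\sum_{k=1}^l u_{[k]}$ differs from $\sum_{k=1}^l w_{[k]}$ only through the contributions of the two swapped positions, and the two-coordinate majorization ensures that these contributions satisfy $\sum_{k=1}^l u_{[k]}\le \sum_{k=1}^l w_{[k]}$ for all $l$, with equality at $l=n$. This is precisely $\bm{u}\prec\bm{w}$, completing the proof. There is no serious obstacle here; the only delicate point is the bookkeeping that confirms the outside-$\{i,j\}$ coordinates do not disturb the ordered partial sums, which is handled by the standard fact that attaching identical coordinates to two majorization-comparable vectors preserves the majorization order.
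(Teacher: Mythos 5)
Your proof is correct. Note that the paper does not prove this lemma at all---it is quoted verbatim as Theorem 6.F.14 of Marshall and Olkin \cite{Marshall2011} and used as a black box in the proof of Proposition \ref{lem_general}---so there is no in-paper argument to compare against; what you have written is essentially the standard textbook derivation. Your reduction is sound: the two vectors agree outside coordinates $i,j$, the swapped pair has the same sum $u_i+u_j=w_i+w_j$, and with $a=x_i-x_j$, $b=y_i-y_j$ the identity $(a-b)^2-(a+b)^2=-4ab\ge 0$ gives $|u_i-u_j|\le|w_i-w_j|$, which is exactly two-coordinate majorization (equal sum, smaller spread). The only step you gesture at rather than fully write out is the lifting from two coordinates to $n$; this is indeed the standard fact that adjoining identical components to two majorization-comparable vectors preserves majorization, which follows most cleanly from the Hardy--Littlewood--P\'olya characterization ($\bm{u}\prec\bm{w}$ iff $\sum_k\phi(u_k)\le\sum_k\phi(w_k)$ for all convex $\phi$, together with equality of total sums), since the contributions of the unchanged coordinates cancel on both sides. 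If you want the proof fully self-contained you should include that one line; otherwise the argument is complete.
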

\ifreport
\begin{proof}[Proof of Proposition \ref{lem_general}]
\else
\begin{proof}[of Proposition \ref{lem_general}]
\fi

For  $f\in\mathcal{D}_{\text{sym}}$, \eqref{eq_ordering_2_2} and  \eqref{eq_lem_general} follow from Proposition \ref{lem1} and Proposition \ref{lem1_1}.

For  $f\in\mathcal{D}_{\text{Sch-1}}$, \eqref{eq_lem_general} is proven in 3 steps, which are described as follows:



\emph{Step 1: We will show that}
\begin{align}\label{eq_coro2_3}
\bm{v}(P)-\bm{d}\prec_{\text{w}} \bm{c} (\pi)-\bm{d}.
\end{align} 
According to Eq. (1.A.17) and Theorem 5.A.9 of \cite{Marshall2011},
it is sufficient to show that there exists an $n$-dimensional vector $\bm{v}'$ such that 
\begin{align}\label{eq_coro2_6}
\bm{v}(P)-\bm{d}\prec \bm{v}'- \bm{d} \leq \bm{c} (\pi)-\bm{d}.
\end{align}
Vector $\bm{v}'$ is constructed as follows: First, the components of the vector $\bm{v}'$ is a rearrangement (or permutation) of the components of the vector $\bm{v}(P)$, which can be equivalently expressed as
\begin{align}\label{eq_coro2_4}
v_{(i)}'=v_{(i)}(P),~\forall~i=1,\ldots,n.
\end{align}
Second, for each $j=1,\ldots,n$, if the completion time $c_j(\pi)$ of job $j$  is the $i_j$-th smallest component of $\bm{c}(\pi)$, i.e., 
\begin{align}
c_j(\pi)= c_{(i_j)}(\pi),
\end{align}
then $v_j'$ associated with job $j$ is the $i_j$-th smallest component of $\bm{v}'$, i.e., 
\begin{align}\label{eq_coro2_5}
v_j' = v_{(i_j)}'.
\end{align}
Combining \eqref{eq_ordering_2_2} and \eqref{eq_coro2_4}-\eqref{eq_coro2_5}, yields
\begin{align}
v_j' = v_{(i_j)}'=v_{(i_j)}(P) \leq c_{(i_j)}(\pi) = c_j(\pi)\nonumber
\end{align}
for $j=1,\ldots,n$. This implies $\bm{v}' \leq \bm{c} (\pi)$, and hence the second inequality in \eqref{eq_coro2_6} is proven.

The remaining task is to prove the first inequality in \eqref{eq_coro2_6}. 
First, consider the case that the due times $d_1,\ldots, d_n$ of the $n$ jobs are  different from each other. 
The vector  $\bm{v}(P)$ can be obtained from  $\bm{v}'$ by the following procedure: For each $j=1,\ldots,n$, define a set 
\begin{align}
S_j = \{i: a_i\leq v_{j}(P), d_i < d_{j}\}.
\end{align}
If there exist two jobs $i$ and $j$ which satisfy $i\in S_j$ and $v_i' > v_{j}'$, we interchange the components $v_i'$ and $v_{j}'$ in vector $\bm{v}'$. Repeat this interchange operation, until such two jobs $i$ and $j$ satisfying  $i\in S_j$ and $v_i' > v_{j}'$ cannot be found. Therefore, at the end of this procedure, if job $i$ arrives before $v_{j}(P)$ and job $i$ has an earlier due time than job $j$, then $v_i' < v_{j}'$, which is exactly the priority rule of job service satisfied by policy $P$. Therefore, the vector  $\bm{v}(P)$ is obtained at the end of this procedure.
In each interchange operation of this procedure, $(v_i' - v_{j}')(d_i - d_{j})\leq0$ is satisfied before the interchange of $v_i'$ and $v_{j}'$. By Lemma \ref{lem_rearrangement} and the transitivity of the ordering of majorization, we can obtain $\bm{v}(P)-\bm{d}\prec \bm{v}'- \bm{d}$, which is the first inequality in \eqref{eq_coro2_6}. 

Next, consider the case that two jobs $i$ and $j$ have identical due time $d_i = d_j$. Hence, $(v_i' - v_{j}')(d_i - d_{j})=0$. In this case, the service order of job $i$ and job $j$ are indeterminate in policy $P$. Nonetheless, by Lemma \ref{lem_rearrangement}, the service order of job $i$ and job $j$ does not affect the first inequality in \eqref{eq_coro2_6}. Hence, the first inequality in \eqref{eq_coro2_6} holds even when $d_i = d_j$.
 
Finally, \eqref{eq_coro2_3} follows from \eqref{eq_coro2_6}.

\emph{Step 3: We use \eqref{eq_coro2_3} to prove Proposition \ref{lem_general}.}
For any $f\in\mathcal{D}_{\text{Sch-1}}$, $f(\bm{x}+\bm{d})$ is increasing and Schur convex. 
According to Theorem 3.A.8 of \cite{Marshall2011}, for all $f\in\mathcal{D}_{\text{Sch-1}}$, we have
\begin{align}
&f(\bm{v}(P)) \nonumber\\
= &f[(\bm{v}(P)-\bm{d})+\bm{d}] \nonumber\\
\leq& f[(\bm{c}(\pi)-\bm{d})+\bm{d}] \nonumber\\
=& f(\bm{c} (\pi)).\nonumber
\end{align} 
This completes the proof.
\end{proof}

\begin{figure}
\centering
\includegraphics[width=0.45\textwidth]{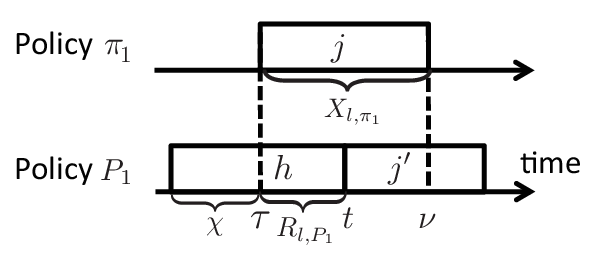} \caption{Illustration of the weak work-efficiency ordering between policy $\pi_1$ and policy $P_1$.}
\label{fig_theorem_proof_1}
\end{figure}
\section{Proof of Lemma \ref{lem_coupling}}\label{app1}
We use coupling to prove Lemma \ref{lem_coupling}: We construct two policies $P_1$ and $\pi_1$ such that policy $P_1$
satisfies the same queueing discipline with policy $P$, and policy $\pi_1$ satisfies the same queueing discipline with policy $\pi$. Hence, policy $P_1$ is work-conserving.
The task and job completion times of policy $P_1$ (policy $\pi_1$) have the same distribution as those of policy $P$ (policy $\pi$). Because the state process is determined by the job parameters $\mathcal{I}$ and the task/job completion events, the state process $\{\bm{\xi}_{P_1}(t),\bm{\gamma}_{P_1}(t),t\in[0,\infty)\}$ of policy $P_1$ has the same distribution as the state process $\{\bm{\xi}_{P}(t),\bm{\gamma}_{P}(t),t\in[0,\infty)\}$ of policy $P$, and the state process $\{\bm{\xi}_{\pi_1}(t),\bm{\gamma}_{\pi_1}(t),t\in[0,\infty)\}$ of policy $\pi_1$ has the same distribution as the state process $\{\bm{\xi}_{\pi}(t),\bm{\gamma}_{\pi}(t),t\in[0,\infty)\}$  of policy $\pi$.
 
Next, we show that policy $P_1$ and policy $\pi_1$ can be constructed such that policy $P_1$ is weakly more work-efficient than policy $\pi_1$ with probability one. Let us consider any task $j$ executed in policy $\pi_1$. Suppose that in policy $\pi_1$ task $j$ starts service in server $l$ at time $\tau$ and completes service at time $\nu$, and the queue is not empty during $[\tau,\nu]$ in policy $P_1$. Because policy $P_1$ is work-conserving, all servers are busy in policy $P_1$ during $[\tau,\nu]$. In particular, server $l$ is busy in policy $P_1$ during $[\tau,\nu]$. Suppose that  in policy $P_1$, server $l$ has spent a time duration $\chi$ ($\chi\geq0$) on executing a task $h$ before time $\tau$. Let $R_{l,P_1}$ denote the remaining service time of server $l$ for executing task $h$ after time $\tau$ in policy $P_1$. Let $X_{l,\pi_1}=\nu-\tau$ denote the service time of task $j$ in policy $\pi_1$ and $X_{l,P_1}=\chi+R_{l,P_1}$ denote the service time of task $h$ in policy $P_1$. 
Then, the complementary CDF of $R_{l,P_1}$  is given by
\begin{align}
\Pr[R_{l,P_1}>s] = \Pr[X_{l,P_1}-\chi>s|X_{l,P_1}> \chi ].\nonumber
\end{align}
Because the task service times are NBU, we can obtain that for all $s,\chi\geq 0$ 
\begin{align}
\Pr[X_{l,P_1}-\chi>s|X_{l,P_1}> \chi ] =\Pr[X_{l,\pi_1}-\chi>s|X_{l,\pi_1}> \chi ] \leq \Pr[X_{l,\pi_1}>s],\nonumber
\end{align}
and hence $R_{l,P_1} \leq_{\text{st}} X_{l,\pi_1}$.
By Theorem 1.A.1 of \cite{StochasticOrderBook} (i.e., a constructive definition of stochastic ordering), the random variables $R_{l,P_1}$ and $X_{l,\pi_1}$ can be constructed such that $R_{l,P_1} \leq X_{l,\pi_1}$ with probability one. That is, in policy $P_1$ server $l$ completes executing task $h$ at time $t=\tau + R_{l,P_1}$, which is earlier than time $\nu = \tau + X_{l,\pi_1}$ with probability one. Because in policy $P_1$ server $l$ is kept busy during $[\tau,\nu]$, a new task, say task $j'$, will start execution on server $l$ at time $t\in [\tau,\nu]$ with probability one. 

In the above coupling arguments, conditioned on every possible realization of policy $P_1$ and policy $\pi_1$ before the service of task $j$ starts, we can construct the service of task $j$ in policy $\pi_1$ and the service of the corresponding task $j'$ in policy $P_1$ such that the requirement of weak work-efficiency ordering is satisfied for this pair of tasks. Next, following the proof of \cite[Theorem 6.B.3]{StochasticOrderBook}, one can continue this procedure to progressively construct the service of all tasks in policy $\pi_1$ and policy $P_1$. By this, we obtain that policy $P$$_1$ is weakly more work-efficient than policy $\pi_1$ with probability one, which completes the proof.

\section{Proof of Theorem \ref{lem7_NBU}}\label{app_lem7}

Consider the time difference ${C}_i(\text{FUT})-V_i(\text{FUT})$.
At time $V_i(\text{FUT})$, all tasks of job $i$ are completed or in service.
if $k_i>m$,  then job $i$ has at most $m$ incomplete tasks that are in service at time $V_i(\text{FUT})$, see Fig. \ref{V_i} for an illustration; if $k_i\leq m$, then job $i$ has at most $k_i$ incomplete tasks that are in service  at time $V_i(\text{FUT})$. Therefore, in policy FUT, no more than $k_i\wedge m = \min\{k_i,m\}$ tasks of job $i$ are completed during the time interval $[V_i(\text{FUT}), {C}_i(\text{FUT})]$. 

Suppose that at time $V_i(\text{FUT})$, the remaining tasks of job $i$ are being executed by the set of servers  $\mathcal{S}_i\subseteq\{1,\ldots,m\}$, which satisfies $|\mathcal{S}_i|\leq k_i \wedge m$. Let $\chi_l$ denote the amount of time that server $l\in \mathcal{S}_i$ has spent on executing a task of job $i$ by time $V_i(\text{FUT})$ in policy FUT. Let $R_{l}$ denote the remaining service time of server $l\in \mathcal{S}_i$ for executing this task after time $V_i(\text{FUT})$. Then, $R_l$ can be expressed as $R_l = [X_l - \chi_l | X_l > \chi_l]$. Because the $X_l$'s are independent NBU random variables with mean $\mathbb{E}[X_l]=1/\mu_l$, for all realizations of $\chi_l$
\begin{align}
[R_l|\chi_l] \leq_{\text{st}} X_l,~\forall~l\in \mathcal{S}_i.\nonumber
\end{align}
In addition, Theorem 3.A.55 of \cite{StochasticOrderBook} tells us that
\begin{align}
X_l \leq_{\text{icx}} Z_l,~\forall~l\in \mathcal{S}_i,\nonumber
\end{align}
where $\leq_{\text{icx}}$ is the increasing convex order defined in \cite[Chapter 4]{StochasticOrderBook} and the $Z_l$'s are independent exponential random variables with mean $\mathbb{E}[Z_l]=\mathbb{E}[X_l] =\mu_l$. Hence, 
\begin{align}
[R_l|\chi_l] \leq_{\text{icx}} Z_l,~\forall~l\in \mathcal{S}_i.\nonumber
\end{align}
Lemma \ref{lem_independent} tells us that 
the $R_l$'s are conditional independent for any given realization of $\{\chi_l,l\in \mathcal{S}_i\}$. Hence, by Corollary 4.A.16 of \cite{StochasticOrderBook}, for all realizations of $\mathcal{S}_i$ and $\{\chi_l,l\in \mathcal{S}_i\}$
\begin{align}\label{eq_exp_order}
\big[\max_{l\in \mathcal{S}_i} R_l \big| \mathcal{S}_i, \{\chi_l, l\in \mathcal{S}_i\}\big] \leq_{\text{icx}} \big[\max_{l\in \mathcal{S}_i} Z_l\big| \mathcal{S}_i\big].
\end{align}
Then,
\begin{align}
&\mathbb{E}[{C}_i(\text{FUT})-V_i(\text{FUT})|\mathcal{S}_i ,\{\chi_l, l\in \mathcal{S}_i\}]\nonumber\\
=& \mathbb{E}\!\left[ \max_{l\in \mathcal{S}_i} R_l \bigg| \mathcal{S}_i ,\{\chi_l, l\in \mathcal{S}_i\}\right]\nonumber\\
\leq&\mathbb{E}\!\left[ \max_{l\in \mathcal{S}_i} Z_l \bigg| \mathcal{S}_i \right]\label{eq_gap_condition3}\\
\leq &\mathbb{E}\!\left[\max_{l=1,\ldots,k_i \wedge m} Z_l\right]\label{eq_gap_condition4}\\
\leq &\sum_{l=1}^{k_i \wedge m} \frac{1}{\sum_{j=1}^l \mu_j},\label{eq_gap_condition5}
\end{align}
where 
\eqref{eq_gap_condition3} is due to
\eqref{eq_exp_order} and Eq. (4.A.1) of \cite{StochasticOrderBook},
\eqref{eq_gap_condition4} is due to $\mu_1\leq \ldots\leq \mu_M$, $|\mathcal{S}_i|\leq k_i \wedge m$, and the fact that $\max_{l=1,\ldots,k_i \wedge m} Z_l$ is independent of $\mathcal{S}_i$, and \eqref{eq_gap_condition5} is due to the property of exponential distributions. Because $\mathcal{S}_{i}$ and $\{\chi_{l}, l\in \mathcal{S}_{i}\}$
 are random variables which are determined by the job parameters $\mathcal{I}$, taking the conditional expectation for given $\mathcal{I}$ in \eqref{eq_gap_condition5}, yields
\begin{align} 
\mathbb{E}\!\left[ {C}_i(\text{FUT})-V_i(\text{FUT}) | \mathcal{I}\right] \leq \sum_{l=1}^{k_i \wedge m} \frac{1}{\sum_{j=1}^l \mu_j}.\nonumber
\end{align} 
By taking the average over all $n$ jobs,  \eqref{eq_gap} is proven.
In addition, it is known that for each $k=1,2,\ldots,$
\begin{align}
\sum_{l=1}^k \frac{1}{l} \leq \ln(k)+1.\nonumber
\end{align}
By this, \eqref{eq_gap_2} holds. This completes the proof.
\bibliographystyle{IEEEtran}
\bibliography{ref}

\end{document}